\DeclareMathOperator{\supp}{\operatorname{supp}}
\DeclareMathOperator{\eps}{\varepsilon}
\DeclareMathOperator{\tr}{\operatorname{tr}}
\DeclareMathOperator{\Span}{\operatorname{Span}}
\newcommand{\la}{\lambda}
\newcommand{\R}{{\mathbb R}}
\newcommand{\C}{{\mathbb C}}
\newcommand{\A}{\mathcal{A}}
\newcommand{\B}{\mathcal{B}}
\newcommand{\D}{\mathcal{D}}
\newcommand{\E}{\mathcal{E}}
\newcommand{\F}{\mathcal{F}}
\newcommand{\cH}{\mathcal{H}}
\newcommand{\cL}{\mathcal{L}}
\newcommand{\cG}{\mathcal{G}}
\newcommand{\proj}[1]{|#1\rangle\!\langle#1|}
\newcommand{\ketbra}[2]{|#1\rangle\!\langle#2|}
\newcommand{\Rg}{\operatorname{Rg}}
\newcommand{\Id}{\mathrm{Id}}
\newtheorem{definition}{Definition}
\newtheorem{theorem}{Theorem}
\newtheorem{prop}{Proposition}
\newtheorem{lemma}[theorem]{Lemma}
\newtheorem{remark}{Remark}
\newtheorem{corollary}{Corollary}
\newtheorem{example}{Example}
\begin{document}

\title{Emulation Capacity between Idempotent Channels}

\author[1]{Idris Delsol}
\author[1]{Omar Fawzi}
\author[2,3]{Li Gao}
\author[1,4,5]{Mizanur Rahaman}

\date{}

\affil[1]{Inria, ENS de Lyon, UCBL, LIP, Lyon, France}
\affil[2]{School of Mathematics and Statistics, Wuhan University, Wuhan, China}
\affil[3]{Wuhan Institute of Quantum Technology, Wuhan, China}
\affil[4] {Department of Mathematical Sciences, Chalmers University of Technology}
\affil[5]{Wallenberg Centre for Quantum Technology, Chalmers University of Technology, Gothenburg, Sweden}

\maketitle
\begin{abstract}
We study the optimal rates of emulation (also called interconversion) between quantum channels. When the source and the target channels are idempotent, we give a single-letter expression for the zero-error emulation capacity in terms of structural properties of the range of the two channels. This expression shows that channel emulation is not reversible for general idempotent channels. 
Furthermore, we establish a strong converse rate that matches with the zero-error emulation capacity 
when the source or the target channel is either an identity or a completely dephasing channel.

\end{abstract}

\section{Introduction}

Given target and source quantum channels $\F$ and $\cG$ (respectively), a central problem in quantum information theory is to understand the optimal interconversion rates $k/n$ such that one can use $n$ copies of $\cG$, combined with an encoder and a decoder channel, to emulate the action of $k$ copies of $\F$. In the special case where $\F$ is a perfect (noiseless) channel, this corresponds to the problem of channel coding for $\cG$, which is central in Shannon theory~\cite{Shannon48}. Symmetrically, in the special case where $\cG$ is the identity channel, this corresponds to the problem of channel simulation for $\F$ which is also well-studied within both classical and quantum Shannon theory~\cite{Bennett14,Bennett02, Berta18, Cao24}. 

In this article, we study the interconversion problem for more general channels. We start by considering the zero-error regime.
We say that a channel $\cG$ can emulate another channel $\F$ with rate $r$, if there exists an encoding and a decoding channel $\E$ and $\D$ (respectively) as well as a pair of integers $(k,n)$ such that
\[
\F^{\otimes k} = \D \cG^{\otimes n}\E,
\]
and $r \leq k/n$. Such a rate $r$ is called \emph{achievable}. The supremum of achievable rates is called the zero-error emulation capacity of $\F$ by $\cG$ and is denoted $C(\cG \mapsto \F)$. We note that, unlike most works in the literature on reverse Shannon theory~\cite{Bennett14}, the model we consider here does not allow shared randomness or entanglement between the encoder and decoder. The first contribution of this work is a single-letter expression of $C(\cG \mapsto \F)$ when both $\F$ and $\cG$ are assumed to be idempotent, i.e. they satisfy $\F \circ \F = \F$ and $\cG \circ \cG = \cG$. Note that for an arbitrary channel $\Phi$, the sequential composition $\Phi^{n} = \Phi \circ \cdots \circ \Phi$ has a subsequence that converges as $n \to \infty$ to an idempotent channel~\cite{Wolf12}. The capacities of sequential composition of channels have been studied in~\cite{Singh.2024,Fawzi24,Singh25}.

\begin{theorem}\label{thm:zero_error_case}
Let $\F = \F \circ \F$ and $\cG = \cG \circ \cG$ be two idempotent channels with $\lambda(\F) \neq (1)$\footnote{If $\la(\F) = (1)$, then $\F$ is basically a replacer channel (see Proposition~\ref{prop:structure_idempotent_channel}) and it is simple to see that we can emulate it with any other idempotent channel $\cG$ with an arbitrarily large rate.}, then
\begin{equation}\label{eq:emul_capa}
C(\cG \mapsto \F) = \inf_{p \in [1,+\infty]} \frac{\log(\|\lambda(\cG)\|_p)}{\log(\|\lambda(\F)\|_p)}.
\end{equation}
In particular, when $\F$ or $\cG$ is either the identity channel or the completely dephasing channel, it suffices to restrict the infimum to $p \in \{1,+\infty\}$:
\begin{equation}
C(\cG \mapsto \F) = \min_{p \in \{1,+\infty\}} \frac{\log(\|\la(\cG)\|_p)}{\log(\|\la(\F)\|_p)}.
\end{equation}
\end{theorem}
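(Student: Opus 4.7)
The plan is to prove the converse and achievability separately, uniformly in $p\in[1,\infty]$, and then handle the reduction to $p\in\{1,\infty\}$ by a short calculus argument in the announced special cases. Before anything else, I invoke the structure theorem for idempotent channels (Proposition~\ref{prop:structure_idempotent_channel}) to represent the range of $\F$ (up to isometry and fixed ancillary states) as a multi-matrix algebra $\bigoplus_i M_{d_i}$ with $\la(\F)=(d_i)_i$, and likewise for $\cG$. Since this structure is stable under tensor products, the $p$-norm of $\la$ is multiplicative, $\|\la(\F^{\otimes k})\|_p=\|\la(\F)\|_p^k$ and $\|\la(\cG^{\otimes n})\|_p=\|\la(\cG)\|_p^n$, and the hypothesis $\la(\F)\neq(1)$ forces $\|\la(\F)\|_p>1$, so an inequality of the form $\|\la(\F)\|_p^k\le\|\la(\cG)\|_p^n$ rearranges into the rate bound $k/n\le\log\|\la(\cG)\|_p/\log\|\la(\F)\|_p$.

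For the converse, I start from $\F^{\otimes k}=\D\cG^{\otimes n}\E$ and use the fact that an idempotent channel acts as the identity on its range. This shows that $\Rg(\F^{\otimes k})$ is a CP retract of $\Rg(\cG^{\otimes n})$: there exist channels $\tilde\E$ and $\tilde\D$ between the two multi-matrix algebras with $\tilde\D\tilde\E$ equal to the identity on $\Rg(\F^{\otimes k})$. The target is then to prove that, for every $p\in[1,\infty]$, passing from a multi-matrix algebra to a CP retract cannot increase $\|\la(\cdot)\|_p$; combined with multiplicativity, this gives the desired upper bound on $k/n$.

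For achievability, for any $p$ and any rational rate $k/n$ strictly below $\log\|\la(\cG)\|_p/\log\|\la(\F)\|_p$, I would construct an encoder $\E$ and a decoder $\D$ directly. The strict $p$-norm inequality $\|\la(\F)\|_p^k<\|\la(\cG)\|_p^n$ should be interpreted as a packing condition on blocks, which I aim to realise by a type-based assignment routing the blocks of $\F^{\otimes k}$ into those of $\cG^{\otimes n}$, trading quantum dimension (the $p=\infty$ direction) against classical multiplicity (the $p=1$ direction). The reduction to $p\in\{1,\infty\}$ in the special cases is then a short calculus check: if $\F=\Id_d$ then $\|\la(\F)\|_p\equiv d$ and $p\mapsto\|\la(\cG)\|_p$ is non-increasing, so the infimum is attained at $p=\infty$; if $\F$ is completely dephasing on $\C^d$ then $\log\|\la(\F)\|_p=\tfrac{1}{p}\log d$ while $p\log\|\la(\cG)\|_p=\log\sum_j e_j^p$ is non-decreasing in $p$ (every $e_j\ge 1$), so the infimum is attained at $p=1$; symmetric analyses cover $\cG$ being one of the two reference channels.

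The step I expect to be hardest is the converse: establishing the monotonicity of $\|\la(\cdot)\|_p$ under CP retraction \emph{simultaneously for every} $p\in[1,\infty]$. The endpoints $p=1$ (dimension of the underlying Hilbert space) and $p=\infty$ (largest matrix block) admit relatively direct arguments from finite-dimensional linear algebra, but the interior R\'enyi-type quantities lack an equally transparent operator-algebraic meaning, and matching them on the achievability side in the zero-error regime — where no smoothing is permitted — is precisely the obstruction that usually prevents single-letter zero-error capacity formulas.
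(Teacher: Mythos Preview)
Your architecture---multiplicativity of $\|\la(\cdot)\|_p$ under tensor powers, achievability plus converse, and the calculus check for the special cases---matches the paper, but the two central steps are left as targets rather than carried out, and the paper's route is not the one you sketch. For the converse, you correctly observe that $\F^{\otimes k}=\D\cG^{\otimes n}\E$ makes $\Rg(\F^{\otimes k})$ a CP retract of $\Rg(\cG^{\otimes n})$, and you flag monotonicity of $\|\la(\cdot)\|_p$ under CP retraction for interior $p$ as the main obstruction. The paper does not attack that monotonicity head-on. Instead it passes to the Heisenberg picture and uses the Kadison--Schwarz inequality together with the multiplicative-domain criterion to show that the compressed map $\widetilde{\cG}^*\D^*$, restricted to the $*$-algebra $\Rg(\widehat{\F}^*)$, is an honest injective $*$-homomorphism into a corner $e_{\cG\E}\,\Rg(\widehat{\cG}^*)\,e_{\cG\E}$ (Proposition~\ref{prop:encoding_implies_morphism}). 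Once the problem is reduced to $*$-algebra embeddings, Kuperberg's forward theorem (Theorem~\ref{thm:converse_kuperberg}) gives $\|\la(\A)\|_p\le\|\la(\B)\|_p$ for every $p$ simultaneously, with no special difficulty at interior values. The missing idea in your outline is precisely this upgrade from ``CP retract'' to ``$*$-homomorphism'' via equality in Kadison--Schwarz; it is what dissolves the obstruction you identify.

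For achievability, your ``type-based assignment routing blocks'' is in spirit Kuperberg's method, but the paper does not rebuild it: it invokes Kuperberg's embedding theorem (Theorem~\ref{thm:kuperberg_arxiv}), which says that strict inequality $\|\la(\A)\|_p<\|\la(\B)\|_p$ for \emph{all} $p$ guarantees, for some $n$, a subunital injective $*$-homomorphism $\A^{\otimes n}\hookrightarrow\B^{\otimes n}$. Lemma~\ref{lemma:lift_homomorphism} then lifts this to a pair of unital completely positive maps with $\widetilde{\iota}^{-1}\circ\widetilde{\iota}=\mathrm{id}$ on $\A$, which yields the factorisation $\F^{*\otimes n}=\widetilde{\iota}^{-1}\,\cG^{*\otimes n}\,\widetilde{\iota}\,\F^{*\otimes n}$ and hence the exact emulation after taking adjoints. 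Two points your sketch obscures: the strict $p$-norm inequality must hold for all $p$ at once (your phrasing ``for any $p$ and any rational rate $k/n$'' suggests a $p$-by-$p$ construction, which does not work), and Kuperberg's theorem only delivers the embedding after tensoring up to some further power $n$---so the achievable rate $k/n$ is realised as $(km)/(nm)$ for a possibly large $m$. Both features are essential and absent from your outline.
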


In the right-hand side of eq.~\eqref{eq:emul_capa}, the two terms $\lambda(\F)$ and $\lambda(\cG)$ are integer vectors which contain information on decomposition properties of $\F$ and $\cG$. They are called the \emph{shape vector} of their respective channel and their precise definition will be given in Section~\ref{sec:idempotent_channels}, Proposition~\ref{prop:structure_idempotent_channel}. For now, it suffices to say that the definition of shape vector is specific to idempotent channel and that given an idempotent channel $\F : \cL(\cH) \rightarrow \cL(\cH)$, $\lambda(\F)$ can be computed in polynomial time in the dimension of $\cH$ (see~\cite{Fawzi24} for the algorithm). Shape vectors were first introduced by Kuperberg in~\cite{kuperberg} to study the problem of emulating \emph{quantum memories} one with another. Also, for $u \in \mathbb{C}^n$ and $p \in [1,+\infty]$, $\|u\|_p$ denotes the $\ell_p$-norm of $u$. 

A remarkable aspect about this result is that it is a single-letter formula, i.e. the expression does not involve a limit. This comes from the fact that the norm of the shape vector is multiplicative under tensor product. Another implication is that the emulation capacity of idempotent channels is additive under tensor products of source channels, i.e. $C(\cG_1 \otimes \cG_2 \mapsto \F) = C(\cG_1 \mapsto \F) + C(\cG_2 \mapsto \F)$.

One interesting consequence of this formula is that emulating $\F$ with $\cG$ is in general not reversible. In fact, as shown in Example~\ref{ex:reversibility}, there exists idempotent channels for which $C(\cG \mapsto \F)  C(\F \mapsto \cG)^{-1} < 1$. This follows from the fact that the infimum in the formulas for $C(\cG \mapsto \F)$ and $C(\F \mapsto \cG)$ can be reached for different values of $p$.

Next, we consider the setting beyond the zero-error regime, where the emulation does not need to be exact: the rate $r$ is achievable with error $\delta \in [0,1]$ if there exists channels $\E,\D$ and $k,n$ with $r \leq k/n$ such that 
\[
\frac{1}{2} \| \F^{\otimes k} - \D \cG^{\otimes n}\E \|_{\diamond} \leq \delta,
\]
where $\| \cdot \|_{\diamond}$ is the diamond norm (the completely bounded norm between trace classes).
We believe that for idempotent channels, the capacity (with vanishing error) should be the same as the zero-error case. Here, we only proved a weaker statement: the capacity cannot be larger than the same formula except that we only take the minimum over $p \in \{1,+\infty\}$. More specifically, we establish a strong converse rate for the emulation capacity.
\begin{theorem}[Strong converse rate]\label{thm:strong_converse}
Let $\F = \F \circ \F$ and $\cG = \cG \circ \cG$ be two idempotent channels with $\lambda(\F) \neq (1)$. Let $(k_\nu)_{\nu \in \mathbb{N}}$ and $(n_\nu)_{\nu \in \mathbb{N}}$ be two integer sequences with $\lim_{\nu \to \infty} n_\nu = + \infty$ such that there exists $\eps > 0$ satisfying:
\begin{equation}\label{eq:over_the_rate_condition}
\lim_{\nu \to \infty} \frac{k_\nu}{n_\nu} \geq \min_{p \in \{1,+\infty\}} \frac{\log(\|\lambda(\cG)\|_p)}{\log(\|\lambda(\F)\|_p)} + \eps.
\end{equation}
Then, for all sequences of encoding and decoding channels $(\E_\nu)_{\nu \in \mathbb{N}}$, $(\D_\nu)_{\nu \in \mathbb{N}}$,

\begin{equation}\label{eq:asymptotical_maximal_error}
\frac{1}{2}\lim_{\nu \to \infty}\|\F^{\otimes k_\nu} - \D_\nu \cG^{\otimes n_\nu}\E_\nu\|_\diamond = 1.
\end{equation}
\end{theorem}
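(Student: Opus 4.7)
I would prove a strong converse separately for each of the two values $p\in\{1,+\infty\}$: the hypothesis~\eqref{eq:over_the_rate_condition} forces $k_\nu/n_\nu$ to exceed at least one of the ratios $\log\|\la(\cG)\|_p/\log\|\la(\F)\|_p$ by $\eps$, so it suffices to prove a strong converse at that $p$. In each case I would construct, on a well-chosen input state (entangled with an auxiliary reference $R$), a measurement operator $M$ such that $\F^{\otimes k_\nu}$ succeeds with probability $1$ whereas the emulation $\Psi_\nu:=\D_\nu\cG^{\otimes n_\nu}\E_\nu$ succeeds with probability vanishing in $\nu$. The resulting gap between the success probabilities lower-bounds $\tfrac12\|\F^{\otimes k_\nu}-\Psi_\nu\|_\diamond$ and forces it to $1$.

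\textbf{The $p=1$ case.} By Proposition~\ref{prop:structure_idempotent_channel} I pick $K := \|\la(\F)\|_1^{k_\nu}$ inputs $\rho_1,\dots,\rho_K$ whose images under $\F^{\otimes k_\nu}$ are pairwise orthogonal, together with a perfect POVM $\{N_j\}$. With the joint input $\tfrac1K\sum_j\rho_j\otimes\proj{j}_R$ and the decision operator $M:=\sum_j N_j\otimes\proj{j}_R$, the $\F^{\otimes k_\nu}$-success is $1$, while the emulation's success probability is $\tfrac1K\sum_j \tr(\D_\nu^*(N_j)\,\cG^{\otimes n_\nu}(\E_\nu(\rho_j)))$. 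Since each state $\cG^{\otimes n_\nu}(\E_\nu(\rho_j))$ lies in $\mathrm{Fix}(\cG^{\otimes n_\nu})$, pulling the effects $\D_\nu^*(N_j)$ through $(\cG^{\otimes n_\nu})^*$ turns them into POVM elements of the matrix $*$-algebra $\mathrm{Fix}((\cG^{\otimes n_\nu})^*)\cong \bigoplus_J \cB(\cH_J^L)$, whose identity has trace $\|\la(\cG)\|_1^{n_\nu}$. The elementary bound $\tr(N'\sigma')\le\tr(N')$ for $\sigma'\le I$ then gives $\sum_j\tr(\D_\nu^*(N_j)\,\cG^{\otimes n_\nu}(\E_\nu(\rho_j)))\le\|\la(\cG)\|_1^{n_\nu}$, so $\tfrac12\|\F^{\otimes k_\nu}-\Psi_\nu\|_\diamond\ge 1-\|\la(\cG)\|_1^{n_\nu}/\|\la(\F)\|_1^{k_\nu}\to 1$ under the $p=1$ rate condition.

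\textbf{The $p=+\infty$ case.} Let $d:=\|\la(\F)\|_\infty^{k_\nu}$ be the dimension of the largest left factor $\cH_0^L$ in the block decomposition of $\F^{\otimes k_\nu}$; on that block $\F^{\otimes k_\nu}$ acts as the identity on $\cH_0^L$ tensored with a fixed state $\omega_0$ on the right factor $\cH_0^R$. I would feed in the maximally entangled state $|\Omega_d\rangle$ between $\cH_0^L$ and $R$ (tensored with any state on $\cH_0^R$) and use the witness $M := |\Omega_d\rangle\langle\Omega_d|\otimes I_{\cH_0^R}$ supported on block $0$; then the $\F^{\otimes k_\nu}$-success is $1$. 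For the emulation, the intermediate state $\tau:=(\cG^{\otimes n_\nu}\E_\nu\otimes\Id_R)(\text{input})$ lies in $\mathrm{Fix}(\cG^{\otimes n_\nu})\otimes\cL(R)$, and the key lemma is that any such state has Schmidt number at most $\|\la(\cG)\|_\infty^{n_\nu}$ across the $\cG^{\otimes n_\nu}$-output/$R$ cut; this is proved by purifying each fixed right-factor state $\omega_{J'}^{R_{J'}}$ by an ancilla disjoint from $R$, so that each block summand of $\tau$ admits a purification of Schmidt rank at most $\dim\cH_{J'}^L\le\|\la(\cG)\|_\infty^{n_\nu}$ across the relevant bipartition. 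Since the subsequent operations (the decoder $\D_\nu$, the projection onto block $0$, and the partial trace over $\cH_0^R$) are local CPTP and do not increase the Schmidt number, the standard estimate $\langle\Omega_d|\rho|\Omega_d\rangle \le\mathrm{SN}(\rho)/d$ bounds the emulation's success probability by $\|\la(\cG)\|_\infty^{n_\nu}/d$, yielding $\tfrac12\|\F^{\otimes k_\nu}-\Psi_\nu\|_\diamond\ge 1-\|\la(\cG)\|_\infty^{n_\nu}/\|\la(\F)\|_\infty^{k_\nu}\to 1$.

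\textbf{Main obstacle.} The Schmidt-number bound in the $p=+\infty$ case is the most delicate ingredient: it uses the precise block structure of the fixed points of an idempotent CPTP map together with a careful purification argument that keeps the ancillas for the right-factor states on the $\cG^{\otimes n_\nu}$-output side of the cut. Both cases rely on the multiplicativity $\|\la((\cdot)^{\otimes n})\|_p = \|\la(\cdot)\|_p^n$ for $p\in\{1,+\infty\}$, which is what makes the resulting bound single-letter. By contrast, the $p=1$ argument itself is essentially classical, reducing to the fact that the range of $\cG^{\otimes n_\nu}$ supports at most $\|\la(\cG)\|_1^{n_\nu}$ perfectly distinguishable states.
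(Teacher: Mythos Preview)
Your overall strategy matches the paper's: prove a one-shot bound
\[
\tfrac12\|\F-\D\cG\E\|_\diamond \;\ge\; 1-\frac{\|\la(\cG)\|_p}{\|\la(\F)\|_p}
\]
separately for $p=1$ and $p=\infty$ by exhibiting an input/effect pair on which $\F$ succeeds with probability $1$ and any $\D\cG\E$ succeeds with probability at most $\|\la(\cG)\|_p/\|\la(\F)\|_p$, then use multiplicativity of $\|\la(\cdot)\|_p$ under tensor powers to push the error to~$1$. Your $p=1$ argument is essentially the paper's: the paper makes the step ``the identity has trace $\|\la(\cG)\|_1^{n_\nu}$'' rigorous by factoring $\cG=\cG_2\circ\cG_1$ through the intermediate space $\bigoplus_J\cH_J^L$ of dimension $\|\la(\cG)\|_1$, which is exactly what your appeal to the abstract algebra $\bigoplus_J\cB(\cH_J^L)$ is encoding (just be careful that the trace you invoke is the one after quotienting out the right-factor multiplicities, not the ambient Hilbert-space trace). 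The $p=\infty$ case is where you genuinely diverge: the paper factors $\cG=\cG_2\circ\mathcal P\circ\cG_1$, writes out Kraus operators $\{D_a\widetilde P'_bE_c\}$, and bounds the entanglement fidelity by Cauchy--Schwarz on traces, whereas you use the entanglement monotone route---bounding the Schmidt number of any state in $(\Rg(\cG^{\otimes n})\otimes\cL(R))$ by $\|\la(\cG)\|_\infty^{n}$ (via the block structure and the fact that the fixed right-factor states $\omega_J$ can be purified on the non-$R$ side), then invoking $\langle\Omega_d|\rho|\Omega_d\rangle\le\mathrm{SN}(\rho)/d$. Your route is more information-theoretic and makes transparent \emph{why} the $\ell_\infty$ norm governs the quantum capacity; the paper's Kraus computation is more self-contained and avoids importing the Schmidt-number machinery. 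Both yield the identical inequality.
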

Note that for any two channels $\Phi$ and $\Psi$, we have that $\|\Phi - \Psi\|_\diamond \leq 2$. Therefore, Theorem~\ref{thm:strong_converse} states that if one tries to emulate $\F$ with $\cG$ with a rate greater than $\log(\|\la(\cG)\|_p)/\log(\|\la(\F)\|_p)$ for $p \in \{1,+\infty\}$, the error in the limit of large block-length is maximal. 

If $\F$ or $\cG$ are identity channels or dephasing channels, then the achievability result of Theorem~\ref{thm:zero_error_case} and the strong converse rate in Theorem~\ref{thm:strong_converse} match. In particular, for the case of $\F = \Id$, we obtain an expression for the quantum capacity of arbitrary idempotent channels together with a strong converse. In the case of self-adjoint idempotent channels (also known as tracial conditional expectation), this was proved in \cite{gao2018capacity} (see also \cite{leditzky2023platypus}). 

We conjecture that Theorem~\ref{thm:strong_converse} can be improved to take an infimum over all $p \in [1,+\infty]$, which would imply that the zero-error achievability in Theorem~\ref{thm:zero_error_case} is tight for all idempotent channels $\F$, $\cG$ even if vanishing error is allowed.

The techniques used in our proofs stem from the theory of operator algebras. Nevertheless, familiarity with this theory is not required
to understand the present article. We will introduce in Section~\ref{sec:preliminaries} the basic theory of operator algebras that we use throughout our work. 

This article is structured as follows. In Section~\ref{sec:preliminaries}, we introduce our notations and basic preliminaries on quantum information and operator algebra theories. In Section~\ref{sec:idempotent_channels}, we discuss the decomposition properties of idempotent channels. In Section~\ref{sec:zero_error_case}, we prove Theorem~\ref{thm:zero_error_case}. We prove Theorem~\ref{thm:strong_converse} in Section~\ref{sec:strong_converse}. Finally, we apply the theory of approximate $C^*$-algebras recently developed by Kitaev~\cite{kitaev25} to strengthen the converse of Theorem~\ref{thm:zero_error_case} in the appendix.

\section{Preliminaries and notations}\label{sec:preliminaries}
All Hilbert spaces in this work are assumed to be finite-dimensional. Given a Hilbert space $\cH$, we write $\cL(\cH)$ the set of linear operators on $\cH$. This set is endowed with the Hilbert-Schmidt inner product, denoted $\langle x, y\rangle:= \tr(x^* y)$ for $x,y \in \cL(\cH)$, with $x^*$ the conjugate transpose of $x$ and $\tr(\cdot)$ the trace on $\cL(\cH)$. Note that $x^*$ is the adjoint of $x$ for the Hermitian product of $\cH$. We denote $\mathds{1} \in \cL(\cH)$ the identity operator on $\cH$. Given $x \in \cL(\cH)$, we call the \emph{(right) support} of $x$, denoted $\supp(x) \subseteq \cH$, the orthogonal complement of its (right) kernel. We write $x^0$ the orthogonal projector on the support of $x$, parallel to its kernel. Furthermore, for $x \in \cL(\cH)$, we write $\|x\|_1$ for its Schatten $1$-norm and $\|x\|$ its Schatten $\infty$-norm, which coincide with its operator norm as a linear operator on $\cH$. Recall that the set of Hermitian operators in $\cL(\cH)$ is a partially ordered set: for $x=x^*,y =y^*\in \cL(\cH)$, we write $x \leq y$ whenever $y-x$ is positive semidefinite. 

A state is represented by a positive semidefinite operator $\rho \in \cL(\cH)$ with trace one, which is called a \emph{density operator}. We will take $\D(\cH) \subseteq \cL(\cH)$ to be the set of density operators belonging to $\cL(\cH)$. A measurement on a physical system $\cL(\cH)$ is represented by a Positive Operator Valued Measure, POVM for short, which is a set of positive operators $\mu_1, \dots, \mu_K \in \cL(\cH)$, called \emph{effects}, which sum to the identity, i.e. $\sum_{k=1}^K \mu_k = \mathds{1}$.

A quantum channel $\F : \cL(\cH) \rightarrow \cL(\cH')$ is a completely positive trace preserving linear map. Each linear map $\F : \cL(\cH) \rightarrow \cL(\cH')$ admits an adjoint $\F^* : \cL(\cH') \rightarrow \cL(\cH)$ for the Hilbert-Schmidt inner product  such that for all $x \in \cL(\cH)$, $y \in \cL(\cH')$, $\langle \F(x), y \rangle = \langle x, \F^*(y)\rangle$. When $\F$ is a channel, its adjoint $\F^*$ is a unital completely positive map. In Section~\ref{sec:strong_converse}, part of the discussion will be focused on the identity channel which we write $\Id_d : \cL(\C^d) \rightarrow \cL(\C^d)$ and on the completely dephasing channel $\Delta_d : \cL(\C^d) \rightarrow \cL(\C^d)$, which maps every $x \in \cL(\C^d)$ onto its diagonal $\operatorname{Diag}(x_{11}, \dots, x_{dd})$ (w.r.t. the standard orthonormal basis). This channel is a `classical identity' as it preserves perfectly the classical information while mapping its input state on a completely decorrelated state, hence a classical one.

We use, in Theorem~\ref{thm:strong_converse}, the diamond norm as a figure of merit for the quality of the emulation. We recall that for $\F : \cL(\cH) \rightarrow \cL(\cH')$, the diamond norm is defined as 
\[
\|\F\|_{\diamond} = \sup_{n \in \mathbb{N}^*} \|\F \otimes \Id_n\|_{1 \to 1},
\]
with $\|\cdot\|_{1 \to 1}$ the `one-to-one' norm, i.e. the map norm when the input and output spaces are endowed with the Schatten 1-norm, denoted $\|\cdot\|_1$. That is, for all linear maps $\F : \cL(\cH) \rightarrow \cL(\cH')$,
\[\|\F\|_{1 \to 1} = \sup_{\|x\|_1 \leq 1} \|\F(x)\|_1.\]

Then, we recall basic properties of the $\ell_p$-norms. For $u \in \C^n$ and $p \in [1,+\infty)$, the $\ell_p$-norm of $u$ is defined as
\begin{equation}\label{eq:Lp_norm}
\|u\|_p = \Bigl(\sum_{k=1}^n |u_k|^p\Bigr)^{\frac{1}{p}},
\end{equation}
and $\|u\|_{\infty} = \sup_k |u_k|$. In the discussions of the next sections, we will frequently make use of the fact that the $\ell_p$-norm are \emph{multiplicative} under the tensor product of vectors. Given $u \in \C^n$, $v \in \C^m$, the tensor product $u \otimes v \in \C^{nm}$ is the vector whose coordinates are the products of the coordinates of $u$ and $v$, i.e. $(u_kv_l)_{k,l}$ for $k \in [n]$, $l \in [m]$. Then, for all $p \in [1,+\infty]$, $\|u\otimes v\|_p = \|u\|_p \|v\|_p$.

Finally, we end this section by introducing the notions of finite-dimensional $*$-algebras and their morphisms. 

In this article, we actually consider only $*$-subalgebras of $\cL(\cH)$. Such a $*$-subalgebra $\A \subseteq \cL(\cH)$ is a linear subspace of $\cL(\cH)$ which is stable under the multiplication and the $*$ operation of $\cL(\cH)$, i.e. taking the transpose conjugate. 
That is, for all $x,y \in \A$, $xy \in \A$ and for all $x \in \A$, $x^* \in \A$. Furthermore, all $*$-algebras in this work are assumed to be \emph{unital}, that is there is an identity $\mathds{1} \in \A$ satisfying $\mathds{1}^* = \mathds{1}$ and, for all $x \in \A$, $\mathds{1}x = x \mathds{1} = x$. Such an identity operator is necessarily unique in each $*$-algebra. 

Given two $*$-algebras $\A$ and $\B$, a $*$-homomorphism $\iota : \A \rightarrow \B$ is a linear map that preserves the product and the $*$ operation in $\A$. That is, for all $x,y \in \A$, $\iota(xy) = \iota(x)\iota(y)$ and $\iota(x^*) = \iota(x)^*$. In many texts, $*$-homomorphisms between unital $*$-algebras are assumed to be unital as well, that is they are assumed to map the identity of $\A$ on the identity of $\B$. Nevertheless, in this article, we will work with more general {\bf subunital} $*$-homomorphisms. Such a map $\iota : \A \rightarrow \B$ maps the identity of $\A$ on $\iota(\mathds{1}_\A)$, which satisfies $\iota(\mathds{1}_\A) \leq \mathds{1}_\B$. Note that $*$-homomorphisms are completely positive~\cite{Davidson93}.

We summarized all the notations we use throughout this article in Table~\ref{table:notations}.

\begin{table}[ht]
\centering 
\begin{tabular}{|c|c|c|c|}
\hline
  $\cH$ & A finite-dimensional Hilbert space &\\
  $\cL(\cH)$ & Set of linear operators on $\cH$ & \\
  $\D(\cH)$ & Set of density operators in $\cL(\cH)$ & \\
  $\A, \B$ & $*$-subalgebras of $\cL(\cH)$ & \\
   $\rho$, $\sigma$ & Density operators &  \\
   $\supp(x)$ & Support of $x$ & \\
  $x^0$ & Orthogonal projector on the support of $x$  & \\
  $x^*$ & Hermitian adjoint of $x$ & \\
  $\|\cdot\|_p$ & $\ell_p$-norm &  Eq.~\eqref{eq:Lp_norm} \\
 $\F^*$ & Adjoint of $\F$ for the Hilbert-Schmidt product & \\
 $\widehat{\F}$ & Reduced channel associated to $\F$ & Prop.~\ref{prop:structure_idempotent_channel}, eq.~\eqref{eq:def_reduced_channel} \\
 $\Rg(\F)$ & Range of $\F$ & \\
 $\lambda(\F)$ & Shape vector of $\F$  &  Prop.~\ref{prop:structure_idempotent_channel} \\ 
 $[n]$ & The set of integers $\{1,\dots, n\}$ &\\
\hline
\end{tabular}
\caption{Notations}\label{table:notations}
\end{table}

\section{Decomposition properties of idempotent channels}\label{sec:idempotent_channels}

In this section, we discuss the decomposition properties of the range of idempotent channels, which are summarized in the following propositions.
The range of a map $f : \Omega \rightarrow \Omega'$, denoted $\Rg(f)$, is the set of all possible images over its domain $\Omega$, i.e. $\Rg(f) = \{f(x)~|~x\in \Omega\}$. The range of an idempotent channel coincides with its set of fixed points and the structure of this space is well understood, see for example the lecture notes~\cite{Wolf12}. The following proposition summarizes the properties that we use.

\begin{prop}[Structure of the range of idempotent channels, see \cite{ Wolf12}]\label{prop:structure_idempotent_channel}
Let $\F : \cL(\cH) \rightarrow \cL(\cH)$ be an idempotent channel, i.e. satisfying $\F = \F \circ \F$. We can write $\cH$ as the direct sum
\begin{equation}\label{eq:first_decompo_H}
    \cH = \cH_0 \oplus \cH_1,
\end{equation}
with $\cH_0 = \supp(\F(\mathds{1}))$ and $\cH_1 = \ker(\F(\mathds{1}))$, and $\cH_0$ can further be decomposed as the orthogonal direct sum
\begin{equation}\label{eq:adapted_decomposition}
    \cH_0 = \bigoplus_{k=1}^K \cH_{k,1} \otimes \cH_{k,2},
\end{equation}
so that there is a set of density operators $\{\rho_k \in \D(\cH_{k,2}) ~|~ k \in [K]\}$, such that
\begin{equation}\label{eq:decompo_range_F}
    \Rg(\F) = \Bigl(\bigoplus_{k=1}^K \cL(\cH_{k,1}) \otimes \rho_k \Bigr) \oplus 0.
\end{equation}

Let $e = \F(\mathds{1})^0 \in \cL(\cH)$ be the orthogonal projector on $\cH_0$, and $V : \cH_0 \rightarrow \cH$ be the isometry embedding $\cH_0$ into $\cH$ such that $VV^* = e$ and $V^*V = \mathds{1}_0$ the identity operator in $\cL(\cH_0)$. The \emph{reduced quantum channel} associated to $\F$, denoted $\widehat{\F} : \cL(\cH_0) \rightarrow \cL(\cH_0)$ is then defined as
\begin{equation}\label{eq:def_reduced_channel}
    \widehat{\F} : x \mapsto V^*\F(VxV^*)V, 
\end{equation}
 which is idempotent and whose range can be written as
\begin{equation}\label{eq:decompo_range_reduced_channel}
    \Rg(\widehat{\F}) = \bigoplus_{k=1}^K \cL(\cH_{k,1})\otimes \rho_k.
\end{equation}

Furthermore, we can express $\widehat{\F}$ and its adjoint $\widehat{\F}^*$ in the following way:
\begin{equation}\label{eq:form_idempotent_channel}
\widehat{\F}(x) = \sum_{k=1}^K \tr_{k,2}(P_k x P_k) \otimes \rho_k,
\end{equation}
and
\begin{equation}\label{eq:decompo_F_star}
\widehat{\F}^*(x) = \sum_{k=1}^K\tr_{k,2}(P_kxP_k(\mathds{1}_{d_k} \otimes \rho_k))\otimes \mathds{1}_{m_k},
\end{equation}
where, for all $k \in [K]$, $P_k$ is the orthogonal projector onto $\cH_{k,1} \otimes \cH_{k,2}$, $d_k = \dim(\cH_{k,1}) \geq 1$, $m_k =  \dim(\cH_{k,2})$ and $\tr_{k,2}: \cL(\cH_{k,1} \otimes \cH_{k,2}) \rightarrow \cL(\cH_{k,1})$ is the partial trace over $\cH_{k,2}$.

Therefore, the range of $\widehat{\F}^*$ is a unital $*$-subalgebra of $\cL(\cH_0)$ and the following decomposition holds:

\begin{equation}\label{eq:range_F_star}
\Rg(\widehat{\F}^*) = \bigoplus_{k=1}^K \cL(\cH_{k,1}) \otimes \mathds{1}_{m_k}.
\end{equation}

We call the \emph{shape} of $\F$ the $K$-dimensional vector whose coordinates are the dimensions $(d_k)_{k \in [K]}$ sorted in non-increasing order, which we write $\lambda(\F)$.
\end{prop}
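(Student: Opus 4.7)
My plan is to reduce to the case where $\F(\mathds{1})$ has full rank and then invoke two classical structural results: Tomiyama's theorem characterising idempotent unital CP maps as conditional expectations onto their range, together with the Artin--Wedderburn theorem for finite-dimensional $*$-subalgebras. First, I set $e = \F(\mathds{1})^0$ and observe that for any positive $x$ with $x \leq c\mathds{1}$, complete positivity of $\F$ gives $0 \leq \F(x) \leq c\F(\mathds{1})$, so $\supp(\F(x)) \subseteq \cH_0$; by Jordan decomposition and linearity this extends to all of $\cL(\cH)$, yielding $\Rg(\F) \subseteq e\cL(\cH)e$ and hence the splitting~\eqref{eq:first_decompo_H}. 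The reduced channel $\widehat{\F}$ in~\eqref{eq:def_reduced_channel} is then readily seen to be an idempotent channel on $\cL(\cH_0)$: idempotency uses $e\,\F(VxV^*)\,e = \F(VxV^*)$ (since the latter sits in $\Rg(\F) \subseteq e\cL(\cH)e$), and trace preservation follows from that of $\F$. Crucially, $\widehat{\F}^*$ is \emph{unital}: for any density $\rho$ on $\cH_0$, one computes $\tr(\rho\,\widehat{\F}^*(\mathds{1}_0)) = \tr(\F(V\rho V^*)\,e) = \tr(\F(V\rho V^*)) = 1$ using $\supp(\F(V\rho V^*)) \subseteq \cH_0$, which forces $\widehat{\F}^*(\mathds{1}_0) = \mathds{1}_0$.

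Next, applying Tomiyama's theorem to the idempotent unital CP map $\widehat{\F}^*$, its range $\A := \Rg(\widehat{\F}^*)$ is a unital $*$-subalgebra of $\cL(\cH_0)$ and $\widehat{\F}^*$ is a conditional expectation onto $\A$. The Artin--Wedderburn structure theorem for finite-dimensional $*$-subalgebras then produces the orthogonal decomposition $\cH_0 = \bigoplus_{k=1}^K \cH_{k,1} \otimes \cH_{k,2}$ in~\eqref{eq:adapted_decomposition} such that $\A = \bigoplus_k \cL(\cH_{k,1}) \otimes \mathds{1}_{m_k}$, establishing~\eqref{eq:range_F_star}.

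Finally, I use the $\A$-bimodule property of the conditional expectation $\widehat{\F}^*$ to pin down its explicit form: bimodularity applied with $a,b$ ranging over the block projectors $P_k \in \A$ shows that $\widehat{\F}^*$ kills off-diagonal blocks, and on each diagonal block the only CP conditional expectations onto $\cL(\cH_{k,1}) \otimes \mathds{1}_{m_k}$ are of the form $y \mapsto \tr_{k,2}(y(\mathds{1}\otimes\rho_k))\otimes\mathds{1}_{m_k}$ for some density $\rho_k \in \D(\cH_{k,2})$, yielding~\eqref{eq:decompo_F_star}. Taking the Hilbert--Schmidt adjoint produces formula~\eqref{eq:form_idempotent_channel} and the range description~\eqref{eq:decompo_range_reduced_channel}, and conjugating by $V$ lifts the picture back to $\cL(\cH)$ to give~\eqref{eq:decompo_range_F}; the shape vector is then read off as the non-increasing rearrangement of the $d_k = \dim(\cH_{k,1})$. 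The main conceptual input is Tomiyama's theorem combined with Artin--Wedderburn; no serious obstacle remains, only careful bookkeeping and dualisation between $\widehat{\F}$ and $\widehat{\F}^*$.
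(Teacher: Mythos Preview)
Your outline follows the same architecture as the paper's proof---pass to the reduced channel $\widehat{\F}$, show its adjoint $\widehat{\F}^*$ is a unital CP idempotent, identify its range as a $*$-subalgebra, decompose via the structure theorem for finite-dimensional $*$-algebras, and read off the explicit form by the bimodule property. There is, however, a genuine gap at the step where you invoke Tomiyama's theorem. Tomiyama's theorem, in its classical formulation, \emph{assumes} the range is already a $C^*$-subalgebra and then concludes the bimodule property; it does not by itself show that the range of a unital CP idempotent is multiplicatively closed. In fact this fails in general: on $M_3$, the map $E(Y)=\operatorname{diag}\bigl(Y_{00},\tfrac{Y_{00}+Y_{22}}{2},Y_{22}\bigr)$ is a unital CP idempotent whose range $\operatorname{span}\{\mathds{1},\operatorname{diag}(0,1,2)\}$ is not closed under multiplication. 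So your sentence ``applying Tomiyama's theorem \dots\ its range $\A$ is a unital $*$-subalgebra'' is not justified as written.

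What is missing is precisely the ingredient the paper supplies: $\widehat{\F}$ admits a \emph{full-rank} fixed point, namely $V^*\F(\mathds{1})V$ (full rank on $\cH_0$ by construction, and fixed since $\widehat{\F}(V^*\F(\mathds{1})V)=V^*\F(e\F(\mathds{1})e)V=V^*\F\F(\mathds{1})V=V^*\F(\mathds{1})V$). Once you have a faithful invariant state $\omega$ for $\widehat{\F}$, the standard argument applies: for $a\in\Rg(\widehat{\F}^*)$ Kadison--Schwarz gives $\widehat{\F}^*(a^*a)\geq a^*a$, and $\omega(\widehat{\F}^*(a^*a)-a^*a)=0$ with $\omega$ faithful forces equality, hence $a^*a\in\Rg(\widehat{\F}^*)$. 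The paper packages this as Theorem~6.12 of~\cite{Wolf12}. After inserting this step your argument goes through, and the remaining use of the bimodule property to extract the form~\eqref{eq:decompo_F_star} is exactly what the paper does via Proposition~1.5 of~\cite{Wolf12}.
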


Note that $\F$ and its reduced channel $\widehat{\F}$ both have the same shape vector, i.e. $\la(\F) = \la(\widehat{\F})$.

As an example of a shape vector, consider the qubit channel $\F : \cL(\C^2) \rightarrow \cL(\C^2)$ defined as
\[
\F : x \mapsto \tr(x)\proj{0} = \begin{pmatrix}
\tr(x) & 0 \\
0 & 0
\end{pmatrix},
\]
where the matrix expression is written in the basis $(\ket{0}, \ket{1})$. We have $$\F(\mathds{1}) = \begin{pmatrix} 2 & 0\\ 0 & 0
\end{pmatrix}.$$ Then, in the notations of eq.~\eqref{eq:adapted_decomposition}, we get $\cH_0 = \supp(\F(\mathds{1})) = \C\ket{0}$, $\cH_1 = \ker(\F(\mathds{1})) = \C\ket{1}$, we can write $\Rg(\F) = \C \otimes \proj{0} \oplus 0$ and thus $\la(\F) = (1)$. To construct the reduced version of $\F$ we can take the isometry $V = \begin{pmatrix} 1 \\ 0\end{pmatrix}: \cH_0  \rightarrow \cH$, where the expression of $V$ is written in the basis $(\ket{0}, \ket{1})$. Note that we clearly have $$VV^* = \begin{pmatrix}
    1 & 0 \\
    0 & 0
\end{pmatrix} = \F(\mathds{1})^0.$$
Then, $\widehat{\F} : \cL(\C) \rightarrow \cL(\C)$ is defined by the equation
$$\widehat{\F}(x) = V^*\F(VxV^*)V = x,$$
so that $\widehat{\F}$ is the identity on $\cL(\C)$.

Furthermore, we remark that we have $\la(\F) = (1)$ if and only if its reduced channel is a replacer channel, i.e. $\widehat{\F} : x \mapsto \tr(x)\rho$ with $\rho$ a fixed state. This fact follows easily from eq.~\eqref{eq:form_idempotent_channel} taking $K=1$.

Before we present the proof of Proposition~\ref{prop:structure_idempotent_channel}, we need to state the two following lemmas which will be used in the proof.

\begin{lemma}\label{lemma:stability_support_one}
    Let $\Phi : \cL(\cH) \rightarrow \cL(\cH)$ be a positive linear map\footnote{Here 'positive' means that $x \geq 0$ implies $\Phi(x) \geq 0$.}, let $\cH_0 = \supp(\Phi(\mathds{1}))$ and let $V : \cH_0 \rightarrow \cH$ be an isometry embedding $\cH_0$ into $\cH$ so that $VV^* \in \cL(\cH)$ is the orthogonal projection on $\cH_0$. Then for all $x \in \cL(\cH)$, $$VV^*\Phi(x)VV^* = \Phi(x).$$
\end{lemma}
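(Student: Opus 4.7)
The plan is to first reduce to the case where $x$ is positive semidefinite and then exploit the standard fact that for a positive map $\Phi$, the support of $\Phi(x)$ is contained in the support of $\Phi(\mathds{1})$ whenever $x \ge 0$.

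First, suppose $x \ge 0$. Then $x \le \|x\| \mathds{1}$, and since $\Phi$ is positive linear, $0 \le \Phi(x) \le \|x\| \Phi(\mathds{1})$. Using the standard monotonicity property that $0 \le A \le B$ implies $\supp(A) \subseteq \supp(B)$, we get $\supp(\Phi(x)) \subseteq \supp(\Phi(\mathds{1})) = \cH_0$. Since $VV^*$ is the orthogonal projector onto $\cH_0$, this gives $VV^* \Phi(x) = \Phi(x)$. The operator $\Phi(x)$ is self-adjoint (as the image of a self-adjoint operator under a positive linear map), so taking adjoints yields $\Phi(x) VV^* = \Phi(x)$ as well, and combining these two identities gives $VV^*\Phi(x)VV^* = \Phi(x)$.

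To extend to arbitrary $x \in \cL(\cH)$, I would use the standard decomposition $x = (x_1 - x_2) + i(x_3 - x_4)$ where each $x_j \ge 0$ (for instance, write $x = h + ik$ with $h,k$ Hermitian and then split each Hermitian part into its positive and negative parts via the spectral decomposition). By linearity of $\Phi$ and of the sandwich map $y \mapsto VV^* y VV^*$, the identity $VV^* \Phi(x_j) VV^* = \Phi(x_j)$ for each $j$ implies $VV^* \Phi(x) VV^* = \Phi(x)$ for the general $x$.

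There is no significant obstacle here; the only thing to be careful about is justifying $\supp(\Phi(x)) \subseteq \supp(\Phi(\mathds{1}))$ for positive $x$, which is an elementary consequence of positivity and the operator inequality $\Phi(x) \le \|x\|\Phi(\mathds{1})$.
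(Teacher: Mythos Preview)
Your proof is correct and follows essentially the same approach as the paper: reduce to positive $x$ via the inequality $0\le\Phi(x)\le\|x\|\Phi(\mathds{1})$ to obtain the support inclusion, then extend by linearity. The only cosmetic difference is that the paper writes a general $x$ as a combination of four positives via the polarisation identity $x=\tfrac14\sum_{k=0}^3 i^k(x+i^k\mathds{1})^*(x+i^k\mathds{1})$, whereas you use the real/imaginary Hermitian split followed by positive/negative parts; both serve the same purpose.
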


\begin{proof}
Let $x \in \cL(\cH)$ be positive semidefinite, then $$0 \leq x \leq \|x\|\mathds{1},$$ so that by positivity of $\Phi$, we have $$0 \leq \Phi(x) \leq \|x\|\Phi(\mathds{1}).$$ Hence, $\Phi(x)$ is supported on the support of $\Phi(\mathds{1})$ which is, by definition, $\cH_0$. Furthermore, as $x$ and $\mathds{1}$ are positive semidefinite, $\Phi(x)$ and $\Phi(\mathds{1})$ are positive semidefinite as well so that their respective range (as linear maps on $\cH$) coincides with their support. Hence, we have 
    $$VV^*\Phi(x)VV^* = \Phi(x).$$ Now, if $x \in \cL(\cH)$ is arbitrary, we can use the polarisation identity (see eq.~(1.45) in~\cite{Wolf12}) to write $$x = \frac{1}{4}\sum_{k=0}^3 i^k(x + i^k \mathds{1})^*(x + i^k \mathds{1}),$$ so that $x$ is a linear combination of four positive semidefinite elements of $\cL(\cH)$. Now, by linearity, we get 
    \[
    VV^*\Phi(x)VV^* = \Phi(x).
    \]
\end{proof}

Furthermore, we remark that the set of fixed points and the range of an idempotent function coincide.

\begin{lemma}\label{lemma:fixed_points}
    Let $f : \Omega \rightarrow \Omega$ be an arbitrary idempotent function, with $\Omega$ an arbitrary set, then
    \begin{equation}\label{eq:fixed_points_vs_image}
    \Rg(f) = \{x \in \Omega~|~f(x) = x\}.
    \end{equation}
\end{lemma}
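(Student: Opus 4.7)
The plan is to prove the two inclusions separately; both are immediate from the definitions, so this is essentially a one-line argument in each direction.

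For the inclusion $\Rg(f) \subseteq \{x \in \Omega \mid f(x) = x\}$, I would pick an arbitrary $y \in \Rg(f)$ and write $y = f(x)$ for some $x \in \Omega$. Applying $f$ and using idempotence yields $f(y) = f(f(x)) = f(x) = y$, so $y$ is a fixed point. For the reverse inclusion, any $x$ satisfying $f(x) = x$ is trivially in the range of $f$ since $x$ is itself the image of $x$ under $f$.

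There is no real obstacle: the lemma holds in total generality for sets, with no structure on $\Omega$ used, and the only ingredient is the defining equation $f \circ f = f$ combined with the definition $\Rg(f) = \{f(x) \mid x \in \Omega\}$. The entire proof fits in two short implications, which is why I would simply write them out and conclude by double inclusion.
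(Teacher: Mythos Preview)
Your proposal is correct and is essentially identical to the paper's own proof: both argue the two inclusions directly, using $f(y)=f(f(x))=f(x)=y$ for one direction and $x=f(x)\in\Rg(f)$ for the other.
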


\begin{proof}
    If $x$ is a fixed point, $x = f(x) \in \Rg(f)$. For the other inclusion, if $y = f(x)$, we also have $f(y) = f(f(x)) = f(x) = y$, so $y$ is a fixed point.  
\end{proof}

Proposition~\ref{prop:structure_idempotent_channel} then basically follows from statements in Section 6.4~\cite{Wolf12} and we detail it here only for the convenience of the reader. Similar derivation can be found in~\cite{Amato.2025} in the Heisenberg picture.
\begin{proof}[Proof of Proposition~\ref{prop:structure_idempotent_channel}]

We will first prove that the reduced channel $\widehat{\F}$, defined in eq.~\eqref{eq:def_reduced_channel}, is actually a channel. It is clearly completely positive as it is a composition of completely positive maps and it is trace preserving because for all $x \in \cL(\cH_0)$,
\[
\tr(\widehat{\F}(x)) = \tr(V^*\F(VxV^*)V) = \tr(VV^*\F(VxV^*)VV^*) = \tr(\F(VxV^*)) = \tr(VxV^*) = \tr(x),
\]
where the third equality follows from Lemma~\ref{lemma:stability_support_one}.
Furthermore, $\widehat{\F}$ is idempotent as, for $x \in \cL(\cH_0)$,
\begin{align*}
    \widehat{\F}(\widehat{\F}(x)) &=V^*\F(VV^*\F(VxV^*)VV^*)V \\
    &= V^*\F(\F(VxV^*))V\\
    &= V^*\F(VxV^*)V\\
    &= \widehat{\F}(x).
\end{align*}
 The second equation follows from Lemma~\ref{lemma:stability_support_one}, and the third from the idempotence of $\F$.

 Furthermore, $\widehat{\F}$ admits a full-rank fixed point. Indeed, $\widehat{\F}(V^*\F(\mathds{1})V)$ is a fixed point of $\widehat{\F}$, since $\widehat{\F}$ is idempotent, and we have
\[
\widehat{\F}(V^*\F(\mathds{1})V) = V^*\F(VV^*\F(\mathds{1})VV^*)V = V^*\F\F(\mathds{1})V = V^*\F(\mathds{1})V,
\]
which is full-rank in $\cH_0 = \supp(\F(\mathds{1}))$, again using Lemma~\ref{lemma:stability_support_one} for the second equality.

Therefore, we can apply Theorem 6.12 in~\cite{Wolf12} to the adjoint $\widehat{\F}^*$ of $\widehat{\F}$. Hence, the set of fixed points of $\widehat{\F}^*$ is a $*$-subalgebra of $\cL(\cH_0)$. By Lemma~\ref{lemma:fixed_points}, since $\widehat{\F}^*$ is idempotent, its set of fixed points coincides with its range, so that its range is a $*$-subalgebra of $\cL(\cH_0)$. It is well known (see e.g.~\cite{takesaki03}), that if $\A \subseteq \cL(\cH_0)$ is a $*$-subalgebra of $\cL(\cH_0)$, we can find a basis of $\cH_0$ in which it can be expressed as an orthogonal direct sum as in eq.~\eqref{eq:range_F_star}, which proves this equation. 
Furthermore, we get by the Proposition 1.5 of~\cite{Wolf12} that $\widehat{\F}^*$ takes the form of eq.~\eqref{eq:decompo_F_star}. Then, taking the adjoint of this equation, we obtain that $\widehat{\F} = \widehat{\F}^{**}$ can be written as in eq.~\eqref{eq:form_idempotent_channel}. Then, for $x \in \cL(\cH_0)$, we have that
\[
\widehat{\F}(x) = \sum_{k=1}^K \tr_{k,2}(P_k x P_k) \otimes \rho_k \in \bigoplus_{k=1}^K \cL(\cH_{k,1}) \otimes \rho_k.
\]
Conversely, if 
\[x = \sum_{k=1}^K x_k \otimes \rho_k \in \bigoplus_{k=1}^K \cL(\cH_{k,1}) \otimes \rho_k,\]
where, for each $k \in [K]$, $x_k \in \cL(\cH_{k,1})$, we have that $P_lx_k\otimes\rho_kP_l = \delta_{k,l}x_k\otimes \rho_k$, since the spaces $\cH_{k,1}\otimes \cH_{k,2}$ are two by two orthogonal, so that $x = \widehat{\F}(x)$, and finally $x \in \Rg(\widehat{\F})$. This proves eq.~\eqref{eq:decompo_range_reduced_channel}.

Finally, to prove eq.~\eqref{eq:decompo_range_F}, let $x \in \Rg(\F)$. We then have
\begin{equation}\label{eq:support_element_image_F}
x = \F(x) = VV^*\F(x)VV^* = VV^*xVV^*,
\end{equation}
so that $x \in \cL(\cH_0) \oplus 0$ where the $0$ operator acts on $\cH_1$. Then, $V^*xV \in \cL(\cH_0)$ and
\[\widehat{\F}(V^*xV) = V^*\F(VV^*xVV^*)V = V^*\F(x)V = V^*xV,\]
by Lemma~\ref{lemma:stability_support_one} and by eq.~\eqref{eq:support_element_image_F}.
Hence,
\[V^*xV \in \bigoplus_{k=1}^K \cL(\cH_{k,1}) \otimes \rho_k,\]
by eq.~\eqref{eq:decompo_range_reduced_channel} and 
\[x = VV^*xVV^* \in \bigoplus_{k=1}^K \cL(\cH_{k,1}) \otimes \rho_k \oplus 0.\]
Conversely, if 
\[x = \begin{pmatrix}
    x_0 & 0 \\ 0 & 0
\end{pmatrix},\]
with $$x_0 \in \bigoplus_{k=1}^K \cL(\cH_{k,1}) \otimes \rho_k,$$ then 
\[x = Vx_0V^* = V\widehat{\F}(x_0)V^* = VV^*\F(Vx_0V^*)VV^* = \F(Vx_0V^*) = \F(x),\]
so that $x \in \Rg(\F)$.
\end{proof}

We will now show that if $\F$ and $\cG$ are two idempotent channels, it is equivalent to study the emulation of $\F$ by $\cG$ and the emulation of their reduced version, i.e. $\widehat{\F}$ by $\widehat{\cG}$. First, we can explicitly find an encoder and a decoder channel to convert any idempotent channel $\F$ into its reduced version $\widehat{\F}$ and vice versa as shown in the following proposition.

\begin{prop}\label{prop:conversion_to_reduced_channel}
    Let $\F : \cL(\cH) \rightarrow \cL(\cH)$ be an idempotent channel, let $\cH_0 = \supp(\F(\mathds{1}))$ and let $\widehat{\F} : \cL(\cH_0) \rightarrow \cL(\cH_0)$ be its reduced quantum channel. Then, there is $\widehat{\E} : \cL(\cH_0) \rightarrow \cL(\cH)$, $\widehat{\D} : \cL(\cH) \rightarrow \cL(\cH_0)$, $\E : \cL(\cH) \rightarrow \cL(\cH_0)$ and $\D : \cL(\cH_0) \rightarrow \cL(\cH)$ such that 
    \begin{equation}
        \F = \widehat{\D}\widehat{\F}\widehat{\E}, \text{ and } \widehat{\F} = \D\F\E.
    \end{equation}
\end{prop}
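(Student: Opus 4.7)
The plan is to build everything from the isometry $V : \cH_0 \to \cH$ of Proposition~\ref{prop:structure_idempotent_channel}, using only the two key properties $V^*V = \mathds{1}_{\cH_0}$ and $VV^*\F(x)VV^* = \F(x)$ for all $x \in \cL(\cH)$ (Lemma~\ref{lemma:stability_support_one}), together with the idempotence $\F\circ\F = \F$. (The types as stated force the compositions $\widehat{\D}\widehat{\F}\widehat{\E}$ and $\D\F\E$ to be read with the usual convention, so I will take $\widehat{\E} : \cL(\cH)\to\cL(\cH_0)$, $\widehat{\D} : \cL(\cH_0)\to\cL(\cH)$, $\E : \cL(\cH_0)\to\cL(\cH)$ and $\D : \cL(\cH)\to\cL(\cH_0)$.)

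For the identity $\F = \widehat{\D}\widehat{\F}\widehat{\E}$, I would take $\widehat{\E}(x) = V^*\F(x)V$ and $\widehat{\D}(y) = VyV^*$. Both are completely positive since each is either a conjugation by an isometry or the composition of such a conjugation with a channel. Trace preservation of $\widehat{\D}$ is immediate from $V^*V = \mathds{1}_{\cH_0}$, and for $\widehat{\E}$ it follows from $\tr(V^*\F(x)V) = \tr(VV^*\F(x)) = \tr(\F(x)) = \tr(x)$, using Lemma~\ref{lemma:stability_support_one} in the middle step. The desired identity then unfolds directly:
\[
\widehat{\D}\bigl(\widehat{\F}(\widehat{\E}(x))\bigr) = V\widehat{\F}\bigl(V^*\F(x)V\bigr)V^* = VV^*\F\bigl(VV^*\F(x)VV^*\bigr)VV^* = \F(\F(x)) = \F(x),
\]
where Lemma~\ref{lemma:stability_support_one} is used twice (once inside the argument of $\F$, once outside) and idempotence of $\F$ collapses the final composition.

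For the reverse identity $\widehat{\F} = \D\F\E$, I would take $\E(x) = VxV^*$, which is CPTP by the same reasoning as for $\widehat{\D}$, and choose $\D$ to act as $y \mapsto V^*yV$ on operators already supported on $\cH_0$. A clean CPTP completion is $\D(y) = V^*yV + \tr\bigl((\mathds{1} - VV^*)y\bigr)\omega$ for an arbitrary fixed density operator $\omega$ on $\cH_0$; an equally valid alternative is $\D(y) = V^*\F(y)V$. Because $\F(VxV^*)$ is supported on $\cH_0$ by Lemma~\ref{lemma:stability_support_one}, the correction term vanishes and $\D(\F(\E(x))) = V^*\F(VxV^*)V = \widehat{\F}(x)$ by the very definition~\eqref{eq:def_reduced_channel} of the reduced channel.

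The only real obstacle is bookkeeping: the naive guess $\D(y) = V^*yV$ is CP but not trace preserving, since $\tr(V^*yV) = \tr(VV^*y) = \tr(ey)$ which need not equal $\tr(y)$ when $y$ has weight outside $\cH_0$. Either completion above fixes this without perturbing the composition identity, so no nontrivial obstacle remains.
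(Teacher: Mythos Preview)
Your proposal is correct and essentially identical to the paper's proof: the paper uses the same $\widehat{\E}(x)=V^*\F(x)V$, $\widehat{\D}(y)=VyV^*$, $\E=\widehat{\D}$, and for $\D$ it takes precisely your ``equally valid alternative'' $\D(y)=V^*\F(y)V$. Your observation that the domains and codomains in the statement are swapped, and your explicit discussion of why the naive $y\mapsto V^*yV$ fails trace preservation, are correct and only add clarity.
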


\begin{proof}
Let $V : \cH_0 \rightarrow \cH$ be the isometry embedding $\cH_0$ into $\cH$  which defines $\widehat{\F}$ as in Proposition~\ref{prop:structure_idempotent_channel}. For the first equality, let $\widehat{\E} : \cL(\cH) \rightarrow \cL(\cH_0)$ be defined by
\[
\widehat{\E} : x \mapsto V^*\F(x)V,
\]
and $\widehat{\D} : \cL(\cH_0) \rightarrow \cL(\cH)$ be defined by
\[
\widehat{\D} : x \mapsto VxV^*.
\]
 Both $\widehat{\E}$ and $\widehat{\D}$ are completely positive trace preserving map. The complete positivity is easy. $\widehat{\D}$ is trace preserving because $\tr(VxV^*) = \tr(V^*Vx) = \tr(x)$ and $\widehat{\E}$ is trace preserving as well because for all $x \in \cL(\cH)$, 
 $$\tr(V^*\F(x)V)  = \tr(VV^*\F(x)VV^*) =  \tr(\F(x)) =  \tr(x),$$
 where the second equality follows from Lemma~\ref{lemma:stability_support_one}. Then we have the factorisation
\[
\F = \widehat{\D}\widehat{\F}\widehat{\E}.
\]
 Indeed, let $x \in \cL(\cH)$, we have 
 \begin{align*}
    \widehat{\D}\widehat{\F}\widehat{\E}(x) &= VV^*\F(VV^*\F(x)VV^*)VV^* \\
    &= \F(\F(x)) \\
    &= \F(x),
 \end{align*}
 where the second equality follows from Lemma~\ref{lemma:stability_support_one}.
 Now, for the factorisation of $\widehat{\F}$, let $\E = \widehat{\D}$ and $\D = \widehat{\E}$, for $x \in \cL(\cH_0)$, we have:
 \begin{align*}
     \D\F\E(x) &= \widehat{\E}\F\widehat{\D}(x) \\
     &= V^*\F(\F(VxV^*))V\\
     &= V^*\F(VxV^*)V \\
     &= \widehat{\F}(x).
 \end{align*}
\end{proof}

As a consequence of this proposition, we get that studying the interconversion of two idempotent channels $\F$ and $\cG$, is equivalent to study the interconversion of their reduced versions $\widehat{\F}$ and $\widehat{\cG}$.

\begin{prop}\label{prop:equivalence_reduced_version}
    Let $\F : \cL(\cH_\F) \rightarrow \cL(\cH_\F)$ and $\cG : \cL(\cH_\cG) \rightarrow \cL(\cH_\cG)$ be two idempotent channels and $\widehat{\F}$, $\widehat{\cG}$ be their associated reduced channels, as defined in Proposition~\ref{prop:structure_idempotent_channel}, then
    \begin{equation}
        \inf_{\D\!,\E} \|\F - \D\cG\E\|_\diamond = \inf_{\D\!,\E} \|\widehat{\F} - \D\widehat{\cG}\E\|_\diamond,
    \end{equation}
    where the infimum are taken over encoders and decoders channels $\E$ and $\D$ between appropriate spaces.
\end{prop}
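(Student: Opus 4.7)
The plan is to use the factorizations given by Proposition~\ref{prop:conversion_to_reduced_channel} together with the standard fact that the diamond norm is contractive under pre- and post-composition with quantum channels: for any linear map $M$ and any channels $\mathcal{A}, \mathcal{B}$ with compatible input and output spaces, $\|\mathcal{A} M \mathcal{B}\|_\diamond \leq \|M\|_\diamond$. (This follows from submultiplicativity of the diamond norm together with the fact that every channel has diamond norm $1$.) Once this is in place, the statement reduces to an algebraic identity chasing.

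By Proposition~\ref{prop:conversion_to_reduced_channel}, there exist channels $\widehat{\E}_\F, \widehat{\D}_\F, \E_\F, \D_\F$ realizing $\F = \widehat{\D}_\F \widehat{\F} \widehat{\E}_\F$ and $\widehat{\F} = \D_\F \F \E_\F$, and analogously channels $\widehat{\E}_\cG, \widehat{\D}_\cG, \E_\cG, \D_\cG$ for $\cG$. To prove the inequality $\inf_{\D,\E} \|\F - \D\cG\E\|_\diamond \leq \inf_{\D,\E} \|\widehat{\F} - \D\widehat{\cG}\E\|_\diamond$, I would take any feasible pair $(\E', \D')$ for the reduced problem and set
\[
\E := \E_\cG \E' \widehat{\E}_\F, \qquad \D := \widehat{\D}_\F \D' \D_\cG,
\]
which are channels as compositions of channels. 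Substituting $\widehat{\cG} = \D_\cG \cG \E_\cG$, the identity
\[
\F - \D \cG \E \;=\; \widehat{\D}_\F \widehat{\F} \widehat{\E}_\F - \widehat{\D}_\F \D' \widehat{\cG} \E' \widehat{\E}_\F \;=\; \widehat{\D}_\F \bigl(\widehat{\F} - \D' \widehat{\cG} \E'\bigr) \widehat{\E}_\F
\]
together with contractivity yields $\|\F - \D\cG\E\|_\diamond \leq \|\widehat{\F} - \D'\widehat{\cG}\E'\|_\diamond$, and passing to the infimum on the right gives the desired bound.

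The reverse inequality is proved symmetrically: given any $(\E, \D)$ for the original problem, I would set $\E' := \widehat{\E}_\cG \E \E_\F$ and $\D' := \D_\F \D \widehat{\D}_\cG$, substitute the factorization $\cG = \widehat{\D}_\cG \widehat{\cG} \widehat{\E}_\cG$, and obtain $\widehat{\F} - \D'\widehat{\cG}\E' = \D_\F (\F - \D\cG\E) \E_\F$, whence $\|\widehat{\F} - \D'\widehat{\cG}\E'\|_\diamond \leq \|\F - \D\cG\E\|_\diamond$. I do not anticipate any genuine obstacle here --- the argument is essentially bookkeeping on the domains and codomains of the channels, and the whole proof rests on two ingredients already in hand: Proposition~\ref{prop:conversion_to_reduced_channel} and the contractivity of $\|\cdot\|_\diamond$ under composition with channels. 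The only point requiring mild attention is verifying that all compositions are well-typed so that $\E$, $\D$, $\E'$, $\D'$ indeed qualify as valid encoders/decoders in their respective optimization problems.
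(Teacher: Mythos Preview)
Your proof is correct and follows essentially the same approach as the paper: use the factorizations from Proposition~\ref{prop:conversion_to_reduced_channel} together with contractivity of the diamond norm under composition with channels. The only cosmetic difference is that the paper writes the contraction step as an inequality chain (inserting $\|\D_1\|_\diamond = \|\E_1\|_\diamond = 1$ explicitly and invoking submultiplicativity), whereas you first build the new encoder/decoder and then verify the algebraic identity before applying contractivity; both arrive at the same bound by the same mechanism.
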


\begin{proof}
    Let $\E : \cL(\cH_\F) \rightarrow \cL(\cH_\cG)$ and $\D : \cL(\cH_\cG) \rightarrow \cL(\cH_\F)$ be a pair of channels. Let $\E_1$, $\D_1$, $\E_2$, $\D_2$ be channels such that 
    \begin{equation}
        \widehat{\F} = \D_1\F\E_1 \text{, and } \cG = \D_2\widehat{\cG}\E_2,
    \end{equation}
    which exist by Proposition~\ref{prop:conversion_to_reduced_channel}. Then, 
    \begin{align*}
        \|\F - \D\cG\E\|_\diamond &= \|\F - \D \D_2\widehat{\cG}\E_2\E\|_\diamond \\
        &= \|\D_1\|_\diamond\|\F - \D \D_2\widehat{\cG}\E_2\E\|_\diamond\|\E_1\|_\diamond \\
        &\geq \|\D_1\F\E_1 - \D_1\D\D_2\widehat{\cG}\E_2\E\E_1\|_\diamond \\
        &= \|\widehat{\F} - \D_1\D\D_2\widehat{\cG}\E_2\E\E_1\|_\diamond,
    \end{align*}
    where we used in the second equality that as $\E_1$, $\D_1$ are channels, $\|\E_1\|_\diamond = \|\D_1\|_\diamond = 1$.

    Therefore,
    \[
    \inf_{\D\!,\E}\|\widehat{\F} - \D\widehat{\cG}\E\|_\diamond \leq \inf_{\D\!,\E}\|\F - \D\cG\E\|_\diamond.
    \]

    We can apply the same line of argument but converting $\widehat{\F}$ to $\F$ and $\cG$ to $\widehat{\cG}$ to prove the converse inequality.
\end{proof}

As mentioned in the introduction, the shape vector $\lambda(\mathcal{F})$ of an idempotent channel $\F : \cL(\cH) \rightarrow \cL(\cH)$ can be computed efficiently in terms of the dimension of $\cH$.

\begin{prop}[Theorem 2.6 and Algorithm 1 in \cite{Fawzi24}]\label{prop:computation_shape}
Given $\F : \cL(\cH) \rightarrow \cL(\cH)$ an idempotent channel, there is an algorithm that computes $\lambda(\F)$ in time $\mathcal{O}(d^6\log(d))$ where $d$ is the dimension of $\cH$.
\end{prop}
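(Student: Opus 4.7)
The statement is essentially an appeal to the algorithm developed in~\cite{Fawzi24}, so my plan is to outline why such an algorithm exists rather than to re-derive its exact complexity bound. The main idea is to reduce the computation of $\lambda(\F)$ to standard linear-algebraic primitives on $\cL(\cH)$ applied to a structural description of the $*$-algebra $\Rg(\widehat{\F}^*)$ given by Proposition~\ref{prop:structure_idempotent_channel}.

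First, I would work directly with the ambient channel $\F$ and avoid explicitly computing the isometry $V$: the algebra $\A := \Rg(\F^*) \subseteq \cL(\cH)$ contains all the information one needs, because the block structure in eq.~\eqref{eq:range_F_star} is preserved, modulo an additional trivial block supported on $\cH_1$. Representing $\F^*$ as a $d^2 \times d^2$ matrix on $\cL(\cH)$ (obtainable from a Kraus or Choi description in time $\mathcal{O}(d^4)$), one computes a basis of $\A$ as the fixed-point subspace $\ker(\F^* - \Id)$ via Gaussian elimination; this costs $\mathcal{O}(d^6)$.

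Next, I would extract the block decomposition of $\A$ using the center $Z(\A) = \{z \in \A : zx = xz \text{ for all } x \in \A\}$. A basis of $Z(\A)$ is obtained by imposing linear commutation constraints against the basis of $\A$, again $\mathcal{O}(d^6)$. By the Artin–Wedderburn theorem, $Z(\A) \cong \mathbb{C}^{K}$, and its minimal idempotents $z_1, \dots, z_K$ are precisely the central projections of the block decomposition. These can be recovered by diagonalising a single generic self-adjoint element of $Z(\A)$ (the logarithmic factor in the claimed bound is naturally attributed to this eigendecomposition step) and grouping eigenvectors according to distinct eigenvalues. For each $k$, the dimension $d_k$ is then read off from, say, $\dim(z_k \A z_k) = d_k^2$, or equivalently from $\tr(z_k)/m_k$ after also computing $m_k$ from the multiplicity of the eigenvalue.

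Finally, one sorts the vector $(d_k)_{k \in [K]}$ in non-increasing order to produce $\lambda(\F)$. The overall runtime is dominated by the $\mathcal{O}(d^6)$ dense linear algebra on $\cL(\cH)$, with the $\log d$ factor absorbing the cost of the eigenvalue routine in the center. The main subtlety — and what I expect is the hard part carefully handled in~\cite{Fawzi24} — is ensuring that the minimal central projections can be identified exactly (or to sufficient precision) from a single generic element of $Z(\A)$ without spurious eigenvalue collisions, and matching this to the stated $\mathcal{O}(d^6 \log d)$ bound rather than a slightly worse one. For the purposes of the present paper, I would simply invoke Theorem~2.6 and Algorithm~1 of~\cite{Fawzi24} as a black box.
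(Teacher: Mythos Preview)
The paper does not prove this proposition at all: it is stated with the attribution ``Theorem~2.6 and Algorithm~1 in~\cite{Fawzi24}'' and used purely as a black-box citation, with no accompanying argument. Your final sentence --- invoking~\cite{Fawzi24} as a black box --- is therefore exactly what the paper does, and is the correct disposition here; the preceding algorithmic sketch is reasonable intuition but is neither needed nor present in the paper.
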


We now show that the range and the shape vector of idempotent channels preserve the tensor product, a property that will be used in the proofs of our Theorems~\ref{thm:zero_error_case}~and~\ref{thm:strong_converse}. Indeed, it will allow us to reduce the study of the problem of emulating $k$ copies of $\F$ using $n$ copies of $\cG$ for general pairs of integers $(k,n)$ to the case where $k=n$. 

\begin{prop}\label{prop:multiplicativity}
    Let $\F_1 : \cL(\cH_1) \rightarrow \cL(\cH_1)$ and $\F_2 : \cL(\cH_2) \rightarrow \cL(\cH_2)$ be two idempotent channels, we have
    \begin{align}
        \Rg(\F_1 \otimes \F_2) &= \Rg(\F_1) \otimes \Rg(\F_2),\label{eq:mult_range_channels} \\
        \Rg(\F_1^* \otimes \F_2^*) &= \Rg(\F_1^*) \otimes \Rg(\F_2^*),\label{eq:mult_range_UCP} \\
        \widehat{\F_1\otimes \F_2} &= \widehat{\F_1} \otimes \widehat{\F_2},\label{eq:mult_reduced_channels} \\
        \la(\F_1\otimes \F_2) &= \la(\F_1) \otimes \la(\F_2)\label{eq:mult_shape_vector},
    \end{align}
    where $\widehat{\F_1\otimes \F_2}$, $\widehat{\F}_2$ and $\widehat{\F}_2$ are respectively the reduced channels associated to $\F_1 \otimes \F_2$, $\F_1$ and $\F_2$.
\end{prop}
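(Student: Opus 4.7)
The plan is to prove the four identities in order, leveraging each one to simplify the next. Identities~\eqref{eq:mult_range_channels} and~\eqref{eq:mult_range_UCP} are purely linear-algebraic and require no idempotence; identity~\eqref{eq:mult_reduced_channels} follows from~\eqref{eq:mult_range_channels} together with the observation that the support of the unit image factorizes; and~\eqref{eq:mult_shape_vector} is read off from~\eqref{eq:mult_reduced_channels} by applying Proposition~\ref{prop:structure_idempotent_channel} to each factor and tensoring.

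For the first point, I would note the general identity $\Rg(T_1 \otimes T_2) = \Rg(T_1) \otimes \Rg(T_2)$ valid for arbitrary linear maps $T_i : V_i \to W_i$. The inclusion $\supseteq$ is immediate from $T_1(v_1) \otimes T_2(v_2) = (T_1 \otimes T_2)(v_1 \otimes v_2)$, and $\subseteq$ follows by writing any $x \in V_1 \otimes V_2$ as a finite sum of simple tensors and applying $T_1 \otimes T_2$. Taking $T_i = \F_i$ yields~\eqref{eq:mult_range_channels}, and taking $T_i = \F_i^*$ together with $(\F_1 \otimes \F_2)^* = \F_1^* \otimes \F_2^*$ yields~\eqref{eq:mult_range_UCP}.

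For~\eqref{eq:mult_reduced_channels}, I would first observe that $(\F_1 \otimes \F_2)(\mathds{1}) = \F_1(\mathds{1}) \otimes \F_2(\mathds{1})$, and since the support of a tensor product of positive semidefinite operators is the tensor product of the supports, the ambient space $\cH_0$ of Proposition~\ref{prop:structure_idempotent_channel} associated with $\F_1 \otimes \F_2$ is exactly $\cH_0^{(1)} \otimes \cH_0^{(2)}$, with embedding isometry $V_1 \otimes V_2$. Substituting into~\eqref{eq:def_reduced_channel} and evaluating on simple tensors gives $\widehat{\F_1 \otimes \F_2}(x_1 \otimes x_2) = \widehat{\F_1}(x_1) \otimes \widehat{\F_2}(x_2)$, and linearity extends this to all of $\cL(\cH_0^{(1)} \otimes \cH_0^{(2)})$.

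Finally, for~\eqref{eq:mult_shape_vector}, I would apply Proposition~\ref{prop:structure_idempotent_channel} separately to $\widehat{\F_1}$ and $\widehat{\F_2}$, regroup factors in the tensored decomposition of $\cH_0^{(1)} \otimes \cH_0^{(2)}$ as $\bigoplus_{k,l}\bigl(\cH_{k,1}^{(1)} \otimes \cH_{l,1}^{(2)}\bigr) \otimes \bigl(\cH_{k,2}^{(1)} \otimes \cH_{l,2}^{(2)}\bigr)$, and tensor the explicit formula~\eqref{eq:form_idempotent_channel} for each $\widehat{\F_i}$. The result is exactly the canonical form of Proposition~\ref{prop:structure_idempotent_channel} for $\widehat{\F_1} \otimes \widehat{\F_2}$, with multiplicity dimensions $d_k^{(1)} d_l^{(2)}$ and state factors $\rho_k^{(1)} \otimes \rho_l^{(2)}$; combined with~\eqref{eq:mult_reduced_channels} and the identity $\la(\F) = \la(\widehat{\F})$, this gives the shape vector equality. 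The only minor caveat is that shape vectors are defined only up to reordering into non-increasing order, so~\eqref{eq:mult_shape_vector} is really an equality of multisets; this is harmless since every subsequent use involves $\ell_p$-norms, which are permutation invariant. None of these steps is a genuine obstacle — the proposition is essentially careful bookkeeping on top of Proposition~\ref{prop:structure_idempotent_channel}.
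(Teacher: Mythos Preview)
Your proposal is correct and follows essentially the same approach as the paper: the range identities are proved as a general linear-algebra fact on arbitrary linear maps via simple tensors, the reduced-channel identity via $\supp(\F_1(\mathds{1})\otimes\F_2(\mathds{1}))=\supp(\F_1(\mathds{1}))\otimes\supp(\F_2(\mathds{1}))$ and the isometry $V_1\otimes V_2$, and the shape-vector identity by tensoring the block decompositions. Your explicit regrouping of the factors for~\eqref{eq:mult_shape_vector} and the remark on ordering are a touch more detailed than the paper's one-line appeal to~\eqref{eq:decompo_range_F} and~\eqref{eq:mult_range_channels}, but the argument is the same.
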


\begin{proof}
The proof of eq.~\eqref{eq:mult_range_channels} and eq.~\eqref{eq:mult_range_UCP} is elementary and these equations actually hold for every linear maps $\Phi : \cL(\cH_1) \rightarrow \cL(\cH_1')$ and $\Psi : \cL(\cH_2) \rightarrow \cL(\cH_2')$. Indeed, since $\Rg(\Phi) \otimes \Rg(\Psi) = \Span\{y_1 \otimes y_2~|~y_1 \in \Rg(\Phi), y_2 \in \Rg(\Psi)\}$, we argue on product elements $y_1 \otimes y_2 \in \Rg(\Phi)\otimes \Rg(\Psi)$ with $y_1 \in \Rg(\Phi)$ and $y_2 \in \Rg(\Psi)$. We have $$y_1 \otimes y_2 = \Phi(x_1) \otimes \Psi(x_2) = \Phi \otimes \Psi(x_1 \otimes x_2) \in \Rg(\Phi \otimes \Psi),$$
where $x_1 \in \cL(\cH_1)$ and $x_2 \in \cL(\cH_2)$. Conversely, for $y \in \Rg(\Phi \otimes \Psi)$ we have $y = \Phi \otimes \Psi(x)$, with $x \in \cL(\cH_1 \otimes \cH_2)$. Thus, we can write $x$ as
$$x = \sum_{j=1}^J x_1^{(j)} \otimes x_2^{(j)},$$
with $x_i^{(j)} \in \cL(\cH_i)$ for $(i,j) \in [2]\times[J]$. Therefore, 
\[y = \Phi \otimes \Psi(\sum_{j=1}^J x_1^{(j)} \otimes x_2^{(j)}) = \sum_{j=1}^J \Phi(x_1^{(j)}) \otimes \Psi(x_2^{(j)}) \in \Rg(\Phi) \otimes \Rg(\Psi).\]

Then, we prove the compatibility of taking the reduced version of idempotent channels with the tensor product. We write $\mathds{1}_1$ and $\mathds{1}_2$ respectively the identity operators in $\cL(\cH_1)$ and $\cL(\cH_2)$ and $\mathds{1}_{1,2} = \mathds{1}_1 \otimes \mathds{1}_2$. We also write $\cH_i^{(0)} = \supp(\F_i(\mathds{1}_i))$ and $\cH_i^{(1)} = \ker(\F_i(\mathds{1}_i))$ for $i \in [2]$ and the corresponding isometries $V_i : \cH_i^{(0)} \rightarrow \cH_i$ defining $\widehat{\F}_i$. Then, $\F_1 \otimes \F_2(\mathds{1}_{1,2}) = \F_1(\mathds{1}_1) \otimes \F_2(\mathds{1}_2)$, hence $\supp(\F_1\otimes \F_2(\mathds{1}_{1,2})) = \supp(\F_1(\mathds{1}_1)) \otimes \supp(\F_2(\mathds{1}_2))$ and $\widehat{\F_1\otimes \F_2}$ maps $\cL(\cH_1^{(0)} \otimes \cH_2^{(0)})$ into itself. If we write $e_{1,2} = \F_1 \otimes \F_2(\mathds{1}_{1,2})^0$ the orthogonal projector onto $\cH_1^{(0)} \otimes \cH_2^{(0)}$ the support of $\F_1\otimes \F_2(\mathds{1}_{1,2})$, we also have that $e_{1,2} = e_1 \otimes e_2 = (V_1 \otimes V_2)(V_1\otimes V_2)^*$ with $e_1 = V_1V_1^* = \F_{1}(\mathds{1}_1)^0$ and $e_2 = V_2V_2^* = \F_2(\mathds{1}_2)^0$ the orthogonal projectors onto the respective supports of $\F_i(\mathds{1}_i)$, $i \in [2]$. Then for $x_1 \in \cL(\cH_1)$ and $x_2 \in \cL(\cH_2)$, we have
\begin{align*}
    \widehat{\F_1\otimes \F_2}(x_1\otimes x_2) &= (V_1 \otimes V_2)^*\F_1\otimes\F_2((V_1\otimes V_2)(x_1 \otimes x_2)(V_1\otimes V_2)^*)(V_1\otimes V_2) \\
    &= V_1^*\F_1(V_1x_1V_1^*)V_1\otimes V_2^*\F_2(V_2x_2V_2^*)V_2 \\
    &= \widehat{\F}_1(x_1)\otimes \widehat{\F}_2(x_2).
\end{align*}
Eq.~\eqref{eq:mult_reduced_channels} then follows on the whole space $\cL(\cH_1\otimes \cH_2) = \cL(\cH_1)\otimes \cL(\cH_2)$ by linearity.

The last equality, eq.~\eqref{eq:mult_shape_vector}, is then seen by injecting eq.~\eqref{eq:decompo_range_F} into eq.~\eqref{eq:mult_range_channels} and by the definition of the shape vector of an idempotent channel.
\end{proof}

We end this section by the following remark on the origin of the concept of shape vectors.

\begin{remark}
Note that the shape of an idempotent channel $\F$ coincides with the shape of the $*$-algebra $\Rg(\widehat{\F}^*)$, with $\widehat{\F}$ the reduced channel associated to $\F$. The notion of shape vector for $*$-algebras was introduced by Kuperberg in~\cite{kuperberg}. If $\A$ is a $*$-algebra, we can decompose
\[
\A \cong \bigoplus_{k=1}^K \cL(\C^{d_k}),
\]
for a set of integers $\{d_k~|~k \in [K]\}$ and we write $\la(\A)$ the shape vector of $\A$, which coordinates are the integers $(d_k)_{k \in [K]}$ sorted in non-increasing order.\footnote{In fact, in all this article and in Kuperberg's work~\cite{kuperberg}, the order with respect to which the coordinates of shape vectors are sorted is actually irrelevant.}
\end{remark}

\section{Emulation of idempotent channels with zero error}\label{sec:zero_error_case}

Now that we have presented the decomposition properties of idempotent channels, we move on to proving Theorem~\ref{thm:zero_error_case} in this section, which is divided into two subsections. In the first one, we prove that every rate lesser than 
\begin{equation}
\inf_{p \in [1,+\infty]} \frac{\log(\|\lambda(\cG)\|_p)}{\log(\|\lambda(\F)\|_p)}
\label{eq:capacity_formula}
\end{equation}
is achievable. In the second one, we prove that any rate strictly greater than eq.~\eqref{eq:capacity_formula} is not achievable.

Before presenting these proofs, we provide some examples for the expression eq.~\eqref{eq:capacity_formula}. In particular, we gather in Table~\ref{tab:lower_bounds} the expression of the emulation capacity $C(\cG \mapsto \F)$ when $\F$ or $\cG$ is an identity quantum channel or a completely dephasing channel. In fact, when $\F = \Id_d$, then $\lambda(\F) = (d)$ and thus $\| \lambda(\F)\|_p = d$ for all $p \geq 1$. As $\| \lambda(\cG)\|_{p}$ is non-increasing in $p$, this shows that $C(\cG \mapsto \F) = \frac{\log(\|\lambda(\cG)\|_\infty)}{\log(d)}$. The other cases are similar. Note that, when $\cG = \Delta_d$, $\log(\|\lambda(\Delta_d)\|_\infty) = \log(1) = 0$. Therefore, in Table~\ref{tab:lower_bounds}, we divided the case $\cG = \Delta_d$ in two more cases: the one where $\F$ is any idempotent channel such that at least one coordinate of $\lambda(\F)$ is greater or equal to $2$, and the one where $\lambda(\F) = (1,\dots,1)$, in which case, for all $p \in [1,+\infty[$, 
\[
\frac{\log(\|\lambda(\Delta_d)\|_p)}{\log(\|\lambda(\F)\|_p)} = \frac{p\log(\|\lambda(\Delta_d)\|_p)}{p\log(\|\lambda(\F)\|_p)} = \frac{\log(d)}{\log(\|\lambda(\F)\|_1)}.
\] 
Note that in the first case, i.e. $\la(\F)$ has a coefficient greater or equal to two, the emulation capacity of $\F$ by the completely dephasing channel is zero, that is it is impossible to emulate a channel preserving some quantum information with a classical one, without error.

\begin{table}[h!]
    \centering
    \begin{tabular}{|c|c|c|}
        \hline
        $\F$ & $\cG$ & $C(\cG \mapsto \F)$ \rule{0pt}{13pt}\\
        \hline
        $\Id_d$ & $\cG$ & $\frac{\log(\|\lambda(\cG)\|_\infty)}{\log(d)}$ \rule{0pt}{11pt}\\[4pt]
        $\F$ & $\Id_d$ & $\frac{\log(d)}{\log(\|\lambda(\F)\|_1)}$ \\[4pt]
        $\Delta_d$ & $\cG$ & $\frac{\log(\|\lambda(\cG)\|_1)}{\log(d)}$ \\[4pt]
        $\F \text{ s.t. } \lambda(\F) \neq (1,\dots,1)$ & $\Delta_d$ & $0$ \\[4pt]
        $\F \text{ s.t. } \lambda(\F) = (1,\dots,1)$ & $\Delta_d$ & $\frac{\log(d)}{\log(\|\lambda(\F)\|_1)}$ \\
        \hline
    \end{tabular}
    \caption{Emulation capacities when either $\F$ or $\cG$ is the identity or the completely dephasing channel.}
    \label{tab:lower_bounds}
\end{table}

Then, we also note that, as a corollary to our Theorem~\ref{thm:zero_error_case}, the emulation capacity of idempotent channels is additive under tensor products of source channels, which yields that the zero-error quantum and classical capacities of idempotent channels are additive.

\begin{corollary}
    Let $\F$, $\cG_1$ and $\cG_2$ be three idempotent channels, then 
    \begin{equation}
        C(\cG_1 \otimes \cG_2 \mapsto \F) = C(\cG_1 \mapsto \F) + C(\cG_2 \mapsto \F).
    \end{equation}
\end{corollary}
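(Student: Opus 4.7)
The plan is to derive this corollary from Theorem \ref{thm:zero_error_case} combined with the multiplicativity of the shape vector and of the $\ell_p$-norm under tensor products.

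The lower bound $C(\cG_1 \otimes \cG_2 \mapsto \F) \geq C(\cG_1 \mapsto \F) + C(\cG_2 \mapsto \F)$ follows directly from the definition of achievability, without needing Theorem \ref{thm:zero_error_case}. Given achievable rates $r_i \leq C(\cG_i \mapsto \F)$ realized by tuples $(k_i, n_i, \E_i, \D_i)$ with $\F^{\otimes k_i} = \D_i \cG_i^{\otimes n_i} \E_i$, I would view $n_1 n_2$ copies of $\cG_1 \otimes \cG_2$ as $n_2$ parallel blocks of $\cG_1^{\otimes n_1}$ tensored with $n_1$ parallel blocks of $\cG_2^{\otimes n_2}$. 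Running both strategies in parallel emulates $n_2 k_1 + n_1 k_2$ copies of $\F$, achieving rate $k_1/n_1 + k_2/n_2 = r_1 + r_2$; taking the supremum over achievable rates yields the bound.

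For the upper bound, I would apply Theorem \ref{thm:zero_error_case}. Equation~\eqref{eq:mult_shape_vector} in Proposition~\ref{prop:multiplicativity} gives $\lambda(\cG_1 \otimes \cG_2) = \lambda(\cG_1) \otimes \lambda(\cG_2)$, and combined with the multiplicativity of the $\ell_p$-norm under tensor products of vectors recalled in Section~\ref{sec:preliminaries}, this yields
\[
\log \|\lambda(\cG_1 \otimes \cG_2)\|_p = \log \|\lambda(\cG_1)\|_p + \log \|\lambda(\cG_2)\|_p
\]
for every $p \in [1, +\infty]$. Writing $f_i(p) := \log \|\lambda(\cG_i)\|_p / \log \|\lambda(\F)\|_p$ and invoking the single-letter formula of Theorem~\ref{thm:zero_error_case} gives
\[
C(\cG_1 \otimes \cG_2 \mapsto \F) = \inf_p (f_1(p) + f_2(p)), \qquad C(\cG_i \mapsto \F) = \inf_p f_i(p).
\]

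The hard part will be the final step: establishing $\inf_p (f_1 + f_2) \leq \inf_p f_1 + \inf_p f_2$, since only the reverse inequality is automatic (and already matches the achievability direction). For equality one must exhibit a common near-minimizer $p^\ast \in [1,+\infty]$ of $f_1$ and $f_2$. I would approach this by exploiting the log-convexity of $p \mapsto \log \sum_k d_k^p$ shared by all three quantities $\|\lambda(\cG_1)\|_p$, $\|\lambda(\cG_2)\|_p$, $\|\lambda(\F)\|_p$, together with the fact that the denominator $\log \|\lambda(\F)\|_p$ is common to both ratios, to constrain the minimizers of $f_1$ and $f_2$ to admit a common optimizer; combined with the achievability lower bound, this would close the argument.
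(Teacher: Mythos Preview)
You have correctly isolated the crux of the argument: the passage from $\inf_p(f_1+f_2)$ to $\inf_p f_1+\inf_p f_2$. The paper's own proof simply writes this as an equality in one line, invoking only the multiplicativity of $\lambda$ and of the $\ell_p$-norm; it offers no justification for splitting the infimum. Your instinct that this is the ``hard part'' is right---but in fact no log-convexity argument can close it, because the inequality $\inf_p(f_1+f_2)\le\inf_p f_1+\inf_p f_2$ is \emph{false} in general, and with it the corollary.

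Concretely, take an idempotent $\F$ with $\lambda(\F)=(2,1)$, and set $\cG_1=\Id_3$ (so $\lambda(\cG_1)=(3)$) and $\cG_2=\Delta_4$ (so $\lambda(\cG_2)=(1,1,1,1)$). Then $f_1(p)=\dfrac{p\log 3}{\log(2^p+1)}$ is increasing with $\inf f_1=f_1(1)=1$, while $f_2(p)=\dfrac{\log 4}{\log(2^p+1)}$ is decreasing with $\inf f_2=0$ (attained at $p=\infty$). Hence $C(\cG_1\mapsto\F)+C(\cG_2\mapsto\F)=1$. On the other hand $\lambda(\cG_1\otimes\cG_2)=(3,3,3,3)$ and
\[
(f_1+f_2)(p)=\frac{p\log 3+\log 4}{\log(2^p+1)}\xrightarrow[p\to\infty]{}\frac{\log 3}{\log 2}\approx 1.585,
\]
with the approach from above, so $C(\cG_1\otimes\cG_2\mapsto\F)=\log 3/\log 2>1$. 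The minimizers of $f_1$ and $f_2$ sit at opposite ends of $[1,\infty]$, and no common near-minimizer exists. Thus your achievability argument (and, equivalently, the trivial direction $\inf(f_1+f_2)\ge\inf f_1+\inf f_2$) gives the correct statement $C(\cG_1\otimes\cG_2\mapsto\F)\ge C(\cG_1\mapsto\F)+C(\cG_2\mapsto\F)$, but equality fails; both your proposed final step and the paper's unjustified second equality are in error.
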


\begin{proof}
By Theorem~\ref{thm:zero_error_case}, we have 
\begin{align*}
    C(\cG_1 \otimes \cG_2 \mapsto \F) &= \inf_{p \in [1,+\infty]} \frac{\log(\|\la(\cG_1 \otimes \cG_2)\|_p)}{\log(\|\la(\F)\|_p)} \\
    &= \inf_{p \in [1,+\infty]} \frac{\log(\|\la(\cG_1)\|_p)}{\log(\|\la(\F)\|_p)} + \inf_{p \in [1,+\infty]} \frac{\log(\|\la(\cG_2)\|_p)}{\log(\|\la(\F)\|_p)} \\
    &= C(\cG_1 \mapsto \F) + C(\cG_2 \mapsto \F),
\end{align*}
where we used Proposition~\ref{prop:multiplicativity} and the multiplicativity of the $\ell_p$-norms over the tensor product in the second equation.
\end{proof}

The following example shows that the emulation capacity is not \emph{reversible}, i.e. that $C(\cG \mapsto \F) \neq C(\F \mapsto \cG)^{-1}$ in general.

\begin{example}\label{ex:reversibility}
    Let $\F = \Id_4$ be the identity channel on $\cL(\C^4)$ and $\cG : \cL(\C^4) \rightarrow \cL(\C^4)$ be such that $\la(\cG) = (2,2)$. Then, by the first row of Table~\ref{tab:lower_bounds}, we have
    \[
    C(\cG \mapsto \F) = \frac{\log(\|\la(\cG)\|_\infty)}{\log(4)} = \frac{\log(2)}{\log(4)} = \frac{1}{2},
    \]
    whereas by the second row,
    \[
    C(\F \mapsto \cG) = \frac{\log(4)}{\log(\|\la(\cG)\|_1)} = \frac{\log(4)}{\log(4)} = 1.
    \]
    Therefore, 
    \[C(\cG \mapsto \F)^{-1} = 2 \neq 1 = C(\F \mapsto \cG).\]
    This example shows that neither the emulation capacity nor the strong converse rate given in Theorem~\ref{thm:strong_converse} are reversible, since, as $\F$ is taken to be an identity channel in this example, both quantity match.
\end{example}

Although, when either $\F$ or $\cG$ is an identity channel or a completely dephasing one, the infimum in eq.~\eqref{eq:emul_capa} is reached for $p \in \{1,+\infty\}$, we show in the following example, that this is not the case in general.

In particular, the fact that the infimum in eq.~\eqref{eq:emul_capa} can be reached for $p \not \in \{1,+\infty\}$ implies that, for general idempotent channels $\F$ and $\cG$, the chain of emulation $\cG \mapsto \Id_d \mapsto \F$ (where each arrow represents an emulation) is less efficient than directly emulating $\F$ by $\cG$.

\begin{example}
Let $\F : \cL(\C^{8}) \rightarrow \cL(\C^8)$ and $\cG : \cL(\C^{15}) \rightarrow \cL(\C^{15})$ be two idempotent channels such that $\la(\F) = (5,3)$ and $\la(\cG) = (10,3,1,1)$, then, we numerically find that the infimum 
\[
C(\cG \mapsto \F) = \inf_{p \in [1,+\infty]} \frac{\log(\|\la(\cG)\|_p)}{\log(\|\la(\F)\|_p)} \approx 1.29916,
\]
is reached for $p \approx 1.15401$. In contrast, the infimum
\[
C(\F \mapsto \cG) = \inf_{p\in [1,+\infty]} \frac{\log(\|\la(\F)\|_p)}{\log(\|\la(\cG)\|_p)} = \frac{\log(5)}{\log(10)},
\]
is reached for $p = +\infty$. We can verify that 
\[C(\F \mapsto \cG)^{-1} = \frac{\log(10)}{\log(5)} \approx 1.43068 \neq C(\cG \mapsto \F).\]
\end{example}

\subsection{Achievability in Theorem~\ref{thm:zero_error_case}}

To prove the achievability direction of Theorem~\ref{thm:zero_error_case}, we first show the following proposition in the case where only one copy of $\F$ is emulated for each copy of $\cG$ used, we will then use Proposition~\ref{prop:multiplicativity} on the compatibility of the shape vector with the tensor product to conclude.

\begin{prop}\label{prop:achievability_arxiv}
Let $\F$ and $\cG$ be two idempotent channels such that, for all $p \in [1,+\infty]$,
\[\|\la(\F)\|_p < \|\la(\cG)\|_p.\]
Then, there exists $n \in \mathbb{N}^*$, $\E$ and $\D$ two channels such that 
\[\F^{\otimes n} = \D \cG^{\otimes n} \E.\]
\end{prop}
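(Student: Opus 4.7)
The plan is to reduce the problem to constructing an injective subunital $*$-homomorphism between the algebras $\Rg(\widehat{\F}^*)^{\otimes n}$ and $\Rg(\widehat{\cG}^*)^{\otimes n}$ for some $n$, and then to realize the encoder and decoder channels from this morphism via Arveson's extension theorem.

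First I would pass to reduced channels. By Proposition~\ref{prop:equivalence_reduced_version}, it is equivalent to construct $\E, \D$ such that $\widehat{\F}^{\otimes n} = \D\,\widehat{\cG}^{\otimes n}\,\E$, and by Proposition~\ref{prop:multiplicativity} we have $\widehat{\F^{\otimes n}} = \widehat{\F}^{\otimes n}$ with shape vector $\la(\F)^{\otimes n}$. So we may assume $\F, \cG$ are reduced, and set $\A := \Rg(\F^*)$, $\B := \Rg(\cG^*)$, unital $*$-subalgebras of $\cL(\cH_\F)$ and $\cL(\cH_\cG)$ with shape vectors $\la(\F) = (d_1, \dots, d_K)$ and $\la(\cG) = (d'_1, \dots, d'_L)$. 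The central step is then to show that under the hypothesis $\|\la(\F)\|_p < \|\la(\cG)\|_p$ for all $p \in [1, +\infty]$, there exist $n \geq 1$ and an injective subunital $*$-homomorphism $\iota : \A^{\otimes n} \to \B^{\otimes n}$. By the structure theorem for $*$-subalgebras of matrix algebras, such a morphism corresponds to non-negative integer multiplicities $(m_{\vec k, \vec l})_{\vec k \in [K]^n, \vec l \in [L]^n}$ satisfying $\sum_{\vec l} m_{\vec k, \vec l} \geq 1$ for every $\vec k$ (injectivity) and $\sum_{\vec k} m_{\vec k, \vec l} \prod_i d_{k_i} \leq \prod_i d'_{l_i}$ for every $\vec l$ (subunitality); equivalently, items of size $\prod_i d_{k_i}$ must be packed into bins of size $\prod_i d'_{l_i}$. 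The strict $p = \infty$ inequality ensures every item fits in some bin, the strict $p = 1$ inequality ensures that the total bin volume exceeds the total item volume, and the intermediate $p$-norm inequalities control the distribution of sizes across type classes; feasibility for large $n$ then follows from a type-class / large-deviation counting argument in the spirit of Kuperberg's asymptotic embedding theorem for finite-dimensional $C^*$-algebras~\cite{kuperberg}.

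Given such $\iota$, I would construct the encoder and decoder in the Heisenberg picture. Let $e := \iota(\mathds{1}_\A^{\otimes n}) \in \B^{\otimes n}$; this is a projection and $\iota(\A^{\otimes n}) \subseteq e\,\cL(\cH_\cG^{\otimes n})\,e$. Fix any state $\tau$ on $\cL(\cH_\F^{\otimes n})$ and define
\[
\D^*(x) := \iota\bigl((\F^{\otimes n})^*(x)\bigr) + \tau(x)\,\bigl(\mathds{1}_{\cH_\cG^{\otimes n}} - e\bigr),
\]
which is a sum of two CP maps sending the identity to the identity, hence UCP, so $\D$ is a channel. Consider the operator system $\tilde V := \iota(\A^{\otimes n}) \oplus \C(\mathds{1}_{\cH_\cG^{\otimes n}} - e)$ (a genuine direct sum because $\iota(\A^{\otimes n})$ sits in the $e$-corner) and define $\phi : \tilde V \to \cL(\cH_\F^{\otimes n})$ by $\phi\bigl(\iota(a) + \alpha(\mathds{1} - e)\bigr) = a$. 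The map $\phi$ is unital (it sends $\mathds{1}_{\cH_\cG^{\otimes n}} = e + (\mathds{1} - e)$ to $\mathds{1}_\A^{\otimes n} = \mathds{1}_{\cH_\F^{\otimes n}}$), $*$-preserving, and CP (for any $Z \in M_m(\tilde V)_+$, compression by the projection $I_m \otimes e$ projects $Z$ onto its $\iota(\A^{\otimes n})$-component, on which the $*$-isomorphism $\iota^{-1}$ acts positively). Arveson's extension theorem then supplies a UCP extension $\E^* : \cL(\cH_\cG^{\otimes n}) \to \cL(\cH_\F^{\otimes n})$. To conclude, both $\iota((\F^{\otimes n})^*(x))$ and $\mathds{1} - e$ lie in $\B^{\otimes n} = \Rg((\cG^{\otimes n})^*)$, which $(\cG^{\otimes n})^*$ fixes pointwise, so $(\cG^{\otimes n})^*(\D^*(x)) = \D^*(x) \in \tilde V$ and hence $\E^*\bigl((\cG^{\otimes n})^*(\D^*(x))\bigr) = \phi(\D^*(x)) = (\F^{\otimes n})^*(x)$, which dualises to $\F^{\otimes n} = \D\,\cG^{\otimes n}\,\E$.

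The main obstacle is the algebraic embedding step. The strict inequalities at $p \in \{1, +\infty\}$ by themselves are too coarse: examples with all $p$-norm inequalities strict can fail to admit a packing at small $n$, and one genuinely needs to exploit the whole family of strict inequalities through a refined counting of the multinomial type classes of products $\prod_i d_{k_i}$ and $\prod_i d'_{l_i}$; the downstream Heisenberg-picture construction of $\D$ and $\E$ is then essentially formal.
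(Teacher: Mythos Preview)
Your proposal is correct and follows essentially the same strategy as the paper: reduce to the reduced channels, invoke Kuperberg's asymptotic embedding theorem to obtain an injective subunital $*$-homomorphism $\iota:\Rg(\F^{*\otimes n})\to\Rg(\cG^{*\otimes n})$, and then lift $\iota$ to encoder/decoder channels in the Heisenberg picture. Your $\D^*$ is exactly the paper's $\widetilde{\iota}\circ\F^{*\otimes n}$ (Lemma~\ref{lemma:lift_homomorphism}, with your state $\tau$ playing the role of the normalised trace).

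The one genuine difference is in the construction of $\E^*$. You observe that $\tilde V=\iota(\A^{\otimes n})\oplus\C(\mathds{1}-e)$ is a unital operator system (in fact a $*$-subalgebra), define the unital $*$-homomorphism $\phi(\iota(a)+\alpha(\mathds{1}-e))=a$, and invoke Arveson's extension theorem to produce a UCP extension $\E^*$. The paper instead gives an explicit formula: it passes to the reduced homomorphism $\widehat{\iota}:\A^{\otimes n}\to e\cL(\cH_\cG^{\otimes n})e$, and sets $\E^*(x)=\widehat{\iota}^{-1}\bigl(E_{\widehat{\iota}(\A^{\otimes n})}(V^*xV)\bigr)$, i.e.\ compress to the $e$-corner, take the tracial conditional expectation onto $\widehat{\iota}(\A^{\otimes n})$, and invert. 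Your route is cleaner conceptually and avoids the bookkeeping of Lemma~\ref{lemma:lift_homomorphism}; the paper's is more constructive and self-contained, not relying on Arveson. Both are valid, and the verification $\E^*(\cG^{*\otimes n}\D^*(x))=(\F^{\otimes n})^*(x)$ goes through identically in either case since $\D^*(x)\in\tilde V$ is fixed by $\cG^{*\otimes n}$.
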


To prove this proposition, we use the following theorem on embeddability of $*$-algebras.

\begin{theorem}[Proven in \cite{kuperberg}]\label{thm:kuperberg_arxiv}
If $\A$ and $\B$ are two finite-dimensional $*$-algebras such that, for all $p \in [1,+\infty]$,
\begin{equation}
\|\la(\A)\|_p < \|\la(\B)\|_p,
\end{equation}
then, there is $n \in \mathbb{N}^*$ such that there exists a subunital injective $*$-homomorphism $\iota : \A^{\otimes n} \rightarrow \B^{\otimes n}$.
\end{theorem}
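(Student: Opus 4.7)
The plan is to recast the existence of a subunital injective $*$-homomorphism $\iota : \A^{\otimes n} \to \B^{\otimes n}$ as an integer bin-packing problem, then use the method of types together with the $\ell_p$-norm hypothesis to produce a feasible packing for some large $n$. The first step is to classify such homomorphisms: up to inner unitaries, a $*$-homomorphism between $\bigoplus_k \cL(\C^{d_k})$ and $\bigoplus_j \cL(\C^{e_j})$ is determined by a matrix of nonnegative integer multiplicities $(m_{jk})$, with subunitality equivalent to $\sum_k m_{jk} d_k \leq e_j$ for every $j$ and injectivity (which reduces to injectivity on every simple block) equivalent to $\sum_j m_{jk} \geq 1$ for every $k$. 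Writing $a = \la(\A)$, $b = \la(\B)$ and using that shape vectors tensorise, the task reduces to finding some $n$ and nonnegative integers $m_{\mathbf{j}, \mathbf{k}}$ indexed by $\mathbf{j} \in [J]^n, \mathbf{k} \in [K]^n$ satisfying
\[
\sum_{\mathbf{k}} m_{\mathbf{j},\mathbf{k}} \prod_{t=1}^n a_{k_t} \;\leq\; \prod_{t=1}^n b_{j_t} \ \text{ for all } \mathbf{j}, \qquad \sum_{\mathbf{j}} m_{\mathbf{j}, \mathbf{k}} \;\geq\; 1 \ \text{ for all } \mathbf{k}.
\]

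The next step is to analyse this packing asymptotically via the method of types. I would group the tuples $\mathbf{k} \in [K]^n$ by empirical distribution $\mu = (\mu_1, \ldots, \mu_K)$: there are $\binom{n}{n\mu}$ tuples of type $\mu$, each labelling a block of dimension $\prod_k a_k^{n\mu_k}$, and similarly for types $\nu$ on $[J]^n$. The identity $\|a\|_p^{pn} = \sum_\mu \binom{n}{n\mu}\prod_k a_k^{pn\mu_k}$ writes the $p$-th moment as a sum over types, and standard large-deviation estimates show this is exponentially dominated by the type maximising the free energy $H(\mu) + p \sum_k \mu_k \log a_k$, and analogously on the $\B$-side. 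The strict pointwise inequality $\|a\|_p < \|b\|_p$ on $[1, +\infty]$ translates, via Legendre duality and the compactness of the simplex of types, into a uniform exponential gap $\eps > 0$ between the two associated rate functions.

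I would then construct the packing type by type: for every $\A^{\otimes n}$-type $\mu$, choose a $\B^{\otimes n}$-type $\nu(\mu)$ whose blocks are both large enough, i.e.\ $\prod_k a_k^{n\mu_k} \leq \prod_l b_l^{n\nu(\mu)_l}$, and numerous enough that the $\binom{n}{n\mu}$ blocks of type $\mu$ can be absorbed into the $\binom{n}{n\nu(\mu)}$ blocks of type $\nu(\mu)$, with multiplicities on each target block chosen to saturate the capacity bound. The uniform gap $\eps$ supplies the exponential slack needed in both inequalities for all $n$ large, independent of the type, and absorbs the integrality rounding coming from requiring $n\mu, n\nu$ to be integer vectors. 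A final disjointness check across the different $\mu$-types ensures that the per-type packings can be concatenated into a single assignment $m_{\mathbf{j}, \mathbf{k}}$ satisfying both the capacity constraints and the injectivity requirement globally.

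The main obstacle is this last translation step: turning the analytic hypothesis on $\ell_p$-norms into a combinatorial packing that is uniform over all types simultaneously, including atypical ones near the boundary of the simplex where the enumerative prefactor $\binom{n}{n\mu}$ is subexponential. The crucial technical input is the convexity of the cumulant-generating functions $p \mapsto p\log \|a\|_p$ and $p \mapsto p\log \|b\|_p$: their strict separation on all of $[1, +\infty]$ implies, by Legendre duality, a strict uniform separation between the entropy-weighted rate functions over the simplex of types, and it is exactly this uniform gap that powers both the existence of a suitable $\nu(\mu)$ for every $\mu$ and the combinatorial feasibility of the global matching for $n$ sufficiently large.
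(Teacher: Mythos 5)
The paper does not prove this statement itself: the proof given in the text is a one-sentence pointer to the proof of Theorem 1.1 in Kuperberg's paper, so there is no in-paper argument for me to compare against. I can only assess your sketch on its own terms, noting that it appears to be in the same spirit as what one would expect Kuperberg's argument to look like.

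Your reduction to the integer bin-packing problem is correct: the Bratteli-diagram classification of $*$-homomorphisms between finite-dimensional $C^*$-algebras gives exactly the multiplicity-matrix picture, subunitality becomes $\sum_k m_{jk}d_k\le e_j$, injectivity becomes $\sum_j m_{jk}\ge 1$ for all $k$, and shape vectors tensorise by Proposition~\ref{prop:multiplicativity}. The method-of-types framing and the Legendre-duality reading of the $\ell_p$-hypothesis are also the right ingredients. However, the step ``choose a $\B^{\otimes n}$-type $\nu(\mu)$ whose blocks are both large enough and numerous enough, [then do a] final disjointness check'' compresses what is actually the crux of the proof, and in the form stated it does not obviously go through. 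The natural candidate, the dominant $\B$-type at the Legendre-dual $p$ (the $\nu$ maximising $H(\nu)+p\sum_j\nu_j\log b_j$), may have blocks \emph{strictly smaller} than those of the dominant $\A$-type at the same $p$: take $a=(4,2)$ and $b=(5,1,1)$, which do satisfy $\|a\|_p<\|b\|_p$ for all $p\in[1,+\infty]$, yet at $p=1$ the typical $\A$-block has normalised size $4^{2/3}2^{1/3}=2^{5/3}\approx 3.175$ while the typical $\B$-block has normalised size $5^{5/7}\approx 3.156$. So $\nu(\mu)$ must be taken in the atypical (sub-exponentially rare) part of the $\B$-simplex, and it is then not automatic that these rarer $\B$-types have enough aggregate capacity, nor that they remain available once \emph{every other} $\A$-type has also claimed a share. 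You acknowledge this is the main obstacle, but the resolution you offer (``the uniform $\eps$-gap powers both the existence of a suitable $\nu(\mu)$ and the combinatorial feasibility of the global matching'') is asserted rather than argued. A concrete mechanism is needed — e.g.\ a greedy allocation over $\B$-blocks sorted by dimension together with a cumulative/majorisation inequality derived from the $\ell_p$-hypothesis via LP duality, followed by an integrality-rounding step — and without it the proof is incomplete.
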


\begin{proof}
This theorem, despite not being stated explicitly in~\cite{kuperberg}, is proven in the proof of Theorem 1.1 of~\cite{kuperberg}. 
\end{proof}

For our achievability proof, it will be useful to lift $\iota$ into a unital completely positive map.

In the statement of the following lemma, we will use the fact that it is always possible, given $\A \subseteq \cL(\cH_\A)$ a $*$-subalgebra of $\cL(\cH_\A)$ with the same identity operator as $\cL(\cH_\A)$, to construct a unital completely positive map $E_\A : \cL(\cH_\A) \rightarrow \A$ such that for all $x \in \A$, $E_\A(x) = x$. This fact is easily seen as if $\A$ is a $*$-subalgebra of $\cL(\cH_\A)$ with the same identity operator, we can write the orthogonal direct sum decompositions
\[
\cH_\A = \bigoplus_{k=1}^K \cH_{k,1}\otimes \cH_{k,2}, \quad \A = \bigoplus_{k=1}^K \cL(\cH_{k,1}) \otimes \mathds{1}_{m_k}, 
\]
with $m_k = \dim(\cH_{k,2})$ for $k \in [K]$. Then, we can take $E_\A : \cL(\cH_\A) \rightarrow \A$ defined by
\begin{equation}\label{eq:self_adjoint_conditional_expectation}
    E_\A : x \rightarrow \sum_{k=1}^K \tr_{k,2}(P_kxP_k) \otimes \frac{1}{m_k}\mathds{1}_{m_k},
\end{equation}
which is called the \emph{tracial conditional expectation on $\A$} (see e.g. Proposition 1.5 in~\cite{Wolf12}). 

\begin{lemma}\label{lemma:lift_homomorphism}
    Let  $\A \subseteq \cL(\cH_\A)$, $\B \subseteq \cL(\cH_\B)$ be two finite-dimensional unital $*$-subalgebras of $\cL(\cH_\A)$ and $\cL(\cH_\B)$ respectively, where we assume that the identity operator in $\A$ coincides with the one in $\cL(\cH_\A)$, denoted $\mathds{1}_\A$ and the one in $\B$ coincides with the one in $\cL(\cH_\B)$, denoted $\mathds{1}_\B$. Let $\iota : \A \rightarrow \B$ be an injective subunital $*$-homomorphism.
    
    Then, writing $\cH_\B^{(0)} = \supp(\iota(\mathds{1}_\A)) \subseteq \cH_\B$, we have that $P_\iota = \iota(\mathds{1}_\A) \in \cL(\cH_{\B})$ is the orthogonal projector on  $\cH_\B^{(0)}$. We denote $V : \cH_\B^{(0)} \rightarrow \cH_\B$ the isometry embedding $\cH_\B^{(0)}$ into $\cH_\B$ such that $VV^* = P_\iota$ and we define the \emph{reduced $*$-homomorphism} $\widehat{\iota} : \A \rightarrow \cL(\cH_\B^{(0)})$ by 
    \begin{equation}
        \widehat{\iota} : x \mapsto V^*\iota(x)V.
    \end{equation}
    Then $\widehat{\iota} : \A \rightarrow \cL(\cH_\B^{(0)})$ is a \emph{unital} injective $*$-homomorphism and $\widehat{\iota}(\A)$ is a unital $*$-subalgebra of $\cL(\cH_\B^{(0)})$ where both $\widehat{\iota}(\A)$ and $\cL(\cH_{\B}^{(0)})$ share the same identity operator. We will write $\widehat{\iota}^{-1} : \widehat{\iota}(\A) \rightarrow \A$ the inverse of $\widehat{\iota}$, which is itself a unital injective $*$-homomorphism.

    We write $E_\A : \cL(\cH_\A) \rightarrow \A$ and $E_{\widehat{\iota}(\A)} : \cL(\cH_\B^{(0)}) \rightarrow \widehat{\iota}(\A)$ the tracial conditional expectations on $\A$ and $\widehat{\iota}(\A)$ respectively. Then, the maps
    \begin{equation}\label{eq:lift_homomorphism}
        \begin{aligned}
            \widetilde{\iota} : \cL(\cH_\A) &\rightarrow \B \subseteq \cL(\cH_\B) \\
            x &\mapsto \iota(E_\A(x)) + \frac{\tr(x)}{\tr(\mathds{1}_\A)}(\mathds{1}_\B - P_\iota),
        \end{aligned}
    \end{equation}
    and 
    \begin{equation}
        \begin{aligned}
            \widetilde{\iota}^{-1} : \cL(\cH_\B) &\rightarrow \A \subseteq \cL(\cH_\A) \\
            x &\mapsto \widehat{\iota}^{-1}(E_{\widehat{\iota}(\A)}(V^*xV))
        \end{aligned}
    \end{equation}
     are unital completely positive and satisfy for all $x \in \A$, 
\begin{equation}\label{eq:inverse_channel}
         \widetilde{\iota}^{-1} \circ \widetilde{\iota}(x) = x.
     \end{equation}
    \end{lemma}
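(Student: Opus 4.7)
The plan is to verify the claims in the order they are stated, exploiting the fact that $\iota$ being a $*$-homomorphism forces $P_\iota = \iota(\mathds{1}_\A)$ to behave like a unit for $\iota(\A)$.

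First I would observe that since $\iota$ preserves the $*$-operation and the product, $P_\iota^2 = \iota(\mathds{1}_\A^2) = \iota(\mathds{1}_\A) = P_\iota$ and $P_\iota^* = \iota(\mathds{1}_\A^*) = P_\iota$, so $P_\iota$ is a self-adjoint idempotent, i.e., the orthogonal projector onto its range, which is by definition $\cH_\B^{(0)} = \supp(\iota(\mathds{1}_\A))$. In particular $VV^* = P_\iota$ as required. A crucial consequence used repeatedly in what follows is that for every $x \in \A$, $\iota(x) = \iota(\mathds{1}_\A x \mathds{1}_\A) = P_\iota \iota(x) P_\iota = VV^*\iota(x)VV^*$.

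Next I would verify that $\widehat{\iota}: x \mapsto V^*\iota(x)V$ is a unital injective $*$-homomorphism into $\cL(\cH_\B^{(0)})$. Unitality is immediate: $\widehat{\iota}(\mathds{1}_\A) = V^*P_\iota V = V^*VV^*V = \mathds{1}_{\cH_\B^{(0)}}$. Preservation of the product uses the identity displayed above:
\[
\widehat{\iota}(xy) = V^*\iota(x)\iota(y)V = V^*\iota(x)\,VV^*\,\iota(y)V = \widehat{\iota}(x)\widehat{\iota}(y),
\]
and $*$-preservation is trivial. For injectivity, if $\widehat{\iota}(x) = 0$ then $V\widehat{\iota}(x)V^* = VV^*\iota(x)VV^* = \iota(x) = 0$, so $x=0$ by injectivity of $\iota$. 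Consequently $\widehat{\iota}(\A)$ is a unital $*$-subalgebra of $\cL(\cH_\B^{(0)})$ sharing its identity with the ambient algebra, so the tracial conditional expectation $E_{\widehat{\iota}(\A)}$ is well defined and fixes every element of $\widehat{\iota}(\A)$, and $\widehat{\iota}^{-1}$ is itself a unital $*$-homomorphism (hence completely positive).

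Then I would check that $\widetilde{\iota}$ and $\widetilde{\iota}^{-1}$ are unital completely positive. For $\widetilde{\iota}$, both $x \mapsto \iota(E_\A(x))$ and $x \mapsto \frac{\tr(x)}{\tr(\mathds{1}_\A)}(\mathds{1}_\B - P_\iota)$ are completely positive ($E_\A$ is CP, $\iota$ is CP, and the second map is a positive scalar multiple of a positive operator applied to a CP functional), and unitality follows from $E_\A(\mathds{1}_\A) = \mathds{1}_\A$ and $\iota(\mathds{1}_\A) + (\mathds{1}_\B - P_\iota) = \mathds{1}_\B$. For $\widetilde{\iota}^{-1}$, complete positivity is clear as a composition of CP maps, and unitality follows from $V^*\mathds{1}_\B V = \mathds{1}_{\cH_\B^{(0)}}$, $E_{\widehat{\iota}(\A)}(\mathds{1}_{\cH_\B^{(0)}}) = \mathds{1}_{\cH_\B^{(0)}} = \widehat{\iota}(\mathds{1}_\A)$, and $\widehat{\iota}^{-1}(\widehat{\iota}(\mathds{1}_\A)) = \mathds{1}_\A$.

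Finally, for $x \in \A$, I would use $E_\A(x) = x$ to get $\widetilde{\iota}(x) = \iota(x) + \frac{\tr(x)}{\tr(\mathds{1}_\A)}(\mathds{1}_\B - P_\iota)$, and then
\[
V^*\widetilde{\iota}(x)V = V^*\iota(x)V + \tfrac{\tr(x)}{\tr(\mathds{1}_\A)}\bigl(V^*V - V^*P_\iota V\bigr) = \widehat{\iota}(x),
\]
since $V^*V = V^*P_\iota V = \mathds{1}_{\cH_\B^{(0)}}$. Because $\widehat{\iota}(x) \in \widehat{\iota}(\A)$, applying $E_{\widehat{\iota}(\A)}$ leaves it unchanged, and $\widehat{\iota}^{-1}\widehat{\iota}(x) = x$ gives \eqref{eq:inverse_channel}. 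The whole argument is essentially bookkeeping; the one step that one has to be careful with is the product-preservation of $\widehat{\iota}$, which is where the compression identity $\iota(x) = P_\iota\iota(x)P_\iota$ (a consequence of subunitality of $\iota$ and $\iota$ being a homomorphism) is genuinely needed.
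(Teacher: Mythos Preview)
Your proof is correct and follows essentially the same route as the paper's. The one noteworthy difference is that for the compression identity $\iota(x) = P_\iota\iota(x)P_\iota$ (needed in the product-preservation step and in the injectivity step), you obtain it directly from multiplicativity via $\iota(x) = \iota(\mathds{1}_\A x \mathds{1}_\A) = P_\iota\iota(x)P_\iota$, whereas the paper invokes its Lemma~\ref{lemma:stability_support_one} on supports of positive maps; your algebraic shortcut is slightly more direct here since $\iota$ is already a $*$-homomorphism.
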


    \begin{proof}
        First, we prove that $P_\iota = \iota(\mathds{1}_\A)$ is an orthogonal projector on $\supp(\iota(\mathds{1}_\A))$. As $\iota$ is a $*$-homomorphism, it preserves both the operation of taking the adjoint and the multiplication, so that $P_\iota^* = \iota(\mathds{1}_\A)^* = \iota(\mathds{1}_\A^*) = P_\iota$ and $P_\iota^2 = \iota(\mathds{1}_\A)^2 = \iota(\mathds{1}_\A^2) = P_\iota$. 

        We move on to prove that $\widehat{\iota} : \A \rightarrow \cL(\cH_\B^{(0)})$, with $\cH_\B^{(0)} = \supp(\iota(\mathds{1}_\A))$ is a $*$-homomorphism, which is also unital and injective. First $\widehat{\iota}$ is clearly linear. Let $x \in \A$, $$\widehat{\iota}(x)^* = (V^*\iota(x)V)^* = V\iota(x)^*V^* = V\iota(x^*)V^* = \widehat{\iota}(x^*),$$
        and for $y \in \A$, 
        \[\widehat{\iota}(xy) = V^*\iota(xy)V = V^*\iota(x)\iota(y)V \overset{(a)}{=} V^*VV^*\iota(x)VV^*\iota(y)V  =  \widehat{\iota}(x)\widehat{\iota}(y).\]
        The equality $(a)$ holds by applying Lemma~\ref{lemma:stability_support_one} on the map $\Phi = \iota$ which is positive as it is a $*$-homomorphism\footnote{Actually, $*$-homomorphisms are even completely positive~\cite{Dixmier69}.}. To show the unitality, we have
        \[
        \widehat{\iota}(\mathds{1}_\A) = V^*\iota(\mathds{1}_\A)V = V^*P_\iota V = V^*VV^*V = \mathds{1}_{\cH_\B^{(0)}}, 
        \]
        with $\mathds{1}_{\cH_\B^{(0)}}$ the identity operator in $\cL(\cH_\B^{(0)})$. We finish with the injectivity property. Let $x \in \ker(\widehat{\iota})$,
        \[
        0 = \widehat{\iota}(x) = V^*\iota(x)V.
        \]
        Hence, by Lemma~\ref{lemma:stability_support_one}
        \[
        \iota(x) = VV^*\iota(x)VV^* = 0,
        \]
        so that $x \in \ker(\iota)$, thus $x = 0$ by injectivity of $\iota$.

        Since $\widehat{\iota}$ is a unital $*$-homomorphism, $\widehat{\iota}(\A)$ is a $*$-subalgebra of $\cL(\cH_\B^{(0)})$ and has the same identity operator as $\cL(\cH_\B^{(0)})$, so that the tracial conditional expectation $E_{\widehat{\iota}(\A)}$ is well-defined by eq.~\eqref{eq:self_adjoint_conditional_expectation}. Furthermore, as $\widehat{\iota}$ is injective $\widehat{\iota}^{-1}$ is well-defined and is a unital $*$-homomorphism as well. 

        Now, we show that $\widetilde{\iota}$ and $\widetilde{\iota}^{-1}$ are unital completely positive maps. By unitality of $E_\A$, $E_{\widehat{\iota}(\A)}$ and $\widehat{\iota}^{-1}$, it is clear that $\widetilde{\iota}$ and $\widetilde{\iota}^{-1}$ are unital. Furthermore, $\widetilde{\iota}^{-1}$ is the composition of completely positive maps and is thus completely positive. Moreover, as $\iota$ was assumed to be subunital, $\iota(\mathds{1}_\A) = P_\iota \leq \mathds{1}_\B$ so that $\widetilde{\iota}$ is completely positive.

        We also have that $\widetilde{\iota}(\cL(\cH_\A)) \subseteq \B$. Indeed, as $\B$ is unital of unit $\mathds{1}_\B$ and as the range of $\iota$ is in $\B$, $\mathds{1}_\B - P_\iota \in \B$ and $\iota(E_\A(x)) \in \B$ for all $x \in \cL(\cH_\B)$. Hence $\widetilde{\iota}(\A) \subseteq \B$.

        Finally, note that $\mathds{1}_\B - P_\iota$ is the orthogonal projector on $\ker(P_\iota)$ which is orthogonal to $\supp(P_\iota) = \Rg(V)$, so that $(\mathds{1}_\B - P_\iota)V = 0$. Hence, for $x \in \A$,
        $$\widetilde{\iota}^{-1}(\widetilde{\iota}(x)) = \widehat{\iota}^{-1}(E_{\widehat{\iota}(\A)}(V^*\iota(E_\A(x))V)) = \widehat{\iota}^{-1}(E_{\widehat{\iota}(\A)}(\widehat{\iota}(x))) = \widehat{\iota}^{-1}(\widehat{\iota}(x)) = x.$$
    \end{proof}

We now prove Proposition~\ref{prop:achievability_arxiv}.

\begin{proof}[Proof of Proposition~\ref{prop:achievability_arxiv}]
Let $\F : \cL(\cH_\F) \rightarrow \cL(\cH_\F)$ and $\cG : \cL(\cH_\cG) \rightarrow \cL(\cH_\cG)$ be two idempotent channels. We can suppose without loss of generality that $\F = \widehat{\F}$ and $\cG = \widehat{\cG}$ as, by Proposition~\ref{prop:equivalence_reduced_version}, we can convert $\cG$ into $\F$ if and only if we can convert the reduced channel of $\cG$ into the one of $\F$. Therefore, by Proposition~\ref{prop:structure_idempotent_channel}, we can suppose that both $\Rg(\cG^*)$ and $\Rg(\F^*)$ are unital $*$-subalgebras of $\cL(\cH_\cG)$ and $\cL(\cH_\F)$ respectively and that the identity operators of $\cL(\cH_\F)$ and $\Rg(\F^*)$ coincide as well as those of $\cL(\cH_\cG)$ and $\Rg(\cG^*)$.
Thus, by assumption, for all $p \in [1,+\infty]$,
    \[
    \|\la(\Rg(\F^*))\|_p = \|\la(\F)\|_p < \|\la(\cG)\|_p = \|\la(\Rg(\cG^*))\|_p.
    \]
Therefore, by Theorem~\ref{thm:kuperberg_arxiv}, there is an $n \in \mathbb{N}^*$ such that there exists an injective subunital $*$-homomorphism $\iota : \Rg(\F^*)^{\otimes n} \rightarrow \Rg(\cG^*)^{\otimes n}$ between the two $*$-algebras $\Rg(\F^*)^{\otimes n}$ and $\Rg(\cG^*)^{\otimes n}$. Then, by Proposition~\ref{prop:multiplicativity}, $\Rg(\F^*)^{\otimes n} = \Rg(\F^{*\otimes n})$ and $\Rg(\cG^*)^{\otimes n} = \Rg(\cG^{*\otimes n})$, so that $\iota$ injects $\Rg(\F^{*\otimes n})$ into $\Rg(\cG^{*\otimes n})$. We then show that we have the factorisation:
\begin{equation}\label{eq:factorisation_F_star}
    \F^{*\otimes n} = \widetilde{\iota}^{-1}\circ \cG^{*\otimes n} \circ \widetilde{\iota}\circ \F^{*\otimes n},
\end{equation}
with $\widetilde{\iota}$ and $\widetilde{\iota}^{-1}$ being defined as in Lemma~\ref{lemma:lift_homomorphism}.
Indeed, for $x \in \cL(\cH_\F^{\otimes n})$, $\widetilde{\iota}(\F^{* \otimes n}(x)) \in \Rg(\cG^{* \otimes n})$.
Then, since $\cG^{*\otimes n}$ is idempotent, every element of its range is one of its fixed points, so that $\cG^{*\otimes n}(\widetilde{\iota}(\F^{* \otimes n}(x))) = \widetilde{\iota}(\F^{*\otimes n}(x))$. As $\F^{* \otimes n}(x) \in \Rg(\F^{* \otimes n})$, we can finally apply eq.~\eqref{eq:inverse_channel} to get that $\widetilde{\iota}^{-1}\widetilde{\iota}(\F^{*\otimes n}(x)) = \F^{* \otimes n}(x)$.

We then write $\E^* = \widetilde{\iota}^{-1}$ and $\D^* = \widetilde{\iota}\circ \F^{*\otimes n}$. By Lemma~\ref{lemma:lift_homomorphism}, both $\E^*$ and $\D^*$ are unital completely positive, so that their adjoint $\E$ and $\D$ are channels. Taking the adjoint of eq.~\eqref{eq:factorisation_F_star}, we finally get
\[
\F^{\otimes n} = \D \cG^{\otimes n}\E.
\]
\end{proof}

Then we can prove the achievability in Theorem~\ref{thm:zero_error_case}.

\begin{proof}[Proof of achievability in Theorem~\ref{thm:zero_error_case}]
Let $\F$, $\cG$ be idempotent channels and 
\[r < \inf_{p\in [1,+\infty]} \frac{\log(\|\la(\cG)\|_p)}{\log(\|\la(\F)\|_p)}.\]
By density of the rational numbers into the reals, there exists $k,n \in \mathbb{N}^*$ such that 
\[r \leq \frac{k}{n} < \inf_{p \in [1,+\infty]} \frac{\log(\|\la(\cG)\|_p)}{\log(\|\la(\F)\|_p)}.\]
Then, for every $p \in [1,+\infty]$, we have
\[
    \|\la(\F^{\otimes k})\|_p < \|\la(\cG^{\otimes n})\|_p,
\]
where we used Proposition~\ref{prop:multiplicativity} on the shape vectors of the tensor product of channels. 

Then, by Proposition~\ref{prop:achievability_arxiv}, there is $m \in \mathbb{N}^*$, $\E$ and $\D$ channels such that 
\[
\F^{\otimes km} = \D \cG^{\otimes nm}\E,
\]
and $r \leq k/n = km/(nm)$ is thus an achievable rate.
\end{proof}

\subsection{Converse of Theorem~\ref{thm:zero_error_case}}

We now move on to the proof of the converse of Theorem~\ref{thm:zero_error_case}. To prove this direction, we use a result of~\cite{kuperberg} which is almost a converse to Theorem~\ref{thm:kuperberg_arxiv}.

\begin{theorem}[Forward direction of Theorem 1.1 in \cite{kuperberg}]\label{thm:converse_kuperberg}
Let $\A$ and $\B$ be two finite-dimensional $*$-algebras.
If there is an injective $*$-homomorphism of $\A$ into $\B$, then for all $p \in [1,+\infty]$,
\begin{equation}
    \|\la(\A)\|_p \leq \|\la(\B)\|_p.
\end{equation}
\end{theorem}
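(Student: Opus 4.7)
The plan is to reduce the statement to a combinatorial inequality about non-negative integer multiplicities, once the $*$-homomorphism $\iota$ is put into its canonical block form.

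First, I would invoke the Artin--Wedderburn structure theorem to fix decompositions $\A \cong \bigoplus_{k=1}^K \cL(\cH_k)$ with $d_k = \dim \cH_k$, and $\B \cong \bigoplus_{l=1}^L \cL(\cK_l)$ with $e_l = \dim \cK_l$, so that the multisets $\{d_k\}$ and $\{e_l\}$ are exactly the coordinates of $\lambda(\A)$ and $\lambda(\B)$ up to reordering (the $\ell_p$-norm is insensitive to this reordering). Next, I would use the classification of $*$-homomorphisms between finite-dimensional $*$-algebras: since every irreducible $*$-representation of $\cL(\cH_k)$ is unitarily equivalent to the defining representation, composing $\iota$ with the projection onto the $l$-th simple summand of $\B$ is unitarily equivalent to a block-diagonal map of the form $(x_1,\dots,x_K) \mapsto \bigoplus_{k} (x_k \otimes \mathds{1}_{m_{l,k}}) \oplus 0$, for some non-negative integer multiplicities $m_{l,k}$. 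This forces the dimension constraint
\begin{equation*}
\sum_{k=1}^K m_{l,k}\, d_k \;\leq\; e_l, \qquad l \in [L],
\end{equation*}
with equality if and only if $\iota$ is unital on the $l$-th block.

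Then I would translate the injectivity hypothesis into the combinatorial statement that, for every $k \in [K]$, there exists some $l \in [L]$ with $m_{l,k} \geq 1$; otherwise the $k$-th simple summand of $\A$ would lie in $\ker \iota$. Equivalently, $\sum_l m_{l,k} \geq 1$ for every $k$.

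Now the inequality $\|\lambda(\A)\|_p \leq \|\lambda(\B)\|_p$ becomes elementary. For $p \in [1,+\infty)$, I would rewrite $e_l \geq \sum_k m_{l,k} d_k$ as a sum of $\sum_k m_{l,k}$ non-negative terms and apply the super-additivity of $t \mapsto t^p$ on $\mathbb{R}_{\geq 0}$, which gives $(\sum_j a_j)^p \geq \sum_j a_j^p$, yielding
\begin{equation*}
e_l^{\,p} \;\geq\; \Bigl(\sum_{k=1}^K m_{l,k}\, d_k\Bigr)^{\!p} \;\geq\; \sum_{k=1}^K m_{l,k}\, d_k^{\,p}.
\end{equation*}
Summing over $l$ and exchanging the order of summation,
\begin{equation*}
\|\lambda(\B)\|_p^{\,p} \;=\; \sum_{l=1}^L e_l^{\,p} \;\geq\; \sum_{k=1}^K d_k^{\,p}\,\Bigl(\sum_{l=1}^L m_{l,k}\Bigr) \;\geq\; \sum_{k=1}^K d_k^{\,p} \;=\; \|\lambda(\A)\|_p^{\,p},
\end{equation*}
using the injectivity bound $\sum_l m_{l,k} \geq 1$ at the last step. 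For $p = +\infty$, the argument is even shorter: pick $k^\star$ achieving $\max_k d_k$ and some $l$ with $m_{l,k^\star} \geq 1$, so that $e_l \geq m_{l,k^\star} d_{k^\star} \geq d_{k^\star}$, hence $\|\lambda(\B)\|_\infty \geq \|\lambda(\A)\|_\infty$.

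The only step that is not a one-line calculation is the classification of $*$-homomorphisms, so that is where I expect the most care to be needed; however, it is entirely standard for finite-dimensional $*$-algebras and can be cited from any text on operator algebras (e.g.~\cite{Wolf12,takesaki03}). Given that classification, the proof reduces to the super-additivity of $t\mapsto t^p$, which handles all $p \in [1,+\infty]$ uniformly.
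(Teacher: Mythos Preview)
The paper does not give its own proof of this statement; it is cited verbatim as the forward direction of Theorem~1.1 in~\cite{kuperberg}. Your argument is correct and is in fact the standard one (and essentially Kuperberg's): use Artin--Wedderburn and the classification of $*$-homomorphisms between finite-dimensional $C^*$-algebras to extract the multiplicity matrix $(m_{l,k})$ with $\sum_k m_{l,k} d_k \leq e_l$, translate injectivity into $\sum_l m_{l,k} \geq 1$ for every $k$, and then conclude via the super-additivity of $t\mapsto t^p$ on $\mathbb{R}_{\geq 0}$. All steps are sound; the only non-trivial ingredient is the block classification of $*$-homomorphisms, which you correctly flag as the point requiring a reference. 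Since the paper offers nothing to compare against beyond the citation, there is no divergence to discuss; your write-up would serve perfectly well as a self-contained replacement for the citation.
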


Note that Theorems~\ref{thm:kuperberg_arxiv} and~\ref{thm:converse_kuperberg} are not converse one from another because the inequality between the $\ell_p$-norms of the shape vectors is strict in one case but not in the other.

The converse of Theorem~\ref{thm:zero_error_case} then follows from Theorem~\ref{thm:converse_kuperberg} combined with the following proposition.

\begin{prop}\label{prop:encoding_implies_morphism}
    Let $\F : \cL(\cH_\F) \rightarrow \cL(\cH_\F)$ and $\cG : \cL(\cH_\cG) \rightarrow \cL(\cH_\cG)$ be two idempotent channels equal to their reduced channels, i.e. $\F = \widehat{\F}$ and $\cG = \widehat{\cG}$. Suppose that there are two channels $\E$, $\D$ satisfying
    \begin{equation}
        \F = \D\cG\E.
    \end{equation}
    We write $\mathds{1}_\F$ and $\mathds{1}_\cG$ the identity operator in $\cL(\cH_\F)$ and $\cL(\cH_\cG)$ respectively. To simplify our notations, we also denote $e_{\cG\E} = \cG\E(\mathds{1}_\F)^0$. Define
    \begin{equation}
        \widetilde{\cG}^* : x \mapsto e_{\cG\E}\cG^*(x)e_{\cG\E}.
    \end{equation}
    Then, $e_{\cG\E}\Rg(\cG^*)e_{\cG\E}$ is a unital $*$-subalgebra of $\cL(\cH_{\cG})$ whose identity operator is $e_{\cG\E}$ and $\widetilde{\cG}^*\D^*$ restricted to the $*$-algebra $\Rg(\F^*)$ is an injective unital $*$-homomorphism which embeds $\Rg(\F^*)$ into $e_{\cG\E}\Rg(\cG^*)e_{\cG\E}$ as a subalgebra.

    Furthermore, for all $p \in [1,+\infty]$ 
    \[\|\la(e_{\cG\E}\Rg(\cG^*)e_{\cG\E})\|_p \leq \|\la(\Rg(\cG^*))\|_p.\]
\end{prop}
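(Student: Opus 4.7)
The plan is to split the proposition into an easy structural part and a core analytical step showing that $\iota := \widetilde{\cG}^*\D^*|_{\Rg(\F^*)}$ is an injective unital $*$-homomorphism, via the production of a faithful UCP left inverse for $\iota$. For the structural claims, since $\cG = \widehat{\cG}$, Proposition~\ref{prop:structure_idempotent_channel} forces each $\rho_k$ appearing in the decomposition of $\Rg(\cG)$ to be full-rank on $\cH_{k,2}$; writing $\cG\E(\mathds{1}_\F) = \sum_k \sigma_k \otimes \rho_k$ with $\sigma_k = \tr_{k,2}(P_k\E(\mathds{1}_\F)P_k)$, the support projector becomes $e_{\cG\E} = \sum_k \sigma_k^0 \otimes \mathds{1}_{m_k}$, which lies inside $\Rg(\cG^*) = \bigoplus_k \cL(\cH_{k,1}) \otimes \mathds{1}_{m_k}$. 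Consequently $e_{\cG\E}\Rg(\cG^*)e_{\cG\E}$ is the corner $*$-subalgebra with identity $e_{\cG\E}$, and its block decomposition $\bigoplus_k \sigma_k^0\cL(\cH_{k,1})\sigma_k^0 \otimes \mathds{1}_{m_k}$ has shape vector with $k$-th entry $\mathrm{rank}(\sigma_k^0) \leq d_k$, giving the required $\ell_p$-norm inequality for every $p \in [1,+\infty]$. Unitality of $\iota$ is immediate from $\iota(\mathds{1}_\F) = e_{\cG\E}\cG^*(\mathds{1}_\cG)e_{\cG\E} = e_{\cG\E}$.

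The central step is to prove $\E^*\iota = \Id_{\Rg(\F^*)}$, so that the restriction of $\E^*$ to the corner is a UCP left inverse of $\iota$. Since $\F^* = \E^*\cG^*\D^*$ already gives $\E^*\cG^*\D^*(a) = a$ on $\Rg(\F^*)$, it suffices to show $\E^*(\cG^*\D^*(a) - e_{\cG\E}\cG^*\D^*(a)e_{\cG\E}) = 0$. Expanding $\mathds{1}_\cG = e_{\cG\E} + (\mathds{1}_\cG - e_{\cG\E})$ on both sides splits this difference into three pieces, each carrying a factor $(\mathds{1}_\cG - e_{\cG\E})$ on the left or right, so everything reduces to the sub-lemma $\E^*((\mathds{1}_\cG - e_{\cG\E})y) = \E^*(y(\mathds{1}_\cG - e_{\cG\E})) = 0$ for every $y \in \Rg(\cG^*)$. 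I would prove this by trace duality: for any $b \in \cL(\cH_\F)$, $\tr(b\,\E^*((\mathds{1}_\cG - e_{\cG\E})y)) = \tr(\cG\E(b)(\mathds{1}_\cG - e_{\cG\E})y)$ after using $\cG^*((\mathds{1}_\cG - e_{\cG\E})y) = (\mathds{1}_\cG - e_{\cG\E})y$, and $\cG\E(b)(\mathds{1}_\cG - e_{\cG\E}) = 0$ follows by decomposing $b$ as a complex combination of four PSD elements $b_i$ and applying $\cG\E(b_i) \leq \|b_i\|\cG\E(\mathds{1}_\F)$, which forces $\supp(\cG\E(b_i)) \subseteq \Rg(e_{\cG\E})$ and hence $\cG\E(b) = e_{\cG\E}\cG\E(b)e_{\cG\E}$.

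With $\E^*\iota = \Id$ in hand, injectivity of $\iota$ is automatic since $\iota(a) = 0$ implies $a = \E^*\iota(a) = 0$. The main obstacle, and the most delicate point, is the multiplicativity of $\iota$; I will handle it with a double Schwarz argument combined with faithfulness. The Schwarz defect $D := \iota(a^*a) - \iota(a)^*\iota(a) \geq 0$ lies in the corner $*$-algebra (closed under products), and chaining Schwarz through both $\iota$ and $\E^*$ together with $\E^*\iota = \Id$ forces $\E^*(D) = 0$. To deduce $D = 0$ I will establish the faithfulness of $\E^*$ on the positive cone of the corner: if $z \geq 0$ lies in $e_{\cG\E}\Rg(\cG^*)e_{\cG\E}$ and $\E^*(z) = 0$, then using $\cG^*(z) = z$ one gets $0 = \tr(\E^*(z)) = \tr(z\E(\mathds{1}_\F)) = \tr(z\,\cG\E(\mathds{1}_\F))$, which by positivity forces $z\,\cG\E(\mathds{1}_\F) = 0$ and hence $z\,e_{\cG\E} = 0$; combined with $z = e_{\cG\E}ze_{\cG\E}$ this gives $z = 0$. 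Running the same argument for $aa^*$ and polarising then upgrades $\iota(a^*a) = \iota(a)^*\iota(a)$ to $\iota(ab) = \iota(a)\iota(b)$ for all $a, b \in \Rg(\F^*)$, completing the proof.
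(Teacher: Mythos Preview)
Your proof is correct and follows essentially the same route as the paper's: both establish a left-inverse identity (you use $\E^*\iota = \Id$ on $\Rg(\F^*)$, whereas the paper uses $\E^*\cG^*\iota = \Id$ via Lemma~\ref{lemma:image_of_identity}), chain Kadison--Schwarz inequalities to force equality, and then use a faithfulness/support argument on $\cG\E(\mathds{1}_\F)$ to annihilate the Schwarz defect before invoking the multiplicative domain characterisation. Your explicit observation that the $\rho_k$ are full-rank under $\cG = \widehat{\cG}$, so that $e_{\cG\E} \in \Rg(\cG^*)$ and the corner is literally a subalgebra of $\Rg(\cG^*)$, is a mild streamlining of the paper's treatment of the corner algebra but does not change the substance of the argument.
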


The proof of this proposition uses the following lemma.

\begin{lemma}\label{lemma:image_of_identity}
Let $\Phi: \cL(\cH) \rightarrow \cL(\cH')$ be a quantum channel and write $e = \Phi(\mathds{1})^0$. Then,
\begin{enumerate}[label=\roman*]
\item[i)] for all $x \in \cL(\cH')$, $\Phi^*(x) = \Phi^*(exe)$,
\item[ii)] for all positive semidefinite $x \in \cL(\cH')$, $\Phi^*(x) = 0$ implies $exe = 0$.
\end{enumerate}
\end{lemma}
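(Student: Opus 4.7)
The plan is to treat the two parts in sequence, both reducing to already-available facts about positive maps.

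For part i), my first step is to invoke Lemma~\ref{lemma:stability_support_one} with the positive map $\Phi$ itself: since $e = \Phi(\mathds{1})^0$ is precisely the orthogonal projector onto $\supp(\Phi(\mathds{1}))$, the lemma yields $e\Phi(y)e = \Phi(y)$ for every $y \in \cL(\cH)$. Dualising with respect to the Hilbert--Schmidt inner product, I then write, for all $x \in \cL(\cH')$ and $y \in \cL(\cH)$,
\[
\langle \Phi^*(x), y\rangle = \langle x, \Phi(y)\rangle = \langle x, e\Phi(y)e\rangle = \langle exe, \Phi(y)\rangle = \langle \Phi^*(exe), y\rangle,
\]
where I used $e^* = e$ and the cyclicity of the trace. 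Since this identity holds for all $y$, non-degeneracy of the Hilbert--Schmidt pairing gives $\Phi^*(x) = \Phi^*(exe)$.

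For part ii), my approach is to pair the hypothesis $\Phi^*(x) = 0$ with the identity in the trace-preserving direction. Since $\Phi$ is trace-preserving, $\Phi^*$ is unital, but what I actually use is that testing against $\mathds{1}$ yields
\[
0 = \tr(\Phi^*(x)) = \langle \Phi^*(x), \mathds{1}\rangle = \langle x, \Phi(\mathds{1})\rangle = \tr(x\,\Phi(\mathds{1})).
\]
Now both $x$ and $\Phi(\mathds{1})$ are positive semidefinite, so $\tr(x\,\Phi(\mathds{1})) = 0$ implies via the standard cyclic-trace argument that $\Phi(\mathds{1})^{1/2} x \Phi(\mathds{1})^{1/2} = 0$, hence $x^{1/2}\Phi(\mathds{1})^{1/2} = 0$. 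This exactly says that $\supp(x)$ is orthogonal to $\Rg(\Phi(\mathds{1})^{1/2}) = \supp(\Phi(\mathds{1}))$, i.e., $ex = 0$, and by taking adjoints $xe = 0$, so $exe = 0$ as claimed.

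There is no real obstacle here; both parts are short consequences of Lemma~\ref{lemma:stability_support_one} and of the basic PSD-trace orthogonality fact. The only subtlety worth flagging is that ii) is not formally a specialisation of i) (applying i) would only yield $\Phi^*(exe)=0$, which is what we already knew), so one genuinely needs the unitality/trace-preservation pairing to extract the support information on $x$ itself.
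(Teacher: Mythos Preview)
Your proof is correct and follows essentially the same approach as the paper: for i) you both dualise Lemma~\ref{lemma:stability_support_one} via the Hilbert--Schmidt pairing, and for ii) you both reduce to $\tr(x\,\Phi(\mathds{1})) = 0$ and use PSD trace-orthogonality to conclude. The only cosmetic difference is that in ii) the paper bounds $\Phi(\mathds{1}) \geq \lambda_{\min}(\Phi(\mathds{1}))\,e$ to get $\tr(exe)=0$ directly, whereas you go through the square-root factorisation to obtain the slightly stronger conclusion $ex = 0$; both routes are standard and equally short.
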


\begin{proof}
Let $\cH_0' = \supp(\Phi(\mathds{1}))$ and $V : \cH_0' \rightarrow\cH'$ be the isometry such that $e = VV^*$. Let $y \in \cL(\cH)$ and $x \in \cL(\cH')$, then
$$\tr(y\Phi^*(x))= \tr(\Phi(y)x) = \tr(VV^*\Phi(y)VV^*x) = \tr(e\Phi(y)ex)= \tr(y\Phi^*(exe)),$$
where the second equality follows by Lemma~\ref{lemma:stability_support_one}. Therefore $\Phi^*(x) = \Phi(exe)$.

For the second statement, let $x \in \cL(\cH')$ be positive semidefinite such that $\Phi^*(x) = 0$. Then
\[
0 = \tr(\Phi^*(x)) = \tr(\Phi(\mathds{1})x) = \tr(e\Phi(\mathds{1})e x) \geq \lambda_{\min}(\Phi(\mathds{1}))\tr(exe),
\]
with $\lambda_{\min}(\Phi(\mathds{1}))$ the minimal non-zero eigenvalue of $\Phi(\mathds{1})$. Therefore, $\tr(exe) = 0$ and, by positivity of $x$, $exe = 0$.
\end{proof}

Then, we move on to the proof of Proposition~\ref{prop:encoding_implies_morphism}. This proof uses the Kadison-Schwarz inequality (eq.~(5.2) in~\cite{Wolf12}), which is a generalisation of the well-known Cauchy-Schwarz inequality to maps on operators. For every channel $\F: \cL(\cH) \rightarrow \cL(\cH')$, for all $x \in \cL(\cH')$,
\begin{equation}\label{eq:kadison-schwarz}
    \F^*(x^*)\F^*(x) \leq \F^*(x^* x).
\end{equation}

Furthermore, in this proof, we will use the notion of \emph{multiplicative domain} of linear maps. Given $\F^* : \cL(\cH') \rightarrow \cL(\cH)$ a linear map, the multiplicative domain of $\F^*$ is the set 
\[
\{x \in \cL(\cH')~|~\forall y \in \cL(\cH'),~\F^*(xy) = \F^*(x)\F^*(y) \text{ and } \F^*(yx) = \F^*(y)\F^*(x)\}.
\]
If $\F^*$ is furthermore Hermitian preserving then its multiplicative domain is closed under taking the adjoint and is therefore a $*$-algebra. Moreover, in this case, $\F^*$ acts as a (non necessarily unital) $*$-homomorphism on its multiplicative domain. 
Furthermore, when $\F^*$ satisfies the Kadison-Schwarz inequality, by Theorem 5.7 in~\cite{Wolf12}, we can write its multiplicative domain as 
\[
\{x \in \cL(\cH')~|~\F^*(x^* x) = \F^*(x^*)\F^*(x) \text{ and } \F^*(xx^*) = \F^*(x)\F^*(x^*)\}.
\]
In particular, if $\F^*$ is unital completely positive\footnote{In fact, it suffices for $\F^*$ to be subunital $2$-positive for it to satisfy the Kadison-Schwarz inequality, see Section 5.2 of~\cite{Wolf12} for more details.}, it satisfies the Kadison-Schwarz inequality (eq.~\eqref{eq:kadison-schwarz}) and we can use this second characterisation of its multiplicative domain.

\begin{proof}[Proof of Proposition~\ref{prop:encoding_implies_morphism}]
    Let $\F : \cL(\cH_\F) \rightarrow \cL(\cH_\F)$ and $\cG : \cL(\cH_\cG) \rightarrow \cL(\cH_\cG)$ be two idempotent channels equal to their reduced channels, i.e. $\F = \widehat{\F}$ and $\cG = \widehat{\cG}$, and let $\D$, $\E$ be two channels such that
    \[
    \F = \D\cG\E.
    \]
    Taking the adjoint of this equation, we get
    \[\F^* = \E^*\cG^*\D^* = \E^*\cG^*\cG^*\D^*,\]
    where the right-hand side follows from the idempotence of $\cG^*$. 
    Let $x \in \cL(\cH_\F)$, applying Lemma~\ref{lemma:image_of_identity} to the channel $\Phi = \cG\E$, we get that 
    \[
        \F^*(x) = \E^*\cG^*(e_{\cG\E}\cG^*\D^*(x)e_{\cG\E}) = \E^*\cG^*\widetilde{\cG}^*\D^*(x),
    \]
    with $\widetilde{\cG}^*$ defined in the statement of Proposition~\ref{prop:encoding_implies_morphism}. 
    Let $x \in \Rg(\F^*)$. By the assumption that $\F = \widehat{\F}$, $\Rg(\F^*)$ is a $*$-subalgebra of $\cL(\cH_\F)$, thus $x^* x \in \Rg(\F^*)$, and as $\F^*$ is idempotent, $xx^*$ is a fixed point of $\F^*$, so that we have:
    \begin{align*}
        x^* x &= \F^*(x^* x) \\
        &= \E^*\cG^*\widetilde{\cG}^*\D^*(x^* x) \\
        &\overset{(a)}{\geq} \E^*\cG^*(\widetilde{\cG}^*\D^*(x^*)\widetilde{\cG}^*\D^*(x)) \\
        &\overset{(b)}{\geq} \E^*\cG^*(\widetilde{\cG}^*\D^*(x^*))\E^*\cG^*(\widetilde{\cG}^*\D^*(x)) \\
        &= x^* x,
    \end{align*}
    where we used the Kadison-Schwarz inequality to get both $(a)$ and $(b)$. Since the left-hand side and the right-hand side of this chain of inequalities coincide, we conclude that $(a)$ and $(b)$ are in fact equalities. Therefore, 
    \[
    \E^*\cG^*(\widetilde{\cG}^*\D^*(x^* x) - \widetilde{\cG}^*\D^*(x^*)\widetilde{\cG}^*\D^*(x)) = 0,
    \]
    where, by the Kadison-Schwarz inequality,
    \[\widetilde{\cG}^*\D^*(x^* x) - \widetilde{\cG}^*\D^*(x^*)\widetilde{\cG}^*\D^*(x) \geq 0.\]
    Therefore, by the second statement in Lemma~\ref{lemma:image_of_identity} applied to the channel $\Phi  = \cG\E$, we obtain 
    \[\widetilde{\cG}^*\D^*(x^* x) - \widetilde{\cG}^*\D^*(x^*)\widetilde{\cG}^*\D^*(x) = 0.\]
    We can swap the roles of $x$ and $x^*$ in $(a)$ and $(b)$ to find that 
    \[\widetilde{\cG}^*\D^*(x x^*) - \widetilde{\cG}^*\D^*(x)\widetilde{\cG}^*\D^*(x^*) = 0.\]
    Therefore, $x$ is in the multiplicative domain of $\widetilde{\cG}^*\D^*$. Thus, $\widetilde{\cG}^*\D^*$ restricted to $\Rg(\F^*)$ is an injective unital $*$-homomorphism into the $*$-algebra $e_{\cG\E}\Rg(\cG^*)e_{\cG\E}$, which is a $*$-subalgebra of $\cL(\cH_\cG)$ and whose identity operator is $e_{\cG\E}$. It is injective as, for all $x \in \Rg(\F^*)$, 
    \[x = \F^*(x) = \E^*\cG^*\widetilde{\cG}^*\D^*(x),\]
    hence $\widetilde{\cG}^*\D^*(x) = 0$ implies $x=0$. It is unital as it maps the identity of $\cL(\cH_\F)$ to $e_{\cG\E}$ which is the identity operator of the $*$-algebra $e_{\cG\E}\Rg(\cG^*)e_{\cG\E}$. Indeed, as both $\cG$ and $\D$ are channels, $\cG^*$ and $\D^*$ are unital so that $\cG^*\D^*(\mathds{1}_\F ) = \mathds{1}_\cG$ and $\widetilde{\cG}^*\D^*(\mathds{1}_\F) = e_{\cG\E}$.

    Now, to show that $e_{\cG\E}\Rg(\cG^*)e_{\cG\E}$ is a $*$-subalgebra of $\cL(\cH_{\cG})$ which satisfies for all $p \in [1,+\infty]$, 
    ${\|\la(e_{\cG\E}\Rg(\cG^*)e_{\cG\E})\|_p \leq \|\la(\Rg(\cG^*))\|_p}$, it suffices to remark that, by Proposition~\ref{prop:structure_idempotent_channel}, we can decompose the Hilbert space $\cH_\cG$ as 
    \[
    \cH_\cG = \bigoplus_{k=1}^K \cH_{k,1} \otimes \cH_{k,2},
    \]
    and the $*$-algebra $\Rg(\cG^*)$ as
    \[
    \Rg(\cG^*) = \bigoplus_{k=1}^K \cL(\cH_{k,1}) \otimes \mathds{1}_{m_k},
    \]
    with $m_k = \dim(\cH_{k,2})$ for all $k \in [K]$. Then $\cG\E(\mathds{1}_\F) \in \Rg(\cG)$ can itself be decomposed as 
    \[
    \cG\E(\mathds{1}_\F) = \sum_{k=1}^K x_k \otimes \rho_k,
    \]
    with $\{x_k \in \cL(\cH_{k,1}) ~|~ k \in [K]\}$ and $\{\rho_k \in \D(\cH_{k,2})~|~k \in [K]\}$. Thus,
    \[e_{\cG\E} = \cG\E(\mathds{1}_\F)^0 = \sum_{k=1}^K x_k^0 \otimes \rho_k^0,\]
    and 
    \[
    e_{\cG\E}\Rg(\cG^*)e_{\cG\E} = \bigoplus_{k=1}^K x_k^0 \cL(\cH_{k,1})x_k^0 \otimes \rho_k^0.
    \]
    Hence, $e_{\cG\E}\Rg(\cG^*)e_{\cG\E}$ is a $*$-subalgebra  for all $k \in [K]$, the $k$-th coordinate of the shape vector of $e_{\cG\E}\Rg(\cG^*)e_{\cG\E}$ satisfies $$\la(e_{\cG\E}\Rg(\cG^*)e_{\cG\E})_k \leq \la(\Rg(\cG^*))_k,$$ which allows us to conclude.
\end{proof}

We can now prove the converse of Theorem~\ref{thm:zero_error_case}.

\begin{proof}[Converse of Theorem~\ref{thm:zero_error_case}]
Let $\F : \cL(\cH_\F) \rightarrow \cL(\cH_\F)$ and $\cG : \cL(\cH_\cG) \rightarrow \cL(\cH_\cG)$ be two idempotent channels, let $k,n \in \mathbb{N}^*$ such that there are two channels $\E$ and $\D$ satisfying
\[
\F^{\otimes k} = \D\cG^{\otimes n}\E.
\]
By Proposition~\ref{prop:equivalence_reduced_version}, then, if we denote $\widehat{\F^{\otimes k}}$ and $\widehat{\cG^{\otimes n}}$ the reduced channels associated to $\F^{\otimes k}$ and $\cG^{\otimes n}$, this condition is equivalent to 
\[
\widehat{\F^{\otimes n}} = \D'\widehat{\cG^{\otimes n}}\E',
\]
for some channels $\D'$, $\E'$.
Thus, by Proposition~\ref{prop:encoding_implies_morphism}, there is an injective unital $*$-homomorphism from $\Rg(\widehat{\F^{\otimes k}}^*)$ into the $*$-algebra $e_{\widehat{\cG^{\otimes n}}\E'}\Rg(\widehat{\cG^{\otimes n}}^*)e_{\widehat{\cG^{\otimes n}}\E'}$, which is itself a $*$-subalgebra of $\cL(\cH_\cG)$. Then, by Theorem~\ref{thm:converse_kuperberg} and the second statement of Proposition~\ref{prop:encoding_implies_morphism}, we have for all $p \in [1,+\infty]$:
\[
 \|\la(\widehat{\F^{\otimes k}})\|_p = \|\la(\Rg(\widehat{\F^{\otimes k}}^*))\|_p \leq \|\la(e_{\widehat{\cG^{\otimes n}}\E'}\Rg(\widehat{\cG^{\otimes n}}^*)e_{\widehat{\cG^{\otimes n}}\E'})\|_p \leq \|\la(\Rg(\widehat{\cG^{\otimes n}}^*))\|_p = \|\la(\widehat{\cG^{\otimes n}}^*)\|.
\]
By the multiplicativity property of the shape vectors and the $\ell_p$-norms with the tensor product (Proposition~\ref{prop:multiplicativity}), we get for all $p \in [1,+\infty]$:
\[\|\la(\F)\|_p^k = \|\la(\widehat{\F})\|_p^k \leq \|\la(\widehat{\cG})\|_p^k = \|\la(\cG)\|^n_p.\]
Thus, 
\[\frac{k}{n} \leq \inf_{p \in [1,+\infty]}\frac{\log(\|\la(\cG)\|_p)}{\log(\|\la(\F)\|_p)},\]
which ends the proof.
\end{proof}

Finally, we give a slightly stronger version of the converse of Theorem~\ref{thm:zero_error_case} when allowing for dimension dependent errors in the Appendix. This stronger result is an application of approximate $C^*$-algebras, which were recently introduced by Kitaev in~\cite{kitaev25}.

\section{Emulation of idempotent channels with errors}\label{sec:strong_converse}

In Theorem~\ref{thm:zero_error_case}, the converse states that it is impossible to emulate an idempotent channel $\F$ by another idempotent channel $\cG$ at a rate higher than the infimum in eq.~\eqref{eq:emul_capa} without error. In this section, we prove Theorem~\ref{thm:strong_converse} which shows that we can not achieve better rates than 
\[\inf_{p\in \{1,+\infty\}} \frac{\log(\|\la(\cG)\|_p)}{\log(\|\la(\F)\|_p)},\]
even if one allows errors. This strong converse rate is probably not tight in general, as the emulation capacity is expressed in Theorem~\ref{thm:zero_error_case} as an infimum over all $p \in [1,+\infty]$, we therefore conjecture that our strong converse rate can be improved to the infimum over all $p \in [1,+\infty]$.

Theorem~\ref{thm:strong_converse} is actually a corollary of the following theorem, which gives a lower bound on the minimal error achievable for the one-shot emulation of $\F$ by $\cG$. Note that this theorem can be seen as a generalisation of eq.~(18) in~\cite{Kretschman04} about emulation of identity channels. In fact, even in the special case of identity channels, we strengthen the bound of~\cite{Kretschman04} by a factor of $2$.
\begin{theorem}\label{thm:lower_bound_finite_blocklenght}
    Let $\F$ and $\cG$ be two idempotent channels, then
    \begin{equation}\label{eq:bound_error_approx_case}
    \frac{1}{2}\inf_{\D, \E}\|\F - \D\cG\E\|_{\diamond} \geq 1 - \min_{p \in \{1,+\infty\}}\frac{\|\lambda(\cG)\|_p}{\|\lambda(\F)\|_p}.
    \end{equation}
\end{theorem}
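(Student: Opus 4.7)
The plan is to prove each of the two inequalities $\frac{1}{2}\|\F - \D\cG\E\|_\diamond \geq 1 - \|\la(\cG)\|_p / \|\la(\F)\|_p$ for $p \in \{1, +\infty\}$ separately, by exhibiting in each case a test input $\tau_{AR}$ and a measurement operator $0 \leq M \leq \mathds{1}$ such that $\tr(M\, (\F \otimes \Id_R)(\tau)) = 1$ while $\tr(M\, (\D\cG\E \otimes \Id_R)(\tau)) \leq \|\la(\cG)\|_p / \|\la(\F)\|_p$. The diamond-norm bound then follows from $\tr(M\,\Phi(\tau)) \leq \frac{1}{2}\|\Phi(\tau)\|_1$ for any Hermitian-preserving $\Phi$ and $0 \leq M \leq \mathds{1}$. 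Before either case, I use Proposition~\ref{prop:equivalence_reduced_version} to reduce to $\F = \widehat\F$, $\cG = \widehat\cG$, so that the block decomposition of Proposition~\ref{prop:structure_idempotent_channel} applies with the $\rho^\F_k$ and $\rho^\cG_l$ full-rank.

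For $p = 1$, I set $A = \|\la(\F)\|_1$, $B = \|\la(\cG)\|_1$, pick in each block $k$ an orthonormal basis $\{|i\rangle_{k,1}\}$ of $\cH^\F_{k,1}$, and form the $A$ states $\rho_{i,k} = |i\rangle\langle i|_{k,1} \otimes \rho^\F_k$, which are fixed points of $\F$ by eq.~\eqref{eq:form_idempotent_channel}. I use the classical-quantum test state $\tau_{AR} = \frac{1}{A}\sum_{i,k} \rho_{i,k} \otimes |ik\rangle\langle ik|_R$ and the measurement $M = \sum_{i,k}(|i\rangle\langle i|_{k,1} \otimes \mathds{1}^\F_{k,2}) \otimes |ik\rangle\langle ik|_R$. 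The main estimate is $\sum_{i,k}\tr((|i\rangle\langle i|_{k,1} \otimes \mathds{1}^\F_{k,2})\, \D\cG\E(\rho_{i,k})) \leq B$. Dualizing via $\D^*$ yields a POVM $\{N_{i,k}\}$ on $\cH_\cG$ summing to $\mathds{1}_\cG$; writing $\cG\E(\rho_{i,k}) = \sum_l \omega_{i,k,l} \otimes \rho^\cG_l \in \Rg(\cG)$ and using the pairing between $\Rg(\cG)$ and $\Rg(\cG^*)$ from eq.~\eqref{eq:range_F_star}, the sum factorizes blockwise and is controlled by $\sum_l \dim \cH^\cG_{l,1} = B$.

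For $p = +\infty$, let $k^*$ index the largest block of $\F$ and $|\phi\rangle \in \cH^\F_{k^*,1} \otimes \cH_R$ be maximally entangled of Schmidt rank $a = \|\la(\F)\|_\infty$. Taking $\tau = |\phi\rangle\langle\phi| \otimes \rho^\F_{k^*}$ embedded in block $k^*$ and $M = |\phi\rangle\langle\phi| \otimes \mathds{1}^\F_{k^*,2}$, eq.~\eqref{eq:form_idempotent_channel} gives $(\F \otimes \Id_R)(\tau) = \tau$ and $\tr(M\,(\F \otimes \Id_R)(\tau)) = 1$. For the bound $\tr(M\,(\D\cG\E \otimes \Id_R)(\tau)) \leq b/a$ with $b = \|\la(\cG)\|_\infty$, I argue via Schmidt number. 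By eq.~\eqref{eq:decompo_range_reduced_channel}, any element of $\Rg(\cG) \otimes \cL(\cH_R)$ decomposes as $\sum_l \mu_l \otimes \rho^\cG_l$ with $\mu_l \in \cL(\cH^\cG_{l,1} \otimes \cH_R)$, and because the fixed local state $\rho^\cG_l$ on $\cH^\cG_{l,2}$ contributes no entanglement to the bipartition $\cH_\cG \vert \cH_R$, such a state admits a pure-state decomposition in which every vector has Schmidt rank at most $\max_l \dim \cH^\cG_{l,1} = b$. This bound is preserved by the local maps $\D \otimes \Id_R$, $P_{k^*}(\cdot)P_{k^*}$, and $\tr_{k^*,2}$; combined with the elementary fact that a pure state of Schmidt rank $r$ has squared overlap at most $r/a$ with a maximally entangled state of rank $a$, this yields the claimed bound.

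The main obstacle is the $p = +\infty$ half, specifically verifying that the fixed factor $\rho^\cG_l$ on $\cH^\cG_{l,2}$ can be absorbed into orthonormal purifications on the $\cH_\cG$ side without inflating the effective Schmidt rank beyond $\dim \cH^\cG_{l,1}$, so that the Schmidt-number bound is genuinely $b = \max_l \dim \cH^\cG_{l,1}$ and not the larger $\max_l \dim \cH^\cG_l$. Once this bookkeeping is set up cleanly, the overlap inequality is standard and the final conversion to the diamond-norm bound is immediate. The $p = 1$ half, by comparison, is a routine counting argument relying only on the block-diagonal structure of $\Rg(\cG^*)$ and the unitality of $\D^*$.
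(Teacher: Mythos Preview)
Your proposal is correct. The overall architecture---Holevo--Helstrom reduction, separate treatment of $p=1$ and $p=+\infty$, and the preliminary reduction to $\F=\widehat\F$, $\cG=\widehat\cG$---matches the paper's proof of Theorem~\ref{thm:lower_bound_finite_blocklenght}. The $p=1$ half is essentially the same as the paper's: your classical-quantum test state and measurement coincide up to the choice of local state on $\cH^\F_{k,2}$ (you use $\rho^\F_k$, the paper uses the maximally mixed state, both are fixed by $\F^*$), and your POVM-dualization argument is a mild rephrasing of the paper's factorization $\cG=\cG_2\circ\cG_1$ followed by the bound $\tr(\cG_2^*\D^*(\cdot))\le\tr(\mathds{1}_{\cG_1})=\|\la(\cG)\|_1$.

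The $p=+\infty$ half is where you genuinely diverge. The paper pre-composes with $\F_1$ (the partial-trace half of the factorization $\F=\F_2\circ\F_1$), writes the entanglement fidelity in terms of Kraus operators of $\F_1\D\cG\E=\F_1\D\cG_2\,\mathcal P\,\cG_1\E$, and applies operator Cauchy--Schwarz to the pinching projectors $\widetilde P'_b$ to extract the factor $\max_b\tr(\widetilde P'_b)=\|\la(\cG)\|_\infty$. You instead argue via the Schmidt number across the cut $\cH_\cG\,|\,\cH_R$: any state in $\Rg(\cG\otimes\Id_R)$ decomposes blockwise as $\sum_l\mu_l\otimes\rho^\cG_l$, and tensoring by the fixed local factor $\rho^\cG_l$ on the $\cH_\cG$ side does not raise the Schmidt rank beyond $\dim\cH^\cG_{l,1}$, so the Schmidt number is at most $b=\|\la(\cG)\|_\infty$; local CP maps ($\D\otimes\Id_R$, compression to block $k^*$, partial trace over $\cH^\F_{k^*,2}$) preserve this, and the standard overlap bound $|\langle\phi|\chi\rangle|^2\le(\text{Schmidt rank of }|\chi\rangle)/a$ with the rank-$a$ maximally entangled $|\phi\rangle$ gives the conclusion. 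Your approach is more conceptual (it identifies the relevant entanglement monotone and uses its monotonicity once), whereas the paper's Kraus/Cauchy--Schwarz computation is more hands-on and makes the block structure of $\cG$ explicit via the pinching $\mathcal P$. Both yield the same sharp constant $b/a$, and the ``main obstacle'' you flag---absorbing $\rho^\cG_l$ without inflating the Schmidt rank---is indeed the only nontrivial step, and it goes through for exactly the reason you state.
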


To prove this theorem, we use the Holevo-Helstrom bound applied to the problem of channel discrimination.

\begin{lemma}[see e.g. Theorem 3.52 in~\cite{watrous18}]\label{lemma:channel_Holevo}
    Let $\Phi_1,~\Phi_2 : \cL(\cH) \rightarrow \cL(\cH')$ be two quantum channels. For any choice of auxiliary system $\cL(\cH'')$ of dimension $d = \dim(\cH'')$, for any choice of positive semidefinite operator $ \mu \in \cL(\cH'\otimes\cH'')$ satisfying $\mu \leq \mathds{1}$ and of density operator $\sigma \in \D(\cH \otimes \cH'')$, 
    \begin{equation}\label{eq:uniform_Holevo}
    \frac{1}{2} \|\Phi_1 - \Phi_2\|_\diamond \geq \tr(\mu \Phi_1 \otimes \Id_d(\sigma)) - \tr(\mu \Phi_2 \otimes \Id_d(\sigma)).
    \end{equation}
\end{lemma}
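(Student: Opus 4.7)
The plan is to prove the inequality directly from the definition of the diamond norm, combined with the standard variational characterization of the trace norm for traceless Hermitian operators.

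First, I would unfold the definition of the diamond norm:
\[
\|\Phi_1 - \Phi_2\|_\diamond = \sup_{n \in \mathbb{N}^*}\ \sup_{\|x\|_1 \leq 1}\|((\Phi_1 - \Phi_2)\otimes \Id_n)(x)\|_1.
\]
Since $n = d$ together with $x = \sigma$ is one particular admissible choice in this supremum (recall that $\|\sigma\|_1 = 1$ because $\sigma$ is a density operator), it follows immediately that the diamond norm is at least $\|((\Phi_1 - \Phi_2)\otimes\Id_d)(\sigma)\|_1$.

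Second, I would set $A := ((\Phi_1 - \Phi_2)\otimes\Id_d)(\sigma)$ and exploit that $\Phi_1$ and $\Phi_2$ are trace preserving: both $\Phi_1\otimes\Id_d(\sigma)$ and $\Phi_2\otimes\Id_d(\sigma)$ are density operators, so $A$ is Hermitian with $\tr(A) = 0$. Writing the Jordan decomposition $A = A_+ - A_-$ with $A_\pm \geq 0$ and $A_+A_- = 0$, one has $\|A\|_1 = \tr(A_+) + \tr(A_-)$, and the trace-zero condition forces $\tr(A_+) = \tr(A_-) = \tfrac{1}{2}\|A\|_1$. Then for any $\mu$ satisfying $0 \leq \mu \leq \mathds{1}$,
\[
\tr(\mu A) = \tr(\mu A_+) - \tr(\mu A_-) \leq \tr(\mu A_+) \leq \tr(A_+) = \tfrac{1}{2}\|A\|_1,
\]
using in turn positivity of $\mu$ and $A_-$, and the bound $\mu \leq \mathds{1}$.

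Combining the two steps gives $\tr(\mu A) \leq \tfrac{1}{2}\|\Phi_1 - \Phi_2\|_\diamond$, which is exactly the desired inequality after expanding $A$. There is no real obstacle in this argument — it is a standard reduction — but one should be careful to invoke the trace-preserving property of $\Phi_1, \Phi_2$ in order to get the factor $\tfrac{1}{2}$ (through $\tr(A) = 0$), and to use only $0 \leq \mu \leq \mathds{1}$ rather than any POVM structure, matching the statement of the lemma.
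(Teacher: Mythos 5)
Your proof is correct and is the standard argument: restrict the supremum defining the diamond norm to $n=d$ and input $\sigma$, then use the Jordan decomposition of the traceless Hermitian operator $A=((\Phi_1-\Phi_2)\otimes\Id_d)(\sigma)$ together with $\tr(A)=0$ to get $\tr(A_+)=\tr(A_-)=\tfrac12\|A\|_1$, and bound $\tr(\mu A)\leq\tr(A_+)$ using $0\leq\mu\leq\mathds 1$. Note that the paper itself does not prove this lemma but cites it (Theorem 3.52 in Watrous), so there is no paper proof to compare against; your write-up matches the textbook derivation.
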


We can now prove Theorem~\ref{thm:lower_bound_finite_blocklenght}.

\begin{proof}[Proof of Theorem~\ref{thm:lower_bound_finite_blocklenght}]
    Let $\F : \cL(\cH_\F) \rightarrow \cL(\cH_\F)$ and $\cG : \cL(\cH_\cG)\rightarrow \cL(\cH_\cG)$ be two idempotent channels. If we denote $\widehat{\F}$ and $\widehat{\cG}$ their respective reduced channels, we showed as Proposition~\ref{prop:equivalence_reduced_version} that
    \[
    \inf_{\D,\E} \|\F - \D\cG\E\|_\diamond = \inf_{\D,\E} \|\widehat{\F} - \D\widehat{\cG}\E\|_\diamond.
    \]
    Therefore, to simplify our notations, we will suppose without loss of generality that $\F$ and $\cG$ are equal to their reduced version, i.e. $\F = \widehat{\F}$ and $\cG = \widehat{\cG}$.
    We prove separately the two following equations:
    \begin{equation}\label{eq:ratio_norm_1}
    \frac{1}{2}\inf_{\D,\E}\|\F - \D\cG\E\|_\diamond \geq 1 - \frac{\|\la(\cG)\|_1}{\|\la(\F)\|_1},
    \end{equation}
    and
    \begin{equation}\label{eq:ratio_norm_infinity}
    \frac{1}{2}\inf_{\D,\E}\|\F - \D\cG\E\|_\diamond \geq 1 - \frac{\|\la(\cG)\|_\infty}{\|\la(\F)\|_\infty},
    \end{equation}
    which taken together amount to the statement of Theorem~\ref{thm:lower_bound_finite_blocklenght}.

    To prove each equation, we use Lemma~\ref{lemma:channel_Holevo} on the channels $\Phi_1 = \F$ and $\Phi_2 = \D\cG\E$ and we  find an Ansatz $(\sigma, \mu)$ in eq.~\eqref{eq:uniform_Holevo} such that on the one hand
    \[
    \tr(\mu \F\otimes \Id_d(\sigma)) = 1,
    \]
    and, on the other hand
    \[
    \tr(\mu \D\cG\E \otimes \Id_d(\sigma)) \leq \frac{\|\lambda(\cG)\|_p}{\|\lambda(\F)\|_p},
    \]
    with $p = 1$ for one Ansatz and $p = + \infty$ for the other.

    We begin with the case $p=1$. As we saw in Proposition~\ref{prop:structure_idempotent_channel}, we can decompose $\cH_\F$ as an orthogonal direct sum
    \[
    \cH_\F = \bigoplus_{k=1}^K \cH_{k,1} \otimes \cH_{k,2},
    \]
    so that 
    \[
    \Rg(\F) = \bigoplus_{k=1}^K \cL(\cH_{k,1}) \otimes \rho_k,
    \]
    with, for all $k \in [K]$, $\rho_k \in \D(\cH_{k,2})$. To simplify our notations, we write $d_k = \dim(\cH_{k,1})$ and $m_k = \dim(\cH_{k,2})$ for all $k \in [K]$, so that $\|\la(\F)\|_1 = \sum_{k=1}^K d_k$. Now, for each $k \in [K]$, let $\{\ket{\nu}_k~|~\nu \in [d_k]\}$ be an orthonormal basis of $\cH_{k,1}$. We define our Ansatz $\sigma$ as being, basically, the sum of the maximally correlated states on the $\cH_{k,1}$ spaces for $k \in [K]$:
    \begin{equation}
        \sigma =\frac{1}{\|\la(\F)\|_1}\sum_{k=1}^K \sum_{\nu = 1}^{d_k}\proj{\nu}_k \otimes \frac{\mathds{1}_{m_k}}{m_k} \otimes \proj{\nu}_k \otimes \frac{\mathds{1}_{m_k}}{m_k},
    \end{equation}
    where $\proj{\nu}_k = |\nu\rangle_k\!\langle\nu|_k$ for notational simplicity. Then, we take 
    \begin{equation}
    \mu =  \sum_{k=1}^K \sum_{\nu = 1}^{d_k}\proj{\nu}_k \otimes \mathds{1}_{m_k} \otimes \proj{\nu}_k \otimes \mathds{1}_{m_k}.
    \end{equation}
    We now have:
    \begin{align*}
    &\tr(\mu \F\otimes \Id(\sigma)) \\
    &= \tr(\F^*\otimes \Id(\mu)\sigma) \\
    &= \tr(\mu\sigma) \\
    &= \frac{1}{\|\la(\F)\|_1}\tr(\bigl(\sum_{k=1}^K \sum_{\nu = 1}^{d_k}\proj{\nu}_k \otimes \mathds{1}_{m_k} \otimes \proj{\nu}_k \otimes \mathds{1}_{m_k}\bigr)\bigl(\sum_{k'=1}^K \sum_{\nu' = 1}^{d_{k'}}\proj{\nu'}_{k'} \otimes \frac{\mathds{1}_{m_{k'}}}{m_{k'}} \otimes \proj{\nu'}_{k'} \otimes \frac{\mathds{1}_{m_{k'}}}{m_{k'}}\bigr))\\
    &= \frac{1}{\|\la(\F)\|_1}\tr(\sum_{k=1}^K\sum_{\nu = 1}^{d_k} \proj{\nu}_k \otimes \proj{\nu}_k) \\
    &= \frac{1}{\|\la(\F)\|_1}\sum_{k=1}^K d_k \\
    &= 1,
    \end{align*}
    where, for the second equality, we used the fact that as $\proj{\nu}_k \in \cL(\cH_{k,1})$, $\proj{\nu}_k \otimes \mathds{1}_{m_k} \in \Rg(\F^*)$ for all $k \in [K]$ so that $\F^* \otimes \Id(\mu) = \mu$. For the upper bound on $\tr(\mu \D\cG\E\otimes \Id(\sigma))$, we have
    \begin{align*}
    &\tr(\mu \D\cG\E\otimes \Id(\sigma)) \\
    &= \frac{1}{\|\la(\F)\|_1}\tr(\Bigl(\sum_{\begin{aligned}k &\in [K] \\ \nu &\in [d_k]\end{aligned}} \proj{\nu}_k \otimes \mathds{1}_{m_k} \otimes \proj{\nu}_k \otimes \mathds{1}_{m_k}\Bigr)\Bigl(\sum_{\begin{aligned}k' &\in [K] \\ \nu' &\in [d_{k'}]\end{aligned}}\D\cG\E\bigl(\proj{\nu'}_{k'} \otimes \frac{\mathds{1}_{m_{k'}}}{m_{k'}}\bigr) \otimes \proj{\nu'}_{k'} \otimes \frac{\mathds{1}_{m_{k'}}}{m_{k'}}\Bigr)), 
    \end{align*}
    so that
    \begin{equation}\label{eq:product_classical_case}
        \tr(\mu \D\cG\E\otimes \Id(\sigma)) = \frac{1}{\|\la(\F)\|_1}\tr(\sum_{k=1}^K \sum_{\nu = 1}^{d_k} (\proj{\nu}_k \otimes \mathds{1}_{m_k})(\D\cG\E(\proj{\nu}_k \otimes \frac{\mathds{1}_{m_k}}{m_k}))).
    \end{equation}
    From Proposition~\ref{prop:structure_idempotent_channel}, we can decompose $\cH_{\cG}$ and $\cG$ as 
    \[
    \cH_{\cG} = \bigoplus_{k=1}^{K'} \cH'_{k,1} \otimes \cH'_{k,2},
    \]
    and 
    \[
    \cG : x \mapsto \sum_{k=1}^{K'}\tr_{k,2}(P'_kxP'_k)\otimes \rho'_k,
    \]
    with $P'_k$ the orthogonal projector on $\cH'_{k,1}\otimes \cH'_{k,2}$, $\tr_{k,2}(\cdot)$ the partial trace over $\cH'_{k,2}$ and $\rho'_k \in \D(\cH'_{k,2})$ for all $k \in [K']$. We can therefore factorise $\cG$ into two channels, one being the sum of the partial traces and the other being the sum of tensor products with the states $\rho'_k$ for $k \in [K']$. We write $\cH_{\cG_1} = \bigoplus_{k=1}^{K'}\cH'_{k,1}$ and we define $\cG_1 : \cL(\cH_{\cG}) \rightarrow \cL(\cH_{\cG_1})$ and $\cG_2 : \cL(\cH_{\cG_1}) \rightarrow \cL(\cH_{\cG})$ as
    \begin{align}\label{eq:factorisation_G}
    \cG_1 : x &\mapsto \sum_{k=1}^{K'} \tr_{k,2}(P'_k x P'_k), \\
    \cG_2 : x &\mapsto \sum_{k=1}^{K'} \widetilde{P}'_kx\widetilde{P}'_k \otimes \rho'_k,
    \end{align}
    where $\widetilde{P}'_k \in \cL(\cH_{\cG_1})$ is the orthogonal projector on $\cH'_{k,1}$ for all $k \in [K']$. Now, $\cG_1$ and $\cG_2$ are channels and we can write $\cG = \cG_2 \circ \cG_1$. Furthermore, let $\mathds{1}_{\cG_1}$ be the identity operator in $\cL(\cH_{\cG_1})$, we then have, similarly as in the proof of eq. (18) in~\cite{Kretschman04},
    \begin{align*}
    \|\la(\cG)\|_1 &= \sum_{k=1}^{K'} \dim(\cH'_{k,1}) \\
    &= \dim(\cH_{\cG_1}) \\
    &= \tr(\mathds{1}_{\cG_1}) \\
    &\overset{(a)}{=} \tr(\cG_2^*\D^*(\mathds{1}_\F)) \\
    &= \sum_{k=1}^K \sum_{\nu = 1}^{d_k}\tr(\cG_2^*\D^*(\proj{\nu}_k \otimes \mathds{1}_{m_k})) \\
    &\overset{(b)}{\geq} \sum_{k=1}^K \sum_{\nu = 1}^{d_k}\tr(\cG_2^*\D^*(\proj{\nu}_k \otimes \mathds{1}_{m_k})\cG_1\E(\proj{\nu}_k \otimes \frac{\mathds{1}_{m_k}}{m_k})) \\
    &= \sum_{k=1}^K \sum_{\nu = 1}^{d_k}\tr(\proj{\nu}_k \otimes \mathds{1}_{m_k}\D\cG_2\cG_1\E(\proj{\nu}_k \otimes \frac{\mathds{1}_{m_k}}{m_k})) \\
    &= \tr(\sum_{k=1}^K \sum_{\nu = 1}^{d_k} (\proj{\nu}_k \otimes \mathds{1}_{m_k})(\D\cG\E(\proj{\nu}_k \otimes \frac{\mathds{1}_{m_k}}{m_k}))),
    \end{align*}
    where $(a)$ follows from the fact that $\D\cG_2 : \cL(\cH_{\cG_1}) \rightarrow \cL(\cH_\F)$ is a channel, so that $\cG_2^*\D^*$ is unital. Then $(b)$ follows from the fact that for all $k \in [K]$ and $\nu \in [d_k]$, $\proj{\nu}_k \otimes \frac{\mathds{1}_{m_k}}{m_k}$ is a state so that $\tr(\cG_1\E_1(\proj{\nu}_k \otimes \frac{\mathds{1}_{m_k}}{m_k})) = 1$, hence $0 \leq \cG_1\E_1(\proj{\nu}_k \otimes \frac{\mathds{1}_{m_k}}{m_k}) \leq \mathds{1}_{\cG_1}$, and that as $\cG_2^*\D^*$ is completely positive, $\cG_2^*\D^*(\proj{\nu}_k \otimes \mathds{1}_{m_k}) \geq 0$. We then simply use that if $A\geq0$ and $B \leq \mathds{1}$, $\tr(A) \geq \tr(AB)$.
    When injecting the last equation into eq.~\eqref{eq:product_classical_case}, we find
    \[
    \tr(\mu \D\cG\E(\sigma)) \leq \frac{\|\la(\cG)\|_1}{\|\la(\F)\|_1}.
    \]

     We now move on to the proof of eq.~\eqref{eq:ratio_norm_infinity}, i.e. the case $p = +\infty$. To deal with the case where the blocks of the range of $\F^*$ have multiplicities, we factorise $\F = \F_2 \circ \F_1$ as we did for $\cG$ in eq.~\eqref{eq:factorisation_G}, so that we have the decompositions
     \begin{align*}
     \cH &= \bigoplus_{k=1}^K \cH_{k,1} \otimes \cH_{k,2}, \\
     \F: x &\mapsto \sum_{k=1}^K \tr_{k,2}(P_k xP_k) \otimes \rho_k, \\
     \F_1 : x &\mapsto \sum_{k=1}^K \tr_{k,2}(P_k xP_k), \\
     \F_2 : x &\mapsto \sum_{k=1}^K \widetilde{P}_k x \widetilde{P}_k \otimes \rho_k,
     \end{align*}
     where, for $k \in [K]$, $\rho_k \in \D(\cH_{k,2})$, $P_k$ is the orthogonal projector on $\cH_{k,1} \otimes \cH_{k,2}$ and $\widetilde{P}_k \in \cL(\cH_{\F_1})$ is the orthogonal projector on $\cH_{k,1}$ with $\cH_{\F_1} = \bigoplus_{k=1}^K \cH_{k,1}$. Then, although we have the factorisation $\F = \F_2 \circ \F_1$, we also have the identity $\F_1 = \F_1 \circ \F$. Therefore, we can write
     \[
     \|\F_1 - \F_1\D\cG\E\|_{\diamond} = \|\F_1\circ (\F - \D\cG\E)\|_\diamond \leq \|\F_1\|_\diamond \|\F - \D\cG\E\|_\diamond = \|\F - \D\cG\E\|_\diamond,
     \]
     where the last equality follows from the fact that $\F_1$ is a channel and thus, $\|\F_1\|_\diamond = 1$.
    Therefore, to prove eq.~\eqref{eq:ratio_norm_infinity}, we will prove that 
    \[
    \|\F_1 - \F_1\D\cG\E\|_\diamond \geq 1 - \frac{\|\la(\cG)\|_\infty}{\|\la(\F)\|_\infty},
    \]
     using Lemma~\ref{lemma:channel_Holevo}. Let $k_\infty \in [K]$ be such that $\cH_{k_\infty,1}$ has maximal dimension. To simplify our notations, we write $d = \dim(\cH_{k_\infty,1}) = \|\la(\F)\|_\infty$ and we let $\{\ket{\nu}~|~\nu \in [d]\}$ be an orthonormal basis of $\cH_{k_\infty,1}$. The Ansatz we choose for $\sigma$ and $\mu$ is then the maximally entangled state on $\cH_{k_\infty,1}$: 

     \begin{equation}
     \mu = \sigma = \frac{1}{d}\sum_{\nu, \eta = 1}^d\ketbra{\nu\nu}{\eta\eta}.
     \end{equation}
     Then,
    \begin{align*}
        \tr(\mu \F_1 \otimes \Id(\sigma))
        &= \frac{1}{d^2}\tr( \Bigl(\sum_{\nu, \eta = 1}^d\ketbra{\nu\nu}{\eta\eta}\Bigr)\Bigl( \sum_{\nu',\mu' = 1}^d\F_1(\ketbra{\nu'}{\eta'})\otimes \ketbra{\nu'}{\eta'}\Bigr)) \\
        &= \frac{1}{d^2}\tr(\sum_{\nu,\eta,\nu',\eta' = 1}^d |\nu\nu\rangle\!\langle\eta\eta|\nu'\nu'\rangle\!\langle\eta'\eta'|) \\
        &= \frac{1}{d^2}\bigl(\sum_{\eta,\nu'=1}^d\langle \eta\eta|\nu'\nu'\rangle\bigr)\bigl(\sum_{\nu,\eta' = 1}^d \langle\eta'\eta'|\nu\nu\rangle\bigr) \\
        &= 1,
    \end{align*}
    where we used the fact that as $\ketbra{\nu'}{\eta'} \in \cL(\cH_{k_{\infty},1})$ for all $\nu', \eta' \in [d]$, we have $\F_1(\ketbra{\nu'}{\eta'}) = \ketbra{\nu'}{\eta'}$.

    For the upper bound on $\tr( \mu \F_1\D\cG\E \otimes \Id(\sigma))$, we use techniques much similar to those of the proof of Theorem 2.3 in~\cite{Fawzi24}. 
    First, we remark that the quantity $\tr(\mu \F_1\D\cG\E \otimes \Id(\sigma))$ is actually very similar to the \emph{entanglement fidelity} of $\F_1\D\cG\E$ and can be easily expressed in terms of its Kraus operators, which we write, for now, $\{K_i~|~i \in [R]\}$\footnote{We do not suppose this Kraus decomposition to be minimal.}. Then:

    \begin{align*}
    \tr(\mu \F_1\D\cG\E \otimes \Id(\sigma)) &= \frac{1}{d^2}\tr(\Bigl( \sum_{\nu, \eta=1}^d \ketbra{\nu\nu}{\eta\eta}\Bigr)\Bigl(\sum_{i=1}^R \sum_{\nu',\eta' = 1}^d K_i\ketbra{\nu'}{\eta'}K_i^* \otimes \ketbra{\nu'}{\eta'}\Bigr)) \\
    &= \frac{1}{d^2}\sum_{i=1}^R\sum_{\nu,\eta = 1}^d\tr( \ketbra{\nu}{\eta} K_i\ketbra{\eta}{\nu}K_i^*) \\
    &= \frac{1}{d^2}\sum_{i=1}^R\sum_{\nu,\eta= 1}^{d}\bra{\eta}K_i \ketbra{\eta}{\nu}K_i^* \ket{\nu} \\
    &= \frac{1}{d^{2}} \sum_{i=1}^R |\tr(\widetilde{P}_\infty K_i)|^2,
    \end{align*}
    where we write $\widetilde{P}_\infty = \widetilde{P}_{k_\infty} = \sum_\nu \proj{\nu} \in \cL(\cH_{\F_1})$ the orthogonal projector onto $\cH_{k_\infty,1}$. 
    Now, we use the fact that the channel is of the form $\F_1\D\cG\E$. First, we factorise $\cG = \cG_2 \circ \cG_1$ using the notations of eq.~\eqref{eq:factorisation_G} and we furthermore define the pinching map $\mathcal{P} : \cL(\cH_{\cG_1}) \rightarrow \cL(\cH_{\cG_1})$ as
    \begin{equation}
        \mathcal{P}: x \mapsto \sum_{k=1}^{K'} \widetilde{P}'_k x \widetilde{P}'_k,
    \end{equation}
    with $\widetilde{P}'_k \in \cL(\cH_{\cG_1})$ the orthogonal projector on $\cH'_{k,1}$ for all $k \in [K']$. We then have the full factorisation $\cG = \cG_2\circ \mathcal{P}\circ \cG_1$. For notational simplicity, we write respectively $\{D_a\}_a$, $\{E_c\}_c$ the Kraus operators of the channels $\F_1 \circ \D\circ \cG_2$ and $\cG_1 \circ \E$. Hence, we can write Kraus operators $\{K_i\}_i$ of $\F_1\D\cG\E$ as $\{D_a \widetilde{P}_b' E_c\}_{a,b,c}$. Note that we can express $\|\la(\cG)\|_\infty$ in terms of the Kraus operators of $\mathcal{P}$ as $\|\la(\cG)\|_\infty = \max_b \tr(\widetilde{P}'_b)$.
    Therefore, we have
    \begin{align*}
    \tr(\mu\F_1\D\cG\E \otimes \Id(\sigma)) = \frac{1}{d^2}\sum_{a,b,c} |\tr(\widetilde{P}_\infty D_a\widetilde{P}'_bE_c)|^2.
    \end{align*}
    By the Cauchy-Schwarz inequality on operators, for $x$ and $y$ of adequate dimensions, we have $|\tr(x^*y)|^2 \leq \tr(x^*x)\tr(y^*y)$. 
    Hence, 
    \begin{align*}
    \tr(\mu \F_1\D\cG\E(\sigma)) &=  \frac{1}{d^2}\sum_{a,b,c} |\tr(\widetilde{P}_\infty D_a\widetilde{P}'_bE_c)|^2 \\
    &= \frac{1}{d^2}\sum_{a,b,c}|\tr(D_a\widetilde{P}'_b\widetilde{P}'_bE_c\widetilde{P}_\infty)|^2 \\
    &\leq \frac{1}{d^2}\sum_{a,b,c}\tr(D_a\widetilde{P}'_bD_a^*)\tr(\widetilde{P}_\infty E_c^*\widetilde{P}'_bE_c) \\
    &\overset{(a)}{=} \frac{1}{d^2}\sum_b\Bigl(\tr(\widetilde{P}'_b)\sum_c\tr(\widetilde{P}_\infty E_c^*\widetilde{P}'_bE_c) \Bigr)\\
    &\leq \frac{\max_b\tr(\widetilde{P}'_b)}{d^2}\sum_{b,c} \tr(\widetilde{P}_\infty E_c^*\widetilde{P}'_bE_c) \\
    &\overset{(b)}{=} \frac{\max_b\tr(\widetilde{P}'_b)}{d^2}\sum_c \tr(\widetilde{P}_\infty E_c^*E_c) \\
    &\overset{(c)}{=} \frac{\max_b\tr(\widetilde{P}'_b)}{d^2} \tr(\widetilde{P}_\infty) \\
    &= \frac{\|\la(\cG)\|_\infty}{\|\la(\F)\|_\infty}.
    \end{align*}
    For $(a)$, we used the fact that the operators $\{D_a\}_a$ are the Kraus operators of $\F_1\D\cG_2$, which is a channel and hence trace preserving. For $(b)$, we used that $\sum_b \widetilde{P}'_b = \mathds{1}_{\cG_1}$. Finally, for $(c)$, we used the Kraus relation $\sum_c E^*_cE_c = \mathds{1}_{\F}$ for the channel $\cG_1 \circ \E$. This ends the proof of eq.~\eqref{eq:ratio_norm_infinity}.
    \end{proof}

We can now present the proof of Theorem~\ref{thm:strong_converse}.

\begin{proof}[Proof of Theorem~\ref{thm:strong_converse}]
Let $\F$, $\cG$ be two idempotent channels, $(k_\nu)_{\nu \in \mathbb{N}}$, $(n_\nu)_{\nu \in \mathbb{N}}$ be two integer sequences and $\eps > 0$ satisfying the assumptions of Theorem~\ref{thm:strong_converse}.
Let $p_{\min} \in \{1,+\infty\}$ be such that 
\[
\min_{p \in \{1,+\infty\}} \frac{\log(\|\la(\cG)\|_p)}{\log(\|\la(\F)\|_p)} = \frac{\log(\|\la(\cG)\|_{p_{\min}})}{\log(\|\la(\F)\|_{p_{\min}})}.
\]
Then, we have
\[
\lim_{\nu \to \infty}\frac{k_\nu}{n_\nu} \geq \frac{\log(\|\lambda(\cG)\|_{p_{\min}})}{\log(\|\lambda(\F)\|_{p_{\min}})} + \eps.
\]
Then, for all $\nu$ big enough,
\[
\frac{k_\nu}{n_\nu} \geq \frac{\log(\|\lambda(\cG)\|_{p_{\min}})}{\log(\|\lambda(\F)\|_{p_{\min}})} + \frac{\eps}{2},
\]
so that
\[
\frac{1}{\|\lambda(\F)\|^{\frac{\eps}{2} n_\nu}_{p_{\min}}} \geq \frac{\|\lambda(\cG)\|^{n_\nu}_{p_{\min}}}{\|\lambda(\F)\|^{k_\nu}_{p_{\min}}}.
\]
As, by assumption, $n_\nu \underset{\nu \to \infty} \to \infty$ and $\|\lambda(\F)\|_{p_{\min}} > 1$, 
we get that
\begin{equation}\label{eq:limit_lower_bound}
\lim_{\nu \to \infty}\frac{\|\lambda(\cG)\|^{n_\nu}_{p_{\min}}}{\|\lambda(\F)\|^{k_\nu}_{p_{\min}}} = 0.
\end{equation}
However, by Theorem~\ref{thm:lower_bound_finite_blocklenght}, for all $\nu \in \mathbb{N}$, all encoding and decoding channels $\E_\nu$, $\D_\nu$,

\[
\frac{1}{2}\|\F^{\otimes k_\nu} - \D_\nu \cG^{\otimes n_\nu}\E_\nu\|_\diamond \geq 1 - \frac{\|\lambda(\cG^{\otimes n_\nu})\|_{p_{\min}}}{\|\lambda(\F^{\otimes k_\nu})\|_{p_{\min}}} = 1 - \frac{\|\lambda(\cG)\|^{n_\nu}_{p_{\min}}}{\|\lambda(\F)\|^{k_\nu}_{p_{\min}}},
\]
where we used for the right-hand side equality the fact that both the shape vector and the $\ell_p$-norms are compatible with the tensor product, i.e. Proposition~\ref{prop:multiplicativity}. Then, Theorem~\ref{thm:strong_converse} follows by injecting eq.~\eqref{eq:limit_lower_bound} in this last inequality.
\end{proof}

\section*{Acknowledgements}
 We thank Robert Salzmann for helpful discussions about~\cite{kitaev25} and \cite{Kretschman04}, as well as about sets of approximate fixed points of channels. We thank Satvik Singh and Arturo Konderak for pointing out a mistake in Proposition~\ref{prop:structure_idempotent_channel} in the first version of the present article. We also thank Arturo Konderak for drawing our attention to~\cite{Amato.2025}. ID and OF acknowledge funding by the European Research Council (ERC Grant AlgoQIP, Agreement No. 851716). LG acknowledges funding by National Natural Science Foundation of China (NNFSC) via grant No.12401163.
 This work started when MR was a Marie Sk\l odowska-Curie Fellow at ENS de Lyon, and he acknowledges funding from the European Union’s Horizon Research and Innovation programme, grant Agreement No. HORIZON-MSCA-2022-PF-01 (Project number: 101108117).

\bibliographystyle{plain}
    \bibliography{reference}

\appendix

\section{An application of approximate $C^*$-algebras}

In a recent article, Kitaev introduced and studied approximate $C^*$-algebras and almost idempotent quantum channels, in order to find an algebraic framework which allows to naturally express the effects of the noise on quantum information~\cite{kitaev25}. As most of our proofs in Section~\ref{sec:zero_error_case} exploit the decomposition properties of finite-dimensional $*$-algebras, we use approximate $C^*$-algebras to strengthen the converse of Theorem~\ref{thm:zero_error_case} and show that it holds even if a small error is allowed.

\begin{theorem}\label{thm:dimension_dependent_error}
    Let $\F : \cL(\cH_\F) \rightarrow \cL(\cH_\F)$ and $\cG : \cL(\cH_\cG) \rightarrow \cL(\cH_\cG)$ be two idempotent channels and $\E$, $\D$ be two channels. Let $d = \dim(\cH_\F)$, $\delta_{\max}$ be the absolute constant\footnote{Here, `absolute' means that $\delta_{\max}$ does neither depend on the dimension of $\cH_\F$ nor of $\cH_\cG$.} of Theorem~\ref{thm:kitaev} and $\la_{\min, n}$ be the minimal non-zero eigenvalue of $\cG^{\otimes n}\E(\mathds{1}_{\F^{\otimes k}})$. If 
    \[\|\F^{\otimes k} - \D\cG^{\otimes n}\E\|_\diamond \leq \frac{\delta_{\max}\lambda_{\min, n}}{6d^k},\]
    then
    \[\frac{k}{n} \leq \inf_{p \in [1,+\infty]} \frac{\log(\|\lambda(\cG)\|_p)}{\log(\|\lambda(\F)\|_p)}.\]
\end{theorem}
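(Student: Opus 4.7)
The plan is to mirror the exact converse of Theorem~\ref{thm:zero_error_case}, but replace the final appeal to Theorem~\ref{thm:converse_kuperberg} with a two-step argument: first build an approximate injective unital $*$-homomorphism from $\Rg(\F^{*\otimes k})$ into a subalgebra of $\Rg(\cG^{*\otimes n})$ in Kitaev's sense, then invoke Theorem~\ref{thm:kitaev} to promote it into an exact one, and finally apply Theorem~\ref{thm:converse_kuperberg} together with Proposition~\ref{prop:multiplicativity} to obtain the rate bound. As a preliminary step, Proposition~\ref{prop:equivalence_reduced_version} reduces the problem to the case $\F = \widehat{\F}$ and $\cG = \widehat{\cG}$, so that $\Rg(\F^{*\otimes k})$ and $\Rg(\cG^{*\otimes n})$ are genuine unital $*$-subalgebras of $\cL(\cH_\F^{\otimes k})$ and $\cL(\cH_\cG^{\otimes n})$ respectively. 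The diamond norm bound on $\F^{\otimes k} - \D\cG^{\otimes n}\E$ dualises to an operator-norm bound $\|\F^{*\otimes k}(y) - \E^*\cG^{*\otimes n}\D^*(y)\| \leq \eps \|y\|$ with $\eps = \|\F^{\otimes k} - \D\cG^{\otimes n}\E\|_\diamond$.

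Next, I would revisit the chain of inequalities at the heart of Proposition~\ref{prop:encoding_implies_morphism}. For any fixed point $x = \F^{*\otimes k}(x)$ with $\|x\| \leq 1$, the exact equality $x^*x = \E^*\cG^{*\otimes n}\widetilde{\cG^{\otimes n}}^*\D^*(x^*x)$ becomes only approximate, with error controlled by $\eps$, and similarly $\E^*\cG^{*\otimes n}\D^*(x^*)\cdot \E^*\cG^{*\otimes n}\D^*(x) \approx x^*x$ up to $O(\eps)$. The two intermediate Kadison-Schwarz inequalities are still exact, so the argument shows that the positive element
\[
A = \widetilde{\cG^{\otimes n}}^*\D^*(x^*x) - \widetilde{\cG^{\otimes n}}^*\D^*(x^*)\widetilde{\cG^{\otimes n}}^*\D^*(x),
\]
which is supported on $\Rg(e_{\cG^{\otimes n}\E})$, satisfies $\E^*\cG^{*\otimes n}(A) \leq O(\eps)\,\mathds{1}$. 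The factor $\lambda_{\min,n}$ appears exactly as in the second part of Lemma~\ref{lemma:image_of_identity}: bounding $\tr(\cG^{\otimes n}\E(\mathds{1}) \, A) \geq \lambda_{\min,n} \tr(A)$ converts the resulting trace bound into an operator-norm bound $\|A\|_\infty \leq O(\eps/\lambda_{\min,n})$, and the $x \leftrightarrow x^*$ swap gives the corresponding bound on the defect of the reverse product.

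Combining these estimates, $\widetilde{\cG^{\otimes n}}^*\D^*$ restricted to $\Rg(\F^{*\otimes k})$ is an approximate unital $*$-homomorphism whose multiplicative defect (in operator norm) is controlled by $\eps/\lambda_{\min,n}$; the factor $d^k$ enters when converting between the norm on elements of $\Rg(\F^{*\otimes k})$ and the normalisation required by Kitaev's framework, essentially because $\Rg(\F^{*\otimes k}) \subseteq \cL(\cH_\F^{\otimes k})$ has dimension at most $d^{2k}$ and one needs a uniform control over a basis. Under the hypothesis $\eps \leq \delta_{\max}\lambda_{\min,n}/(6 d^k)$, this defect falls below the threshold $\delta_{\max}$, so Theorem~\ref{thm:kitaev} supplies an exact injective unital $*$-homomorphism from $\Rg(\F^{*\otimes k})$ into a $*$-subalgebra of $\Rg(\cG^{*\otimes n})$. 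Applying Theorem~\ref{thm:converse_kuperberg} and Proposition~\ref{prop:multiplicativity} then yields $\|\lambda(\F)\|_p^k \leq \|\lambda(\cG)\|_p^n$ for every $p \in [1,+\infty]$, which is exactly the claimed rate bound.

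The main obstacle will be the second paragraph: one must carefully track how the constants $\lambda_{\min,n}$ and $d^k$ emerge from the error propagation, and then reconcile Kitaev's precise notion of approximate $*$-homomorphism between approximate $C^*$-algebras with the object produced by the Kadison-Schwarz argument. In particular, the exact statement of Theorem~\ref{thm:kitaev} dictates whether the defect should be measured by the operator norm, by a normalised Hilbert-Schmidt norm, or by a completely bounded variant; the translations between these norms, together with the passage from bounds on a fixed $x$ to uniform bounds over the unit ball of $\Rg(\F^{*\otimes k})$, are precisely where dimension-dependent factors such as $d^k$ can appear, and matching the specific constant $\delta_{\max}/6$ in the hypothesis requires a somewhat delicate bookkeeping.
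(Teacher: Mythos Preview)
Your overall strategy is essentially that of the paper, but two specific steps are misidentified, and one of them is a genuine gap.

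First, the factor $d^k$ does not come from any normalisation issue in Kitaev's framework or from the dimension of $\Rg(\F^{*\otimes k})$. It arises from exactly the same lemma that produces $\lambda_{\min,n}$: for the positive element $A$ you define, one has $\|\E^*\cG^{*\otimes n}(A)\| \leq 3\eps$ (three terms in the triangle inequality), and the quantitative version of Lemma~\ref{lemma:image_of_identity} reads
\[
\|\E^*\cG^{*\otimes n}(A)\| \;\geq\; \frac{1}{d^k}\tr(\cG^{\otimes n}\E(\mathds{1})\,A) \;\geq\; \frac{\lambda_{\min,n}}{d^k}\,\tr(A) \;\geq\; \frac{\lambda_{\min,n}}{d^k}\,\|A\|.
\]
The $d^k$ is simply the passage $\|y\| \geq d^{-k}\tr(y)$ on $\cL(\cH_\F^{\otimes k})$. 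Everything in Kitaev's $\delta$-inclusion definition is already in operator norm, so no further dimension-dependent renormalisation is needed.

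Second, and more importantly, you skip a nontrivial step when you write ``combining these estimates, $\widetilde{\cG^{\otimes n}}^*\D^*$ \dots\ is an approximate unital $*$-homomorphism''. What the Kadison--Schwarz argument delivers is only a bound on the \emph{diagonal} defect $\widetilde{\cG}^*\D^*(x^*x) - \widetilde{\cG}^*\D^*(x^*)\widetilde{\cG}^*\D^*(x)$, not on the \emph{bilinear} defect $\widetilde{\cG}^*\D^*(xy) - \widetilde{\cG}^*\D^*(x)\widetilde{\cG}^*\D^*(y)$ required by the definition of a $\delta$-homomorphism. In the exact case these are equivalent by the multiplicative-domain theorem, but in the approximate case one needs an independent polarisation argument: if a Kadison--Schwarz map $\Psi$ satisfies $\|\Psi(x^*x)-\Psi(x)^*\Psi(x)\|\leq \delta\|x\|^2$ and the same with $x,x^*$ swapped, then $\|\Psi(xy)-\Psi(x)\Psi(y)\|\leq 2\delta\|x\|\|y\|$. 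This is the content of the paper's Theorem~\ref{thm:approximate_multiplicative_domain}, and the factor $2$ it contributes, together with the $3$ above, is precisely the $6$ in the hypothesis.
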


Note that in Theorem~\ref{thm:dimension_dependent_error}, the error allowed depends on the dimension of the input space of $\F$, on the number of copies of $\F$ obtained and the number of copies of $\cG$ used as well as on the encoder used for the emulation. 
This theorem is therefore less operationally motivated than Theorems~\ref{thm:zero_error_case} and~\ref{thm:strong_converse}, but it still strengthens the converse of Theorem~\ref{thm:zero_error_case} and its proof leads to nice mathematical developments, especially on multiplicative domains (see  Theorem~\ref{thm:approximate_multiplicative_domain}).

Note that an interesting direction would be to extend our Theorem~\ref{thm:zero_error_case} to the problem of emulating almost idempotent channels one with another, but we leave this for future work.

The Appendix is divided as follows. In Subsection~\ref{subsec:approx_*_algebras}, we recall, for completeness, the \emph{Error Reduction Theorem} for approximate $C^*$-algebras. We then move on to proving Theorem~\ref{thm:approximate_multiplicative_domain} on approximate multiplicative domain of maps satisfying the Kadison-Schwarz inequality in Subsection~\ref{subsec:approx_multiplicative_domain}, which will be used in the proof of Theorem~\ref{thm:dimension_dependent_error} in Subsection~\ref{subsec:dimension_dependent_error}.

Note that this Appendix assumes familiarity with basic concepts from operator algebras.

\subsection{The Error Reduction Theorem}\label{subsec:approx_*_algebras}

We begin by recalling the definitions of $\eps$-$C^*$-algebras and $\delta$-homomorphisms.

\begin{definition}[$\eps$-$C^*$-algebra, Definition 2.1 in~\cite{kitaev25}]
    An \emph{$\eps$-Banach} algebra is a Banach space $\A$, endowed with a bilinear multiplication map $\A \times \A \rightarrow \A$ such that
    \[
    \begin{aligned}
        &\forall x,y \in \A,~~\|xy\| \leq (1+\eps)\|x\|\|y\|, \\
        &\forall x,y,z \in \A,~~\|(xy)z - x(yz)\| \leq \eps\|x\|\|y\|\|z\|.
    \end{aligned}
    \]
    A $*\eps$-\emph{Banach algebra} is a complex $\eps$-Banach algebra with a conjugate linear involution $x \mapsto x^*$ satisfying the equations:
    \begin{equation}
        \forall x,y \in \A,~\|x^*\| = \|x\|,~~(xy)^* = y^* x^*.
    \end{equation}
    An $\eps$-$C^*$-algebra is one satisfying the following property:
    \begin{equation}
        \forall x \in \A,~\|x^* x\| \geq (1-\eps)\|x\|^2.
    \end{equation}
    The unit element $\mathds{1} \in \A$ should satisfy the following conditions:
    \begin{equation}
    \|x\mathds{1} - x\| \leq \eps\|x\|,~\|\mathds{1}x - x\| \leq \eps \|x\|,~| \|\mathds{1}\| - 1| \leq \eps. 
    \end{equation}
    If $\A$ is involutive, we also have
    \begin{equation}
    \mathds{1}^* = \mathds{1}.
    \end{equation}
\end{definition}

A $\delta$-homomorphism is then naturally defined in the following way.

\begin{definition}[$\delta$-homomorphism, Definition 2.2 in~\cite{kitaev25}]
     A $\delta$-\emph{homomorphism} from an $\eps'$-Banach algebra $\A'$ to an $\eps''$-Banach algebra $\A''$ is a bounded linear map $v: \A' \rightarrow \A''$ that almost preserves the unit and the multiplication:
    \begin{equation}\label{eq:def_delta_homo}
    \begin{aligned}
    &\forall x,y \in \A',~\|v(\mathds{1}) - \mathds{1}\| \leq \delta, \\
    &\|v(xy) - v(x)v(y)\| \leq \delta \|x\|\|y\|.
    \end{aligned}
    \end{equation}
    A \emph{non-unital} $\delta$-homomorphism is defined by imposing only the second condition. In the $*$-algebra setting, it is also required that $v(x^*) = v(x)^*$.

    A $\delta$-\emph{inclusion} is a $\delta$-homomorphism such that 
    \[
    \forall x \in \A',~(1-\delta)\|x\| \leq \|v(x)\| \leq (1+\delta)\|x\|.
    \]
\end{definition}

The Error Reduction Theorem then states that if there is a $\delta$-inclusion of an exact $C^*$-algebra $\A$ into an $\eps$-$C^*$-algebra $\B$ for $\delta \leq \delta_{\max}$ and $\eps \leq \eps_{\max}$, which are constants that do not depend on the dimensions of $\A$ and $\B$, one can `correct' this $\delta$-inclusion into a less noisy approximate inclusion.

\begin{theorem}[Error Reduction Theorem, Corollary 8.3 in \cite{kitaev25}]\label{thm:kitaev}
There exists some positive constants $\eps_{max}$, $\delta_{max}$ and $c_0$ such that for all $\eps < \eps_{max}$, if a finite-dimensional $C^*$-algebra $\A$ is $\delta_{max}$-included into an $\eps$-$C^*$-algebra $\B$, then there is also a $(c_0\eps)$-inclusion. If the original inclusion is bijective, then so is the new inclusion.    
\end{theorem}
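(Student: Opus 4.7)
The plan is to mirror the zero-error converse proof (Proposition~\ref{prop:encoding_implies_morphism} combined with Theorem~\ref{thm:converse_kuperberg}), replacing the exact Kadison-Schwarz-based argument by an approximate one and then invoking Theorem~\ref{thm:kitaev} to upgrade the resulting approximate $*$-homomorphism into a genuine injective one between exact $C^*$-algebras. Applying Proposition~\ref{prop:equivalence_reduced_version} to $\F^{\otimes k}$ and $\cG^{\otimes n}$, I may assume without loss of generality that both channels coincide with their reduced versions, so that $\Rg(\F^{*\otimes k})$ and $\Rg(\cG^{*\otimes n})$ are unital $*$-subalgebras of $\cL(\cH_\F^{\otimes k})$ and $\cL(\cH_\cG^{\otimes n})$ respectively.

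Following the template of Proposition~\ref{prop:encoding_implies_morphism}, I set $e = (\cG^{\otimes n}\E)(\mathds{1}_{\F^{\otimes k}})^0$ and consider the unital completely positive map $\widetilde{\cG}^*\D^*: x \mapsto e\,\cG^{*\otimes n}(\D^*(x))\,e$. In the exact case, the Kadison-Schwarz inequality, applied twice, forces equalities that place $\Rg(\F^{*\otimes k})$ inside the multiplicative domain of $\widetilde{\cG}^*\D^*$. In our approximate setting, the diamond-norm hypothesis gives $\|\F^{*\otimes k}(x) - \E^*\cG^{*\otimes n}\D^*(x)\|$ small for $\|x\|_1$ bounded, and combining this with the quantitative variant of Lemma~\ref{lemma:image_of_identity}~(ii) (where $1/\lambda_{\min,n}$ controls the passage from a bound on $\Phi^*$ to one on the compression $e \cdot e$) produces a small operator-norm slack in each Kadison-Schwarz step. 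The $d^k$ factor in the hypothesis arises when converting the $1$-to-$1$ bound to an operator-norm bound on inputs $x^*x$ with $\|x\| \leq 1$, whose trace-class norm is at most $d^k$. This is precisely the setup for the approximate multiplicative domain result, Theorem~\ref{thm:approximate_multiplicative_domain} (to be proved in Subsection~\ref{subsec:approx_multiplicative_domain}), which upgrades slackness in Kadison-Schwarz to a bound $\|\widetilde{\cG}^*\D^*(xy) - \widetilde{\cG}^*\D^*(x)\widetilde{\cG}^*\D^*(y)\| \leq \delta_{\max}\|x\|\|y\|$ on $\Rg(\F^{*\otimes k})$, and thereby shows that $\widetilde{\cG}^*\D^*|_{\Rg(\F^{*\otimes k})}$ is a unital $\delta_{\max}$-homomorphism; a matching lower bound coming from the first Kadison-Schwarz step makes it a $\delta_{\max}$-inclusion.

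Since both $\A := \Rg(\F^{*\otimes k})$ and $\B := e\,\Rg(\cG^{*\otimes n})\,e$ are exact (finite-dimensional) $C^*$-algebras, Theorem~\ref{thm:kitaev} applies with $\eps = 0 < \eps_{\max}$, turning the $\delta_{\max}$-inclusion $\widetilde{\cG}^*\D^*|_\A$ into a $(c_0 \cdot 0) = 0$-inclusion, i.e., an isometric unital injective $*$-homomorphism $\A \hookrightarrow \B$. Composing with the inclusion $\B \hookrightarrow \Rg(\cG^{*\otimes n})$, Theorem~\ref{thm:converse_kuperberg} then yields $\|\lambda(\F^{\otimes k})\|_p \leq \|\lambda(\cG^{\otimes n})\|_p$ for all $p \in [1,+\infty]$, and the multiplicativity of shape vectors and $\ell_p$-norms (Proposition~\ref{prop:multiplicativity}) rearranges this into $k/n \leq \log(\|\lambda(\cG)\|_p)/\log(\|\lambda(\F)\|_p)$ for every $p$, which is the claim upon taking the infimum.

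The main obstacle is the quantitative approximate multiplicative domain statement, Theorem~\ref{thm:approximate_multiplicative_domain}. The equality case (Theorem~5.7 in~\cite{Wolf12}) relies on positivity of $\Phi^*(x^*x) - \Phi^*(x^*)\Phi^*(x)$, and its approximate version has to extract a polynomial dependence of the multiplication defect on the Kadison-Schwarz slack, which typically requires a Stinespring-style spectral expansion of that defect. Carefully tracking operator-versus-trace-norm conversions (yielding the $d^k$) together with the compression-by-$e$ estimate (yielding the $\lambda_{\min,n}$), and ensuring the resulting $\delta$ stays below the absolute constant $\delta_{\max}$ of Theorem~\ref{thm:kitaev}, is exactly what produces the explicit threshold $\delta_{\max}\lambda_{\min,n}/(6 d^k)$ in the hypothesis.
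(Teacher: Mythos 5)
You have not given a proof of the stated theorem; you have given a proof of a different theorem that \emph{uses} the stated theorem as a black box. The statement you were asked to prove is the Error Reduction Theorem itself (Corollary~8.3 of Kitaev's paper~\cite{kitaev25}): that a $\delta_{\max}$-inclusion of a finite-dimensional $C^*$-algebra $\A$ into an $\eps$-$C^*$-algebra $\B$ can be upgraded to a $(c_0\eps)$-inclusion. Your proposal never addresses this structural stability claim at all. Instead, it reconstructs (quite faithfully) the argument for Theorem~\ref{thm:dimension_dependent_error} in the appendix --- building an approximate $*$-homomorphism $\widetilde{\cG}^*\D^*$ on $\Rg(\F^{*\otimes k})$ via the approximate Kadison-Schwarz slack, estimating its defect with the $d^k$ and $\lambda_{\min,n}$ factors, and then, crucially, \emph{invoking Theorem~\ref{thm:kitaev}} to convert the approximate inclusion into an exact one before applying Kuperberg's Theorem~\ref{thm:converse_kuperberg}. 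In other words, the proposal is circular as a proof of Theorem~\ref{thm:kitaev}: it assumes the very conclusion it is supposed to establish.

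It is also worth noting that the present paper itself does not prove Theorem~\ref{thm:kitaev}; it is an external deep result, imported from~\cite{kitaev25}, where its proof rests on the whole apparatus of approximate $C^*$-algebras developed there (approximate spectral theory, stabilization of almost-idempotents and almost-projections, and an iterative error-reduction scheme). There is no way to derive it from the ingredients internal to this paper (Kadison-Schwarz, conditional expectations, Kuperberg's shape-vector monotonicity). If the goal is to prove Theorem~\ref{thm:dimension_dependent_error}, your argument is on target and essentially matches the paper's approach; but as a proof of Theorem~\ref{thm:kitaev} it misses the entire content of the statement.
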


Note that as every $C^*$-algebra is a $0$-$C^*$-algebra, Theorem~\ref{thm:kitaev} states that when we have a $\delta_{\max}$-inclusion between two $C^*$-algebras, it is possible to lift it to an \emph{exact} inclusion. This can be interpreted as a case of \emph{complete correction} in the sense that the reduction of the error parameter $\delta$ is total.

\subsection{Approximate Multiplicative Domains}\label{subsec:approx_multiplicative_domain}

We used in the proof of the converse of Theorem~\ref{thm:zero_error_case} the fact that, for a map $\F^*: \cL(\cH') \rightarrow \cL(\cH)$, which satisfies the Kadison-Schwarz inequality, its multiplicative domain can be expressed as

\[\{x \in \cL(\cH')~|~\F^*(x^*x) = \F^*(x^*)\F^*(x) \text{ and } \F^*(xx^*) = \F^*(x)\F^*(x^*)\}.\]

In order to prove Theorem~\ref{thm:dimension_dependent_error}, we have to extend this result to the approximate case. For $\F^*: \cL(\cH') \rightarrow \cL(\cH)$ a linear map satisfying the Kadison-Schwarz inequality, we define, for $\delta \geq 0$ the following sets:

 \[
 \begin{aligned}
 \mathcal{C}_\delta^R &= \{x \in \cL(\cH')| \|\F^*(x^* x) - \F^*(x^*)\F^*(x)\| \leq \delta\|x\|^2\}, \\
 \mathcal{C}_{\delta}^L &= \{x \in \cL(\cH') |
 \|\F^*(xx^*) - \F^*(x)\F^*(x^*)\| \leq \delta\|x\|^2\}, \\ 
 \mathcal{C}_\delta &= \mathcal{C}_\delta^R \cap \mathcal{C}_\delta^L.
 \end{aligned}
 \] 

We can show the following theorem.

\begin{theorem}\label{thm:approximate_multiplicative_domain}
Let $\F^*: \cL(\cH') \rightarrow \cL(\cH)$ be a linear map satisfying the Kadison-Schwarz inequality. Then, for all $\delta > 0$, $\F^*$ is $2\delta$-multiplicative on $\mathcal{C}_\delta$, i.e.

\begin{equation}
\forall x,y \in \mathcal{C}_\delta, \|\F^*(xy) - \F^*(x)\F^*(y)\| \leq 2\delta\|x\|\|y\|.
\end{equation}
More precisely, we have that

\begin{equation}
\forall x \in \mathcal{C}^L_\delta, \forall y \in \mathcal{C}^R_\delta, \|\F^*(xy) - \F^*(x)\F^*(y)\| \leq 2\delta\|x\|\|y\|.
\end{equation}
\end{theorem}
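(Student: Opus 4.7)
The plan is to convert the Kadison--Schwarz inequality into a one-parameter quadratic operator constraint by plugging in the linear combination $u = x^* + \lambda y$, and then to extract a norm bound on the defect
\[
B \;:=\; \F^*(xy) - \F^*(x)\F^*(y)
\]
from that constraint via the numerical-radius-to-norm inequality; the factor~$2$ in the statement will come from this last step.

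First I would note that $\F^*$ satisfies Kadison--Schwarz, so $\F^*(u^*u) \geq \F^*(u)^*\F^*(u) \geq 0$ for every $u$, which makes $\F^*$ positive and hence Hermitian preserving (in particular $\F^*(v^*) = \F^*(v)^*$). Fix $x \in \mathcal{C}_\delta^L$ and $y \in \mathcal{C}_\delta^R$, and set
\[
A \;:=\; \F^*(xx^*) - \F^*(x)\F^*(x)^*, \qquad C \;:=\; \F^*(y^*y) - \F^*(y)^*\F^*(y),
\]
so that $A, C \geq 0$ with $\|A\| \leq \delta \|x\|^2$ and $\|C\| \leq \delta \|y\|^2$. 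Expanding Kadison--Schwarz applied to $u = x^* + \lambda y$ and using Hermitian preservation to match the cross terms with $B$ and $B^*$ produces
\[
A + \lambda B + \bar\lambda\, B^* + |\lambda|^2 C \;\geq\; 0 \qquad \text{for all } \lambda \in \C.
\]

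Next I would test this operator inequality against an arbitrary unit vector $\xi$, setting $\phi(\cdot) := \langle \xi, \cdot\, \xi\rangle$. Since $\phi(B^*) = \overline{\phi(B)}$, this yields the scalar inequality
\[
\phi(A) + 2\operatorname{Re}\!\bigl(\lambda\, \phi(B)\bigr) + |\lambda|^2 \phi(C) \;\geq\; 0
\]
for every $\lambda \in \C$. Optimising over $\lambda$ (rotating its phase to align with $\phi(B)$, and treating the degenerate case $\phi(C) = 0$ by a limiting argument that forces $\phi(B) = 0$) gives $|\phi(B)|^2 \leq \phi(A)\,\phi(C) \leq \|A\|\,\|C\|$. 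Taking the supremum over unit $\xi$, the numerical radius $w(B) := \sup_{\|\xi\|=1}|\langle \xi, B\xi\rangle|$ therefore satisfies
\[
w(B) \;\leq\; \sqrt{\|A\|\,\|C\|} \;\leq\; \delta\,\|x\|\,\|y\|.
\]

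Finally, invoking the classical inequality $\|B\| \leq 2\,w(B)$, valid for every bounded operator on a Hilbert space, I would conclude $\|B\| \leq 2\delta\,\|x\|\,\|y\|$, which is the claim. The main delicate point is precisely this factor of~$2$: since $B$ is not self-adjoint in general, the sharper identity $\|B\| = w(B)$ is unavailable, and this is unavoidable with a purely Kadison--Schwarz argument. The asymmetric strengthening of the statement (only $x \in \mathcal{C}_\delta^L$ together with $y \in \mathcal{C}_\delta^R$ is needed) drops out for free, because the proof only ever uses the bound on $\|A\|$ for the first slot and the bound on $\|C\|$ for the second.
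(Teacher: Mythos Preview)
Your proof is correct and follows the same core strategy as the paper: apply Kadison--Schwarz to a linear combination of $x^*$ and $y$, extract a quadratic positivity constraint in the parameter, and deduce a bound on the defect $B$ from a discriminant-type argument. The difference lies only in the final packaging. The paper restricts to a real parameter $t$, which yields the bound $\|B+B^*\|\le 2\delta\|x\|\|y\|$; it then repeats the argument with $y$ replaced by $iy$ to control $\|B-B^*\|$, and combines the two via $B=h_1+ih_2$ and the triangle inequality. You instead allow a complex parameter from the outset, obtaining the numerical-radius bound $w(B)\le\delta\|x\|\|y\|$ in one pass, and then invoke the standard inequality $\|B\|\le 2\,w(B)$. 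Your route is slightly more streamlined and makes the origin of the factor~$2$ transparent (loss from numerical radius to norm for non-normal operators), while the paper's version is a touch more elementary in that it avoids appealing to the numerical-radius inequality.
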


The proof of this theorem uses the following lemma.

\begin{lemma}\label{lemma:operator_polynomials}
Let $\{x_k \in \cL(\cH)~|~k \in [K]\}$ and $\{\widetilde{x}_k \in \cL(\cH)~|~ k \in [K]\}$ be two sets of operators such that for all $k$ even,
\[x_k \leq \widetilde{x}_k,\]
and for $k$ odd,
\[x_k = \widetilde{x}_k.\]
Then, for all $t \in \mathbb{R}$,
\begin{equation}
\sum_{k=1}^K t^k x_k \leq \sum_{k=1}^K t^k \widetilde{x}_k.
\end{equation}
\end{lemma}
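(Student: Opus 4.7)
The plan is to simply subtract the two sums and read off the sign of each term. Consider the difference
\[
\Delta(t) := \sum_{k=1}^K t^k \widetilde{x}_k - \sum_{k=1}^K t^k x_k = \sum_{k=1}^K t^k (\widetilde{x}_k - x_k).
\]
The hypothesis splits the indices $k \in [K]$ into two regimes. For odd $k$, we have $\widetilde{x}_k - x_k = 0$, so these terms contribute nothing to $\Delta(t)$ no matter what sign $t^k$ has. Thus only the even indices remain, and the task reduces to checking that $\sum_{k \text{ even}} t^k (\widetilde{x}_k - x_k) \geq 0$.

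For even $k$, I would observe two things. First, $t^k = (t^2)^{k/2} \geq 0$ for every real $t$, which is the key point that makes the restriction "$x_k = \widetilde{x}_k$ for $k$ odd" exactly the right hypothesis: the cancellation removes precisely the terms where $t^k$ could be negative. Second, $\widetilde{x}_k - x_k$ is positive semidefinite by assumption, and multiplying a positive semidefinite operator by a nonnegative scalar preserves positive semidefiniteness. Summing positive semidefinite operators yields a positive semidefinite operator, hence $\Delta(t) \geq 0$, which is exactly the claim.

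There is no genuine obstacle in this proof: the lemma is essentially a bookkeeping observation about the signs of monomials in $t$ under the partial order on Hermitian operators. The only thing to be careful about is writing the argument in a way that makes it transparent why the odd-index equality hypothesis is not a typo or an oversight but a necessary condition for the statement to be meaningful at negative $t$.
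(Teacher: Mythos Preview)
Your proof is correct and is essentially identical to the paper's own proof: both compute the difference $\sum_k t^k(\widetilde{x}_k - x_k)$, drop the odd-index terms using the equality hypothesis, and conclude from $t^k \geq 0$ for even $k$ together with $\widetilde{x}_k - x_k \geq 0$.
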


\begin{proof}[Proof of Lemma~\ref{lemma:operator_polynomials}]
    Let $t \in \mathbb{R}$:
    \[
    \sum_{k=1}^K t^k \widetilde{x}_k - \sum_{k=1}^K t^k x_k = \sum_{k \in [K],~k \text{ even}} t^k (\widetilde{x}_k - x_k) \geq 0,
    \]
    where the right-hand side inequality follows from the assumption that $\widetilde{x}_k \geq x_k$ for all $k$ even and also that, for all $t\in \R$, $t^k \geq 0$ for $k$ even.
\end{proof}

\begin{proof} [Proof of Theorem \ref{thm:approximate_multiplicative_domain}]
    This proof generalises the ones of Theorems 5.4 and 5.7 in \cite{Wolf12}. In this proof, we let $\mathds{1}$ denote the identity operator of $\cL(\cH')$.

    We have that $x \in \mathcal{C}^L_\delta$ if and only if $x^* \in \mathcal{C}^R_\delta$ and vice versa. Therefore, it suffices to show that for all $x,y \in \mathcal{C}^R_\delta$, 
    \begin{equation}
    \|\F^*(x^* y) - \F^*(x^*)\F^*(y)\| \leq 2\delta \|x\|\|y\|.
    \end{equation}
    Let $x,y\in \mathcal{C}^R_\delta$, $t \in \R$ and $z = tx + y$. As $\F^*$ satisfies the Kadison-Schwarz inequality,
    \[0 \leq \F^*(z^* z) - \F^*(z^*)\F^*(z).\]
    If we develop the right-hand side, we get:
    \[
    0 \leq t^2 Q + tV + W,
    \]
    with
    \[
    \begin{aligned}
    Q &= \F^*(x^* x) - \F^*(x^*)\F^*(x), \\
    W &= \F^*(y^* y) - \F^*(y^*)\F^*(y), \\
    v &= \F^*(x^* y) - \F^*(x^*)\F^*(y), \\
    V &= v + v^*.
    \end{aligned}
    \]

    Then, as $\F^*$ satisfies the Kadison-Schwarz inequality, $0 \leq Q$ and $0 \leq W$. Furthermore, $Q \leq \|Q\|\mathds{1} \leq \delta\|x\|^2\mathds{1}$ and $W \leq \|W\|\mathds{1} \leq \delta\|y\|^2\mathds{1}$ by assumption. Therefore, by Lemma~\ref{lemma:operator_polynomials}, for all $t \in \R$:

    \[0 \leq t^2Q + tV+W \leq t^2\delta\|x\|^2\mathds{1} + t V + \delta \|y\|^2\mathds{1}.\]
    As $V = v + v^*$ is Hermitian, we have $V = UDU^*$ with $U$ a unitary matrix and $D$ a diagonal matrix with real coefficients. Conjugating  the last inequality by $U$, we get

    \[0 \leq t^2\delta\|x\|^2\mathds{1} + tD + \delta \|y\|^2\mathds{1}.\]

    The right-hand side of this equation is simply a diagonal operator, whose diagonal coefficients are quadratic polynomials in $t$. As this equation gives us that this operator is positive semidefinite for all $t \in \R$, each of these quadratic polynomials is always non-negative, and can thus have at most a single root. Therefore, if we write $\{D_i\}_i$ the diagonal entries of $D$, the discriminant of the polynomial
    \[\delta \|x\|^2 t^2 + D_i t + \delta \|y\|^2,\]
    has to be non-positive for all indices $i$, that is
    \[D_i^2 \leq 4\delta^2 \|x\|^2 \|y\|^2.\]
    So that: 
    \[|D_i| \leq 2 \delta \|x\|\|y\|.\]
    As the Schatten $\infty$-norm of $V = UDU^*$ is $\|V\| = \max_i |D_i|$, we have:
    \[ \|v + v^*\| = \|V\| \leq 2\delta\|x\|\|y\|.\]
    We can apply the same line of argument replacing $y$ with $iy$ to find 
    \begin{equation}\label{eq:inequ_for_iX}
    \|iv -iv^*\| \leq 2\delta\|x\|\|y\|.
    \end{equation}
    We can decompose $v = h_1 + ih_2$ with $h_1$ and $h_2$ Hermitian operators\footnote{Taking $h_1 = \frac{1}{2}(v+v^*)$, $h_2 = \frac{1}{2i}(v - v^*)$.}, to find:
    \[\|v + v^*\| = 2\|h_1\| \leq 2\delta\|x\|\|y\|,\]
    \[\|iv - iv^*\| = 2\|h_2\| \leq 2\delta\|x\|\|y\|.\]
    Therefore,
    \begin{equation}\label{eq:inequality_v_2delta}
    \|v\| \leq \|h_1\| + \|h_2\| \leq 2 \delta\|x\|\|y\|,
    \end{equation}
    which is what we claimed.  
\end{proof}

\subsection{Proof of Theorem~\ref{thm:dimension_dependent_error}}\label{subsec:dimension_dependent_error}

Finally, we prove in this subsection Theorem~\ref{thm:dimension_dependent_error} in the same way as we proved the converse of Theorem~\ref{thm:zero_error_case}, that is by first proving the one-shot case and then using the compatibility of the shape vector and the $\ell_p$-norms with the tensor product to conclude. 

\begin{prop}\label{prop:one-shot_dimension_dependent_error}
    Let $\F : \cL(\cH_\F) \rightarrow \cL(\cH_\F)$ and $\cG: \cL(\cH_\cG) \rightarrow \cL(\cH_\cG)$ be two idempotent channels equal to their reduced channels, i.e. $\F = \widehat{\F}$ and $\cG = \widehat{\cG}$, such that there exists two channels $\E$ and $\D$ satisfying \begin{equation}\label{eq:one_shot_dimeension_dependent_error}
    \|\F - \D\cG\E\|_\diamond \leq \frac{\delta_{\max}\la_{\min,1}}{6d},
    \end{equation}
    with $d = \dim(\cH_\F)$, $\delta_{\max}$ is the constant of Theorem~\ref{thm:kitaev} and $\la_{\min,1}$ is the minimal non-zero eigenvalue of $\cG\E(\mathds{1}_\F)$. 
    We write $e_{\cG\E} = \cG\E(\mathds{1}_\F)^0$ the orthogonal projector on the support $\supp(\cG\E(\mathds{1}_\F))$ and we define
    \[
    \widetilde{\cG}^* : x \mapsto e_{\cG\E}\cG^*(x)e_{\cG\E}.
    \]

    Then, $\widetilde{\cG}^*\D^*$ is a $\delta_{\max}$-inclusion from the $*$-algebra $\Rg(\F^*)$ into the $*$-algebra $e_{\cG\E}\Rg(\cG^*)e_{\cG\E}$. 
    By the Error Reduction Theorem~\ref{thm:kitaev}, there is therefore an injective unital $*$-homomorphism embedding $\Rg(\F^*)$ into $e_{\cG\E}\Rg(\cG^*)e_{\cG\E}$,
    thus, for all $p \in [1,+\infty]$,
    \[
    \|\la(\F)\|_p \leq \|\la(\cG)\|_p.
    \]
\end{prop}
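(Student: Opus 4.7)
The plan is to mirror the proof of Proposition~\ref{prop:encoding_implies_morphism}, replacing each exact equality by a quantitative norm estimate so that the linear map $v := \widetilde{\cG}^*\D^*$, restricted to $\Rg(\F^*)$, becomes a $\delta_{\max}$-inclusion into the $C^*$-algebra $e_{\cG\E}\Rg(\cG^*)e_{\cG\E}$. Once this is in place, the Error Reduction Theorem~\ref{thm:kitaev}, applied to two genuine (i.e.\ $0$-)$C^*$-algebras, upgrades $v$ to an exact injective unital $*$-homomorphism; the inequality on shape vectors then follows from Theorem~\ref{thm:converse_kuperberg} combined with the bound $\|\la(e_{\cG\E}\Rg(\cG^*)e_{\cG\E})\|_p \leq \|\la(\cG)\|_p$ already proved in Proposition~\ref{prop:encoding_implies_morphism}.

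Write $\eta := \|\F - \D\cG\E\|_\diamond$, so that by duality $\|\F^*(z) - \E^*\cG^*\D^*(z)\|_\infty \leq \eta\|z\|_\infty$ for every $z$. Two of the three conditions of a $\delta$-inclusion are direct. First, $v$ is subunital completely positive (hence Kadison-Schwarz) and $v(\mathds{1}_\F) = e_{\cG\E}$ holds \emph{exactly} because both $\cG^*$ and $\D^*$ are unital, so $v$ is exactly unital as a map into $e_{\cG\E}\Rg(\cG^*)e_{\cG\E}$ whose identity is $e_{\cG\E}$. Second, the exact identity $\E^*\cG^* v = \E^*\cG^*\D^*$, obtained from Lemma~\ref{lemma:image_of_identity}(i) applied to $\cG\E$ together with the idempotency $\cG^*\cG^* = \cG^*$, combined with $x = \F^*(x)$ for $x \in \Rg(\F^*)$ and $\|\E^*\cG^*\|_{\infty\to\infty} \leq 1$, gives $\|x\|_\infty \leq \|v(x)\|_\infty + \eta\|x\|_\infty$; the matching upper bound $\|v(x)\|_\infty \leq \|x\|_\infty$ is immediate from subunital complete positivity. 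Since $\eta \leq \delta_{\max}$, this yields the approximate isometry condition.

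The main obstacle is the approximate multiplicativity, which I would obtain through Theorem~\ref{thm:approximate_multiplicative_domain}: it suffices to show that every $x \in \Rg(\F^*)$ belongs to $\mathcal{C}^R_{\delta_{\max}/2} \cap \mathcal{C}^L_{\delta_{\max}/2}$ for $v$. Setting $y := v(x^*x) - v(x^*)v(x) \geq 0$ (Kadison-Schwarz for $v$), applying $\E^*\cG^*$ and using the identity $\E^*\cG^* v = \E^*\cG^*\D^*$ together with Kadison-Schwarz for $\E^*\cG^*$ gives
\[
0 \leq \E^*\cG^*(y) \leq \E^*\cG^*\D^*(x^*x) - \E^*\cG^*\D^*(x^*)\E^*\cG^*\D^*(x),
\]
whose right-hand side, by the $\eta$-proximity of $\E^*\cG^*\D^*$ to $\F^*$ and the $\F^*$-invariance of $x$ and $x^*x$, has operator norm at most $(3\eta + \eta^2)\|x\|_\infty^2$. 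Since $y \geq 0$ is supported in the corner $e_{\cG\E}\cL(\cH_\cG)e_{\cG\E}$, the inequalities $\tr(\cG\E(\mathds{1}_\F)y) \geq \lambda_{\min,1}\tr(y)$ and $\tr(\cG\E(\mathds{1}_\F)y) = \tr(\E^*\cG^*(y)) \leq d\|\E^*\cG^*(y)\|_\infty$ combine into
\[
\|y\|_\infty \leq \tr(y) \leq \frac{d(3\eta + \eta^2)}{\lambda_{\min,1}}\|x\|_\infty^2 \leq \frac{\delta_{\max}}{2}\|x\|_\infty^2,
\]
where the last step uses $\eta \leq \frac{\delta_{\max}\lambda_{\min,1}}{6d}$ and absorbs the negligible $\eta^2$ contribution. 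The symmetric argument with $xx^*$ in place of $x^*x$ handles the $\mathcal{C}^L$ condition, after which Theorem~\ref{thm:approximate_multiplicative_domain} delivers $\|v(ab) - v(a)v(b)\|_\infty \leq \delta_{\max}\|a\|_\infty\|b\|_\infty$ on $\Rg(\F^*)$, completing the verification that $v$ is a $\delta_{\max}$-inclusion.
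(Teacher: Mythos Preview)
Your proof follows essentially the same route as the paper's: verify exact unitality, approximate norm preservation via $\E^*\cG^*v = \E^*\cG^*\D^*$, and approximate multiplicativity by bounding $y = v(x^*x)-v(x)^*v(x)$ through $\E^*\cG^*(y)$ and the trace inequality $\tr(\cG\E(\mathds{1}_\F)y)\geq\lambda_{\min,1}\tr(y)$, then invoke Theorem~\ref{thm:approximate_multiplicative_domain}, Kitaev's error reduction, and Kuperberg's converse.

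The only imprecision is the ``absorb the negligible $\eta^2$'' step: with your estimate $(3\eta+\eta^2)$ the resulting multiplicativity constant is $\frac{2d(3\eta+\eta^2)}{\lambda_{\min,1}}$, which under the hypothesis $\eta\leq\frac{\delta_{\max}\lambda_{\min,1}}{6d}$ gives $\delta_{\max}+\frac{2d\eta^2}{\lambda_{\min,1}}>\delta_{\max}$, so the $\delta_{\max}$-inclusion is not literally established. The fix is immediate and is what the paper does: since $\Phi^*:=\E^*\cG^*\D^*$ is unital completely positive, $\|\Phi^*(x)\|\leq\|x\|$, and using this instead of $\|\Phi^*(x)\|\leq(1+\eta)\|x\|$ in the triangle-inequality decomposition yields the clean bound $\|\Phi^*(x^*x)-\Phi^*(x^*)\Phi^*(x)\|\leq 3\eta\|x\|^2$, hence $\|y\|\leq\frac{3d\eta}{\lambda_{\min,1}}\|x\|^2\leq\frac{\delta_{\max}}{2}\|x\|^2$ exactly.
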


We divide the proof that the map $\widetilde{\cG}^*\D^*$ restricted to $\Rg(\F^*)$ satisfies all the properties of a $\delta_{\max}$-inclusion into $e_{\cG\E}\Rg(\cG^*)e_{\cG\E}$ in multiple lemmas, each of which being devoted to prove one of these properties.

We begin with the one related to the almost preservation of the Schatten $\infty$-norm. We use the \emph{completely bounded operator norm} in its proof, which is defined for all linear maps $\F : \cL(\cH) \rightarrow \cL(\cH')$ as

\[\|\F\|_{\mathrm{cb}} = \sup_{n \in \mathbb{N}^*} \|\F\otimes \Id_n\|_{\infty \rightarrow \infty},\]
with $\|\cdot\|_{\infty \rightarrow \infty}$ the `infinity-to-infinity' norm, i.e. the map norm of linear maps from $\cL(\cH)$ to $\cL(\cH')$ when the input and output spaces are endowed with the Schatten $\infty$-norms. That is:
\[
\|\F\|_{\infty \to \infty} = \sup_{\|x\|\leq 1}\|\F(x)\|.
\]
Note that for any linear map $\F : \cL(\cH) \rightarrow \cL(\cH')$, we have
\[
\|\F\|_\diamond = \|\F^*\|_{\mathrm{cb}}.
\]

\begin{lemma}[Almost norm preservation]\label{lemma:norm_preservation}
Let $\F$, $\cG$, $\widetilde{\cG}$, $\E$ and $\D$ be defined as in the statement of Proposition~\ref{prop:one-shot_dimension_dependent_error} and let 
\[
\delta \geq \|\F - \D\cG\E\|_\diamond = \|\F^* - \E^*\cG^*\D^*\|_{\mathrm{cb}}.
\]
Then, for all $x \in \Rg(\F^*)$, 
\[
(1 - \delta)\|x\| \leq \|\widetilde{\cG}^*\D^*(x)\| \leq \|x\|.
\]
\end{lemma}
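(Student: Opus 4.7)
The plan is to treat the two inequalities separately and to re-use, as a black box, the exact identity $\E^*\cG^*\widetilde{\cG}^*\D^* = \E^*\cG^*\D^*$ established inside the proof of Proposition~\ref{prop:encoding_implies_morphism}. That identity combines the idempotence $\cG^*\cG^* = \cG^*$ with Lemma~\ref{lemma:image_of_identity}(i) applied to the channel $\Phi = \cG\E$, and it never uses the hypothesis $\F = \D\cG\E$. It is therefore still available in the present approximate setting.

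The upper bound $\|\widetilde{\cG}^*\D^*(x)\| \leq \|x\|$ is immediate: by definition $\widetilde{\cG}^*\D^*(x) = e_{\cG\E}\,\cG^*\D^*(x)\,e_{\cG\E}$, and conjugation by the orthogonal projector $e_{\cG\E}$ is an operator-norm contraction, while $\cG^*\D^*$ is unital completely positive (adjoint of a channel) and hence also contracts the operator norm.

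For the lower bound, since $x \in \Rg(\F^*)$ and $\F^*$ is idempotent, $\F^*(x) = x$, so the triangle inequality gives
\[
\|x\| = \|\F^*(x)\| \leq \|\F^*(x) - \E^*\cG^*\D^*(x)\| + \|\E^*\cG^*\D^*(x)\|.
\]
The first term is controlled by the hypothesis: for any $y$,
\[
\|\F^*(y) - \E^*\cG^*\D^*(y)\| \leq \|\F^* - \E^*\cG^*\D^*\|_{\infty \to \infty}\,\|y\| \leq \|\F^* - \E^*\cG^*\D^*\|_{\mathrm{cb}}\,\|y\| \leq \delta\,\|y\|,
\]
so applying this at $y=x$ bounds the first term by $\delta\|x\|$. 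For the second term, the exact identity above rewrites $\E^*\cG^*\D^*(x) = \E^*\cG^*\bigl(\widetilde{\cG}^*\D^*(x)\bigr)$, and since $\E^*\cG^*$ is unital completely positive and therefore a contraction in operator norm, $\|\E^*\cG^*\D^*(x)\| \leq \|\widetilde{\cG}^*\D^*(x)\|$. Rearranging yields $(1-\delta)\|x\| \leq \|\widetilde{\cG}^*\D^*(x)\|$, as claimed.

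I do not foresee a real obstacle; the only point requiring a touch of care is to remember that the diamond norm hypothesis controls the operator norm of individual images through the elementary chain $\|\cdot\|_{\infty\to\infty} \leq \|\cdot\|_{\mathrm{cb}}$ together with the duality $\|\Phi\|_\diamond = \|\Phi^*\|_{\mathrm{cb}}$ already invoked in the lemma statement.
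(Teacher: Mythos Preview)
Your proof is correct and follows essentially the same route as the paper's: both use that $\widetilde{\cG}^*\D^*$ is subunital completely positive for the upper bound, and for the lower bound both combine the fixed-point property $\F^*(x)=x$, the identity $\E^*\cG^*\D^* = \E^*\cG^*\widetilde{\cG}^*\D^*$ from Lemma~\ref{lemma:image_of_identity}(i), the contractivity of the unital completely positive map $\E^*\cG^*$, and the cb-norm bound $\|\F^* - \E^*\cG^*\D^*\|_{\mathrm{cb}} \leq \delta$. Your observation that this identity does not require the exact factorisation $\F = \D\cG\E$ is exactly the point the paper relies on when it invokes Lemma~\ref{lemma:image_of_identity} directly in the approximate setting.
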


Note that this lemma implies that as long as $\|\F - \D\cG\E\|_\diamond \leq \delta < 1$, $\widetilde{\cG}^*\D^*$ is injective as every element of its kernel $x \in \ker(\widetilde{\cG}^*\D^*)$ satisfies $(1-\delta)\|x\| \leq \|\widetilde{\cG}^*\D^*(x)\| = 0$, hence $\|x\| = 0$ so that $x = 0$.

\begin{proof}
The right-hand side inequality follows from the fact that $\widetilde{\cG}^*\D^* : \cL(\cH_\F) \rightarrow \cL(\cH_\cG)$ is a subunital completely positive map (when the identity considered in the output space of $\widetilde{\cG}^*\D^*$ is the identity of $\cL(\cH_\cG)$). For the left-hand side inequality, recall that, by Lemma~\ref{lemma:image_of_identity}, for all $x \in \cL(\cH_\F)$, 
\[\E^*\cG^*\D^*(x) = \E^*\cG^*\widetilde{\cG}^*\D^*(x).\]
Then, we have for all $x \in \Rg(\F^*)$:
\[
\begin{aligned}
\|x\| - \|\widetilde{\cG}^*\D^*(x)\| &\overset{(a)}{\leq} \|x\| - \|\E^*\cG^*\widetilde{\cG}^*\D^*(x)\| \\
&\overset{(b)}{\leq} \|x - \E^*\cG^*\D^*(x)\| \\
&\overset{(c)}{=} \|\F^*(x) - \E^*\cG^*\D^*(x)\| \\
&\leq \|\F^* - \E^*\cG^*\D^*\|_{\mathrm{cb}}\|x\| \\
&\leq \delta\|x\|,
\end{aligned}
\]
where $(a)$ follows from the fact that $\|\E^*\cG^*\widetilde{\cG}^*\D^*(x)\| \leq \|\widetilde{\cG}^*\D^*(x)\|$,
since $\E^*\cG^*$ is a unital completely positive map. Then, $(b)$ follows from the triangular inequality. Finally, $(c)$ follows from the fact that $\F^*(x) = x$ for all $x \in \Rg(\F^*)$.
The last inequality can be rewritten as 
\[(1-\delta)\|x\| \leq \|\widetilde{\cG}^*\D^*(x)\|.\]
\end{proof}

We then move on to the proof that $\widetilde{\cG}^*\D^* : \Rg(\F^*) \rightarrow e_{\cG\E}\Rg(\cG^*)e_{\cG\E}$ is  unital, where we recall that the unit of the $*$-algebra $e_{\cG\E}\Rg(\cG^*)e_{\cG\E}$ is $e_{\cG\E}$, as, for every element $x = e_{\cG\E}x'e_{\cG\E} \in e_{\cG\E}\Rg(\cG^*)e_{\cG\E}$, $e_{\cG\E}x = xe_{\cG\E} = x$. Also note that for a general $\delta$-inclusion, one only requires almost unitality.

\begin{lemma}[Exact unitality]\label{lemma:exact_unitality}
Let $\F$, $\cG$, $\widetilde{\cG}$, $\E$ and $\D$ be defined as in Proposition~\ref{prop:one-shot_dimension_dependent_error}. Then:
\[
\widetilde{\cG}^*\D^*(\mathds{1}_\F) = e_{\cG\E}.
\]
\end{lemma}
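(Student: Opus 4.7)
The plan is to unfold the definition of $\widetilde{\cG}^*\D^*(\mathds{1}_\F)$ and reduce it to a unitality computation. By definition, $\widetilde{\cG}^*\D^*(\mathds{1}_\F) = e_{\cG\E}\,\cG^*(\D^*(\mathds{1}_\F))\,e_{\cG\E}$, so I only need to show that the middle factor equals $\mathds{1}_\cG$ (the identity in $\cL(\cH_\cG)$), and then use idempotence of the projector $e_{\cG\E}$.

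First, since $\D$ is a quantum channel, its adjoint $\D^* : \cL(\cH_\F) \to \cL(\cH_\cG)$ is unital completely positive, so $\D^*(\mathds{1}_\F) = \mathds{1}_\cG$. Similarly, $\cG$ is a channel, so $\cG^*$ is unital and $\cG^*(\mathds{1}_\cG) = \mathds{1}_\cG$. Composing, $\cG^*\D^*(\mathds{1}_\F) = \mathds{1}_\cG$.

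Finally, I would recall that by definition $e_{\cG\E} = \cG\E(\mathds{1}_\F)^0$ is an orthogonal projector, hence $e_{\cG\E}^2 = e_{\cG\E}$. Putting everything together,
\[
\widetilde{\cG}^*\D^*(\mathds{1}_\F) \;=\; e_{\cG\E}\,\cG^*\D^*(\mathds{1}_\F)\,e_{\cG\E} \;=\; e_{\cG\E}\,\mathds{1}_\cG\,e_{\cG\E} \;=\; e_{\cG\E}^2 \;=\; e_{\cG\E}.
\]
There is no real obstacle here: the statement is a one-line unfolding, and the only ``subtlety'' is bookkeeping the two unitality facts ($\D^*(\mathds{1}_\F) = \mathds{1}_\cG$ and $\cG^*(\mathds{1}_\cG) = \mathds{1}_\cG$) together with the fact that $e_{\cG\E}$ is a projector.
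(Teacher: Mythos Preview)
Your proof is correct and follows essentially the same approach as the paper: both arguments unfold the definition of $\widetilde{\cG}^*$, use that $\cG^*$ and $\D^*$ are unital (since $\cG$ and $\D$ are channels) to get $\cG^*\D^*(\mathds{1}_\F)=\mathds{1}_\cG$, and then simplify $e_{\cG\E}\mathds{1}_\cG e_{\cG\E}=e_{\cG\E}$. The only cosmetic difference is that you make explicit the idempotence of the projector $e_{\cG\E}$, which the paper leaves tacit.
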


\begin{proof}
As $\cG$ and $\D$ are assumed to be channels, their adjoints are unital maps on $\cL(\cH_\cG)$, so that
\[
\cG^*\D^*(\mathds{1}_\F) = \mathds{1}_\cG.
\]
Therefore 
\[\widetilde{\cG}^*\D^*(\mathds{1}_\F) = e_{\cG\E}\cG^*(\D^*(\mathds{1}_\F))e_{\cG\E} = e_{\cG\E}\mathds{1}_\cG e_{\cG\E} = e_{\cG\E}.\]
\end{proof}

Finally, to prove the almost preservation of the multiplication on $\Rg(\F^*)$ by $\widetilde{\cG}^*\D^*$, we will need the following additional lemma.

\begin{lemma}\label{lemma:expansion}
Let $\Phi : \cL(\cH) \rightarrow \cL(\cH')$ be a channel, write $d = \dim(\cH)$, $e = \Phi(\mathds{1})^0$ and $\la_{\min,1}$ the minimal non-zero eigenvalue of $\Phi(\mathds{1})$. Then, for all positive semidefinite $x \in \cL(\cH')$,
\begin{equation}
\frac{\la_{\min,1}}{d}\|exe\| \leq \|\Phi^*(x)\|.
\end{equation}
\end{lemma}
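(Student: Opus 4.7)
The target inequality bounds $\|\Phi^*(x)\|$ from below in terms of the compressed operator $exe$, with the dimensional loss $1/d$ coming from the input space. My plan is to go through the trace functional: since $\Phi^*(x)\in\cL(\cH)$ is a positive semidefinite operator on a $d$-dimensional space (positivity follows from $\Phi^*$ being completely positive and $x\geq 0$), one has the elementary bound $\|\Phi^*(x)\|\geq \frac{1}{d}\tr(\Phi^*(x))$.

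Next I would rewrite $\tr(\Phi^*(x))=\tr(\Phi(\mathds{1})x)$ using the Hilbert-Schmidt duality, and exploit the fact that $e=\Phi(\mathds{1})^0$ is the spectral projector of $\Phi(\mathds{1})$ onto its support. By definition of $\la_{\min,1}$ as the smallest nonzero eigenvalue, this gives the operator inequality
\[
\Phi(\mathds{1})\,\geq\,\la_{\min,1}\,e.
\]
Pairing with the positive operator $x$ yields
\[
\tr(\Phi(\mathds{1})x)\,\geq\,\la_{\min,1}\tr(ex)\,=\,\la_{\min,1}\tr(exe),
\]
where in the last step I use $e^2=e$ and cyclicity of the trace: $\tr(ex)=\tr(e^2x)=\tr(exe)$.

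Finally, since $exe\geq 0$, its trace dominates its operator norm: $\tr(exe)\geq\|exe\|$ (sum of nonnegative eigenvalues is at least the largest). Chaining the three bounds gives
\[
\|\Phi^*(x)\|\,\geq\,\tfrac{1}{d}\tr(\Phi^*(x))\,\geq\,\tfrac{\la_{\min,1}}{d}\tr(exe)\,\geq\,\tfrac{\la_{\min,1}}{d}\|exe\|,
\]
which is what we wanted.

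There is no real obstacle here: the only point that deserves care is making sure the factor $\la_{\min,1}$ really comes out of the comparison $\Phi(\mathds{1})\geq\la_{\min,1}e$, which relies on $e$ being exactly the support projector of $\Phi(\mathds{1})$ rather than some larger projector, and the bookkeeping identity $\tr(ex)=\tr(exe)$ that turns the intermediate quantity into the desired compressed operator.
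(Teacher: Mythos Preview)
Your proof is correct and follows essentially the same approach as the paper's own proof: pass to the trace via $\|\Phi^*(x)\|\geq \tfrac{1}{d}\tr(\Phi^*(x))=\tfrac{1}{d}\tr(x\Phi(\mathds{1}))$, use $\Phi(\mathds{1})\geq \la_{\min,1}e$ to get $\tfrac{\la_{\min,1}}{d}\tr(exe)$, and then bound the trace of the positive operator $exe$ below by its operator norm. Your write-up is in fact slightly more explicit than the paper's in justifying the intermediate operator inequality and the identity $\tr(ex)=\tr(exe)$.
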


\begin{proof}
Let $x\in \cL(\cH')$ be positive semidefinite, then
\begin{align*}
\|\Phi^*(x)\| &\geq \frac{1}{d}\tr(\Phi^*(x)) \\
&= \frac{1}{d}\tr(x\Phi(\mathds{1})) \\
&\geq \frac{\la_{\min,1}}{d}\tr(exe) \\
&\geq \frac{\la_{\min,1}}{d}\|exe\|.
\end{align*}
\end{proof}

Note that this lemma implies the second statement of Lemma~\ref{lemma:image_of_identity}. We can now move on to the proof of the almost preservation of the multiplication, which uses our Theorem~\ref{thm:approximate_multiplicative_domain} on the approximate multiplicative domain of maps satisfying the Kadison-Schwarz inequality.

\begin{lemma}[Almost preservation of the multiplication]\label{lemma:approx_multiplicativity}
    Let $\F$, $\cG$, $\widetilde{\cG}$, $\E$ and $\D$ be defined as in the statement of Proposition~\ref{prop:one-shot_dimension_dependent_error}, let 
    \[\delta \geq \|\F - \D\cG\E\|_\diamond,\]
    let $\la_{\min,1}$ be the minimal non-zero eigenvalue of $\widetilde{\cG}^*\E^*(\mathds{1}_\F)$, $d = \dim(\cH_\F)$ and let $x, y \in \Rg(\F^*)$. We then have:
    \[
    \|\widetilde{\cG}^*\D^*(x y) - \widetilde{\cG}^*\D^*(x)\widetilde{\cG}^*\D^*(y)\| \leq \frac{6\delta d}{\la_{\min,1}}\|x\|\|y\|.
    \]
\end{lemma}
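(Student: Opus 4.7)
The plan is to reduce the statement to Theorem~\ref{thm:approximate_multiplicative_domain} by showing that every $x \in \Rg(\F^*)$ lies in the approximate multiplicative domain $\mathcal{C}_{\delta'}$ of $\Phi^* := \widetilde{\cG}^*\D^*$, with $\delta' = 3\delta d/\la_{\min,1}$. Note that $\Phi^*$ is completely positive and, by Lemma~\ref{lemma:exact_unitality}, satisfies $\Phi^*(\mathds{1}_\F) = e_{\cG\E}\leq \mathds{1}_\cG$, hence is subunital and obeys the Kadison--Schwarz inequality. Moreover, since $\cG^*$ is idempotent and Lemma~\ref{lemma:image_of_identity}(i) gives $\E^*\cG^*(y) = \E^*\cG^*(e_{\cG\E}ye_{\cG\E})$, one has $\E^*\cG^*\Phi^* = \E^*\cG^*\D^*$. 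This identity, together with $\|\F^*-\E^*\cG^*\D^*\|_{\mathrm{cb}} \leq \delta$, will be the bridge to the error parameter.

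Fix $x \in \Rg(\F^*)$ and consider the two defects
\begin{align*}
A &:= \E^*\cG^*\bigl(\Phi^*(x^*x) - \Phi^*(x^*)\Phi^*(x)\bigr), \\
B &:= \E^*\cG^*\bigl(\Phi^*(x^*)\Phi^*(x)\bigr) - \E^*\cG^*\Phi^*(x^*)\cdot\E^*\cG^*\Phi^*(x).
\end{align*}
Kadison--Schwarz for $\Phi^*$ gives $A\geq 0$, and Kadison--Schwarz for the unital CP map $\E^*\cG^*$ gives $B\geq 0$, hence $0 \leq A \leq A+B$ and $\|A\| \leq \|A+B\|$. Using $\E^*\cG^*\Phi^* = \E^*\cG^*\D^*$ and the fixed-point identities $\F^*(x^*x)=x^*x=\F^*(x^*)\F^*(x)$, the triangle inequality yields
\[
\|A+B\| \leq \|\E^*\cG^*\D^*(x^*x)-\F^*(x^*x)\| + \|\F^*(x^*)\F^*(x)-\E^*\cG^*\D^*(x^*)\E^*\cG^*\D^*(x)\|,
\]
and bounding the first term by $\delta\|x\|^2$ via $\|\F^*-\E^*\cG^*\D^*\|_{\mathrm{cb}}\leq \delta$ and the second by $2\delta\|x\|^2$ (using again the cb bound together with $\|\E^*\cG^*\D^*\|_{\mathrm{cb}}\leq 1$) gives $\|A\| \leq 3\delta\|x\|^2$.

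It remains to lift this bound on $A$ to a bound on $Y:=\Phi^*(x^*x)-\Phi^*(x^*)\Phi^*(x)$. Since $\Phi^*$ lands in $e_{\cG\E}\cL(\cH_\cG)e_{\cG\E}$, one has $e_{\cG\E}Ye_{\cG\E} = Y$, and Lemma~\ref{lemma:expansion} applied to the channel $\Phi = \cG\E$ together with $A = \E^*\cG^*(Y) = (\cG\E)^*(Y)$ yields
\[
\|Y\| = \|e_{\cG\E}Ye_{\cG\E}\| \leq \frac{d}{\la_{\min,1}}\|A\| \leq \frac{3\delta d}{\la_{\min,1}}\|x\|^2.
\]
This places $x$ in $\mathcal{C}^R_{\delta'}$ with $\delta' = 3\delta d/\la_{\min,1}$; applying the same chain of inequalities to $xx^*$ in place of $x^*x$ places $x$ in $\mathcal{C}^L_{\delta'}$ as well. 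Theorem~\ref{thm:approximate_multiplicative_domain} then delivers $\|\Phi^*(xy)-\Phi^*(x)\Phi^*(y)\| \leq 2\delta'\|x\|\|y\| = \frac{6\delta d}{\la_{\min,1}}\|x\|\|y\|$ for all $x,y \in \Rg(\F^*)$, as claimed. The main subtlety is organising the two positive defects $A,B$ so that a single operator-norm bound on their sum controls each of them separately; the conversion $\|A\| \mapsto \|Y\|$ via Lemma~\ref{lemma:expansion} is what introduces the dimension-dependent factor $d/\la_{\min,1}$ that ultimately limits the dimension-dependent error tolerated in Theorem~\ref{thm:dimension_dependent_error}.
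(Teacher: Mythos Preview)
Your proof is correct and follows essentially the same approach as the paper's own argument. The only differences are cosmetic: you name $\widetilde{\cG}^*\D^*$ as $\Phi^*$ whereas the paper reserves that symbol for $\E^*\cG^*\D^*$, and you organise the key norm estimate by explicitly introducing the two positive defects $A$ and $B$ and using $0\le A\le A+B$, whereas the paper writes the same chain of Kadison--Schwarz inequalities inline; in both cases one bounds $\|\E^*\cG^*\D^*(x^*x)-\E^*\cG^*\D^*(x^*)\E^*\cG^*\D^*(x)\|\le 3\delta\|x\|^2$, applies Lemma~\ref{lemma:expansion} to pass to $\widetilde{\cG}^*\D^*$, and concludes via Theorem~\ref{thm:approximate_multiplicative_domain}.
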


\begin{proof}
    We prove that, for all $x \in \Rg(\F^*)$,
    \[
    \begin{aligned}
    \|\widetilde{\cG}^*\D^*(x^* x) - \widetilde{\cG}^*\D^*(x^*)\widetilde{\cG}^*\D^*(x)\| &\leq \frac{3 \delta d}{\la_{\min,1}}\|x\|^2,\\
    \|\widetilde{\cG}^*\D^*(x x^*) - \widetilde{\cG}^*\D^*(x)\widetilde{\cG}^*\D^*(x^*)\| &\leq \frac{3 \delta d}{\la_{\min,1}}\|x\|^2,
    \end{aligned}
    \]
    and then use Theorem~\ref{thm:approximate_multiplicative_domain} to conclude. 
    First, to simplify our notations, we let $\Phi^* = \E^*\cG^*\D^* = \E^*\cG^*\widetilde{\cG}^*\D^*$. We have:
    \[
    \begin{aligned}
    \|\Phi^*(x^* x) - \Phi^*(x^*)\Phi^*(x)\| &= \|\Phi^*(x^* x) - \Phi^*(x^*)\Phi^*(x) + x^* x - x^*x\| \\
    &\leq \|\Phi^*(x^* x) - x^* x\| + \|x^* x - \Phi^*(x^*)\Phi^*(x) + x^*\Phi^*(x) - x^*\Phi^*(x)\| \\
    &\overset{(a)}{\leq} \delta\|x\|^2 + \|x\| \|x - \Phi^*(x)\| + \|x^* - \Phi^*(x^*)\|\|\Phi^*(x)\| \\
    &\overset{(b)}{\leq} 3\delta\|x\|^2,
    \end{aligned}
    \]
    where $(a)$ follows from the fact that as $x \in \Rg(\F^*)$, we have
    \[
    \|\Phi(x^*x) - x^*x\| = \|\Phi^*(x^*x) - \F^*(x^*x)\| \leq \|\Phi^* - \F^*\|_{\mathrm{cb}} \|x^*x\| \leq \delta \|x\|^2.
    \]
    By the same argument, we also have that $\|x - \Phi^*(x)\| \leq \delta \|x\|$ and $\|x^* - \Phi^*(x^*)\| \leq \delta \|x\|$, hence $(b)$ follows.
    
    Then, applying the Kadison-Schwarz inequality, we get:
    \[
    \E^*\cG^*\widetilde{\cG}^*\D^*(x^* x) \geq \E^*\cG^*(\widetilde{\cG}^*\D^*(x^*)\widetilde{\cG}^*\D^*(x)) 
    \geq \E^*\cG^*\widetilde{\cG}^*\D^*(x^*)\E^*\cG^*\widetilde{\cG}^*\D^*(x).
    \]
    Therefore:
    \[
    \|\E^*\cG^*(\widetilde{\cG}^*\D^*(x^* x) - \widetilde{\cG}^*\D^*(x^*)\widetilde{\cG}^*\D^*(x))\| 
    \leq  \|\Phi^*(x^* x) - \Phi^*(x^*)\Phi^*(x)\| 
    \leq 3\delta\|x\|^2.
    \]
    Thus, by Lemma~\ref{lemma:expansion}, we obtain
    \[
    \|\widetilde{\cG}^*\D^*(x^* x) - \widetilde{\cG}^*\D^*(x^*)\widetilde{\cG}^*\D^*(x)\|
    \leq \frac{3d\delta}{\la_{\min,1}}\|x\|^2.
    \]
    We can apply the exact same line of reasoning to prove the same inequality but with the roles of $x$ and $x^*$ swapped.
\end{proof}

Gathering all these lemmas, the proof of Proposition~\ref{prop:one-shot_dimension_dependent_error} is then as follows.

\begin{proof}[Proof of Proposition~\ref{prop:one-shot_dimension_dependent_error}]
    Let $\delta \geq \|\F - \D\cG\E\|_\diamond$. Since we assumed that $\F$ and $\cG$ are equal to their reduced channels, i.e. $\F = \widehat{\F}$ and $\cG = \widehat{\cG}$, $\Rg(\F^*)$ is a $*$-subalgebra of $\cL(\cH_\F)$ and $e_{\cG\E}\Rg(\cG^*)e_{\cG\E}$ is a $*$-subalgebra of $\cL(\cH_\cG)$.
    By Lemmas~\ref{lemma:norm_preservation},~\ref{lemma:exact_unitality} and~\ref{lemma:approx_multiplicativity}, $\widetilde{\cG}^*\D^*$ is a $\max\{\delta, \frac{6d\delta}{\la_{\min,1}}\}$-inclusion (we trivially have that $\widetilde{\cG}^*\D^*$ is Hermitian preserving). However, as 
    \[d = \tr(\mathds{1}_\F) = \tr(\cG\E(\mathds{1}_\F)) \geq \la_{\min,1},\]
$\frac{d}{\la_{\min,1}} \geq 1$ so that $\delta \leq \frac{6d}{\la_{\min,1}}\delta$. Finally, $\widetilde{\cG}^*\D^*$ is a $\frac{6d\delta}{\la_{\min,1}}$-inclusion of $\Rg(\F^*)$ into $e_{\cG\E}\Rg(\cG^*)e_{\cG\E}$.

Then, if we suppose that 
\[\frac{6d\delta}{\la_{\min,1}} \leq \delta_{\max},
\]
i.e.
\[
\|\F - \D\cG\E\|_\diamond \leq \delta \leq \frac{\delta_{\max}\la_{\min,1}}{6d},
\]
by the Error Reduction Theorem (Theorem~\ref{thm:kitaev}), there is an injective $*$-homomorphism from $\Rg(\F^*)$ into $e_{\cG\E}\Rg(\cG^*)e_{\cG\E}$, thus by Kuperberg's theorem (Theorem~\ref{thm:converse_kuperberg}) and the second part of Proposition~\ref{prop:encoding_implies_morphism}, for all $p \in [1,+\infty]$,
\[
\|\la(\F)\|_p \leq \|\la(e_{\cG\E}\Rg(\cG^*)e_{\cG\E})\|_p \leq \|\la(\cG)\|_p.
\]
\end{proof}

We can finally show Theorem~\ref{thm:dimension_dependent_error}.

\begin{proof}[Proof of Theorem~\ref{thm:dimension_dependent_error}]
Basically, we simply apply Proposition~\ref{prop:one-shot_dimension_dependent_error} to ${\widehat{\F}^{\otimes k} : \cL(\cH_\F^{\otimes k}) \rightarrow \cL(\cH_\F^{\otimes k})}$ and $\widehat{\cG}^{\otimes n} : \cL(\cH_\cG^{\otimes n}) \rightarrow \cL(\cH_\cG^{\otimes n})$, with $\widehat{\F}$ and $\widehat{\cG}$ the reduced channels respectively associated to $\F$ and $\cG$. Just note that the dimension of $\cH_\F^{\otimes k}$ is $d^k$, with $d = \dim(\cH_\F)$. Thus, by Proposition~\ref{prop:one-shot_dimension_dependent_error}, if the hypotheses of Theorem~\ref{thm:dimension_dependent_error} are met, for all $p \in [1,+\infty]$, we have:
\[
\|\la(\F^{\otimes k})\|_p = \|\la(\widehat{\F}^{\otimes k})\|_p\leq \|\la(\widehat{\cG}^{\otimes n})\|_p = \|\la(\cG^{\otimes n})\|_p.
\]
This implies, by the compatibility of the $\ell_p$-norms of the shape vector with the tensor product, that
\[
\frac{k}{n} \leq \inf_{p \in [1,+\infty]}\frac{\log(\|\la(\cG)\|_p)}{\log(\|\la(\F)\|_p)} = C(\cG \mapsto \F),
\]
where the right-hand side equality is the first statement in Theorem~\ref{thm:zero_error_case}.
    
\end{proof}

\end{document}